\newtheorem{theorem}{Theorem}
\newtheorem{corollary}[theorem]{Corollary}
\newtheorem{lemma}[theorem]{Lemma}
\newtheorem{proposition}[theorem]{Proposition}
\newtheorem{definition}[theorem]{Definition}
\newcommand{\myleft}{\mathopen{}\mathclose\bgroup\left}
\newcommand{\myright}{\aftergroup\egroup\right}
\newcommand{\e}{\ensuremath\mathrm{e}}
\newcommand{\rmd}{\ensuremath\mathrm{d}}
\DeclareMathOperator{\Tr}{Tr}
\DeclareMathOperator{\Id}{Id}
\DeclareMathOperator*{\argmin}{arg\,min}
\newcommand{\fro}{\mathrm{F}}
\DeclareMathOperator{\Comm}{Comm}
\DeclareMathOperator{\End}{L}
\DeclareMathOperator{\Lin}{L}
\DeclareMathOperator{\Pos}{Pos}
\DeclareMathOperator{\CP}{CP}
\DeclareMathOperator{\U}{U}
\DeclareMathOperator{\Cl}{Cl}
\renewcommand{\fro}{2}
\newcommand{\utpn}{\text{\rm u,tp,0}}
\DeclareMathOperator\tr{Tr}
\DeclareMathOperator\Cliff{Cl}
\DeclareMathOperator\GL{GL}
\newcommand{\CC}{\mathbb{C}}
\newcommand{\RR}{\mathbb{R}}
\newcommand{\ZZ}{\mathbb{Z}}
\newcommand{\NN}{\mathbb{N}}
\newcommand{\FF}{\mathbb{F}}
\newcommand{\1}{\mathds{1}}
\newcommand{\EE}{\mathbb{E}}
\newcommand{\PP}{\mathbb{P}}
\newcommand{\mc}[1]{\mathcal{#1}}
\newcommand{\landauO}{\mc{O}}
\newcommand{\DM}{\mc{S}}
\newcommand{\meas}{\mc{M}}
\newcommand{\argdot}{{\,\cdot\,}}
\newcommand{\ad}{^\dagger}
\newcommand{\norm}[1]{\left\Vert #1 \right\Vert} 
\newcommand{\normB}[1]{\Bigl\Vert #1 \Bigr\Vert} 
\newcommand{\snorm}[1]{\norm{#1}_\infty} 
\newcommand{\tnorm}[1]{\norm{#1}_{1}} 
\newcommand{\TrNorm}[1]{\norm{#1}_{1}} 
\newcommand{\fnorm}[1]{\norm{#1}_\fro} 
\newcommand{\fnormB}[1]{\normB{#1}_\fro}
\newcommand{\lTwoNorm}[1]{\norm{#1}_{\ell_2}} 
\newcommand{\ket}[1]{\left.\left|{#1}\right.\right\rangle}
\newcommand{\bra}[1]{\left.\left\langle{#1}\right.\right|}
\newcommand{\ketbra}[2]{\ket{#1} \!\! \bra{#2}}
\newcommand{\sandwich}[3]
  {\left\langle  #1 \right| #2 \left| #3 \right\rangle}
\renewcommand{\Pr}{\operatorname{\PP}} 
\newcommand{\eps}{\mathrm{eps}}
\newcommand{\utp}{\text{\rm u,tp}}
\newcommand{\Favg}{F_\text{avg}}
\newcommand{\AGF}{AGF}
\newcommand{\AGFs}{AGFs}
\newcommand{\recerror}{\ensuremath\varepsilon_{\text{\rm rec}}}
\newcommand{\recerrorTr}{\ensuremath\varepsilon}
\newcommand{\recerrorF}{\ensuremath\varepsilon_{\text{\rm F}}}
\newcommand{\fu}{Dahlem Center for Complex Quantum Systems, Freie Universit\"{a}t Berlin, Germany}
\newcommand{\ug}{Institute of Theoretical Physics and Astrophysics, National Quantum Information Centre, University of Gda\'{n}sk, Poland}
\newcommand{\hhu}{
	Institute for Theoretical Physics,
	Heinrich Heine University D{\"u}sseldorf, 
	Germany
}
\newcommand{\ca}{Institute for Quantum Information and Matter, California Institute of Technology, Pasadena, USA}
\newcommand{\ma}{National Institute of Standards and Technology, Gaithersburg, USA}
\newcommand{\maa}{Joint Center for Quantum Information and Computer Science (QuICS), University of Maryland, College Park, USA}
\newcommand{\mb}{Department of Computer Science, Middlebury College, USA}
\newcommand{\uc}{Institute for Theoretical Physics, University of Cologne, Germany}
\newcommand{\detailsandproof}{Section~\ref{sec:details}}
\newcommand{\refdetailsandproof}{Section}
\newcommand{\suppfootnote}{}
\begin{document}

\title{Recovering quantum gates from few average gate fidelities}

\author{I.\ Roth}
\email[Corresponding author: ]{i.roth@fu-berlin.de}
\affiliation{\fu}

\author{R.\ Kueng}
\affiliation{\ca}

\author{S.\ Kimmel}
\affiliation{\mb}

\author{Y.-K.\ Liu}
\affiliation{\ma}
\affiliation{\maa}

\author{D.\ Gross}
\affiliation{\uc}

\author{J.\ Eisert}
\affiliation{\fu}

\author{M.\ Kliesch}
\affiliation{\ug}
\affiliation{\hhu}

\begin{abstract}
	Characterising quantum processes is a key task in the development of quantum technologies, especially at the noisy intermediate scale of today's devices. One method for characterising processes is randomised benchmarking, which is robust against state preparation and measurement (SPAM) errors, and
	can be used to benchmark Clifford gates. A complementing approach asks for full tomographic knowledge.
	Compressed sensing techniques achieve full tomography of quantum channels essentially at 
	optimal resource efficiency.
	So far, guarantees for compressed sensing protocols rely on unstructured random measurements and can not be applied to the data acquired from randomised benchmarking experiments. It has been an open question whether or not the favourable features of both worlds can be combined. 
	In this work, we give a positive answer to this question.
	For the important case of characterising multi-qubit 
	unitary gates, we provide a rigorously guaranteed and practical reconstruction method that works with an essentially optimal number of average gate fidelities measured respect to random Clifford unitaries.
	Moreover, for general unital quantum channels we provide an explicit expansion into a unitary 2-design,  
	allowing for a practical and guaranteed reconstruction also in that case. 
	As a side result, we obtain a new statistical interpretation of the unitarity -- a figure of merit that characterises the coherence of a 
	process.
	In our proofs we exploit recent representation theoretic insights on the Clifford group, 
	develop a version of Collins' calculus with Weingarten functions for integration over the Clifford group, 
	and combine this with proof techniques from compressed sensing.
\end{abstract}


\maketitle

\small
\tableofcontents
\normalsize
\break


\section{Introduction}
\vspace{-.2em}
As increasingly large and complex quantum devices are being built and the development of fault tolerant quantum computation is moving forward, it is critical to develop tools to refine our control of these devices. 
For this purpose, several improved methods for characterizing quantum processes have been developed in recent years. 

These improvements can be grouped into two broad categories. 
The first category includes techniques such as \emph{randomised benchmarking (RB)}
\cite{KnillBenchmarking,MagGamEmer,MagGamEmer2,WalGraHar15,Wal17,Helsen17,WalFla14}
and \emph{gate set tomography (GST)} \cite{blume2013robust}, 
which are more robust to state preparation and measurement (SPAM) errors. 
These techniques work by performing long sequences of random quantum operations, 
measuring their outcomes, and checking whether the resulting statistics are consistent 
with some physically-plausible model of the system. 
In this way, one can characterise a quantum gate in terms of other quantum gates, in 
a way that is insensitive to SPAM errors.

The second category \cite{FlaGroLiu12,BalKalDeu14,KliKueEis15,KliKueEis17,HolBauCra15} 
provides more detailed tomographic information. It
includes techniques such as \emph{compressed sensing}
\cite{GroLiuFla10,Gro11,Liu11,ShaKosMoh11,KalevKosutDeutsch15,Kue15,KabKueRau15}, 
\emph{matrix product state tomography} \cite{CraPleFla10,Lanyonetal17}, 
and learning of local Hamiltonians and tensor network states \cite{SilLanPou11,LanPou12}. 
These methods exploit the sparse, low-rank or low entanglement 
structure that is present in many of the physical states and processes that occur in nature. 
These techniques are less resource-intensive than conventional tomography, and therefore 
can be applied to larger numbers of qubits. 
Convex optimization techniques, such as semidefinite programming, are then used to reconstruct the underlying quantum state or process. 

A recent line of work \cite{KimSilRya14,KimLiu16}
has attempted to unify these two approaches to a quantum process tomography scheme, that is both robust to SPAM errors, and can handle large numbers of qubits 
(provided the quantum process has some suitable structure). 
To achieve this goal, it turns out that the proper design of the measurements is crucial. 
SPAM-robust methods such as randomised benchmarking are known to require some kind of 
computationally-tractable group structure, such as that found in the Clifford group.
Clifford gates are motivated by their abundant appearance in many practical applications, such as fault-tolerant quantum computing \cite{NieChu10,MagGamEmer2}.

In contrast, compressed sensing methods typically require measurements with less structure in this context, in that their $4$th-order moments are close to those of the uniform Haar measure. 
Thus, the key technical question is whether the seemingly conflicting requirements of sufficient randomness and desired structure in the measurements can be combined.

In this work, we show that the answer is indeed yes. 
In layman's terms, we demonstrate that Clifford-group based measurements 
are also sufficiently unstructured that they can be used for compressed sensing. 
Thus, we develop methods for quantum process tomography that are resource efficient, 
robust with respect to SPAM and other errors, 
and use measurements that are already routinely acquired in many experiments. 

In more detail, we provide procedures for the reconstruction from so-called \emph{average gate fidelities} (\AGFs), which are the quantities that are measured in randomised benchmarking. 
It was established that the unital part of general quantum channels can be reconstructed from \AGFs\ relative to a maximal linearly independent subset of Clifford group operations \cite{KimSilRya14}. 
We generalise this result by noting that the Clifford group can be replaced by an arbitrary unitary $2$-design and also explicitly provide an analytic form of the reconstruction. 

Our main result is a practical reconstruction procedure for quantum channels that are close to being unitary. 
Let $d$ be the Hilbert space dimension, so that a unitary quantum channel can be described by roughly $d^2$ scalar parameters.
The protocol is rigorously guaranteed to succeed using essentially order of $d^2$
 \AGFs\ with respect to randomly drawn Clifford gates, and we also prove it to be stable against errors in the \AGF\ estimates. 
In this way we generalise a previous recovery guarantee \cite{KimLiu16} from \AGFs\ with $4$-designs to ones with the more relevant Clifford gates. 

Conversely, we prove that the sample complexity of our reconstruction procedure is optimal in a simplified measurement setting. 
Here, we assume that independent copies of the channel's Choi state are measured and use direct fidelity estimation \cite{FlaLiu11,SilLanPou11} and information theoretic arguments \cite{FlaGroLiu12} to show that the dimensional scaling of our reconstruction error is optimal up to log-factors. 
As a side result, we also find a new interpretation of the \emph{unitarity} \cite{WalGraHar15} -- a figure of merit that captures the coherence of noise. 
We show that this quantity can be estimated directly from \AGFs, rather than simulating purity measurements \cite{WalGraHar15}.

In summary, we provide a protocol for quantum process tomography that fulfils all of the following desiderata: 
\vspace{.5em}
\begin{enumerate}[label={(\roman*)}]
\item It should be based on physically reasonable and feasible measurements,
\item make use of them in a sample optimal fashion, 
\item exploit structure of the expected/targeted channel (here low Kraus rank reflecting
quantum gates), and
\item be stable against SPAM and other possible errors. 
\end{enumerate}
In this sense, we expect our scheme to be of high importance and practically useful 
in actual experimental settings in future quantum technologies \cite{Roadmap}.
It adds to the information obtained from mere randomised benchmarking in 
that it provides actionable advice, especially regarding coherent errors.
Such advice is particularly relevant for fault tolerant quantum computation: Refs.~\cite{KueLonFla15,Wal15} indicate that it is coherent errors that lead to an enormous mismatch between average errors, which are estimated by randomised benchmarking, and worst-case errors, reflected by fault tolerance thresholds.

Our main technical contributions are results for the second and fourth moments of \AGF\ measurements with random Clifford gates. 
For the second moment we provide an explicit formula improving over the previous lower bound \cite{KimLiu16}. 
In the case of trace-preserving and unital maps, our analysis gives rise to a tight frame condition. 
In order to prove a bound on the fourth moment, we derive -- as a more universal new technical tool -- a general integration formula for the fourth-order diagonal tensor representation of the Clifford group. The proof builds on recent results on the representation theory of the multi-qubit Clifford group \cite{ZhuKueGra16,HelWalWeh16,gross2017schur}. 
Our result is the Clifford analogue to 
Collins' integration formula for the unitary group \cite{Col03,CollinsSniady} for fourth orders, which we expect to also be useful in other applications.
In the following, we present the precise formulation of our results. The proofs and technical contributions are given in \detailsandproof\suppfootnote.  

\section{Main results}\label{sec:mainresults}
A linear map from the set of Hermitian operators on a $d$-dimensional Hilbert space to itself is referred to as \emph{map}. 
A quantum channel is a completely positive map that in addition preserves the trace of a Hermitian operator and, thus, maps quantum states to quantum states. 
A map is unital if the identity operator (equivalently, the maximally mixed state) is a fixed point of the map. 
We define the \emph{average gate fidelity} (\AGF) between a map $\mc X$ and a \emph{quantum gate} 
(i.e.\ a unitary quantum channel) 
$\mc U: \rho \mapsto U\rho\,  U\ad$ associated with a unitary matrix $U \in U(d)$ as
\begin{equation}
	\Favg(\mc U, \mc X) = \int \mathrm d \psi \sandwich{\psi}{U\ad\mc X(\ketbra{\psi}{\psi})U}{\psi},
\label{eq:average_fidelity}
\end{equation}
where the integral is taken according to the uniform (Haar) measure on state vectors. 

The Clifford group constitutes a particularly important family of unitary gates that feature prominently in state-of-the-art quantum architectures. 
Moreover, it was shown that for many-qubit systems (i.e.\ $d=2^n$), any unital and trace-preserving map is fully characterised by its \AGFs\ \eqref{eq:average_fidelity} with respect to the Clifford group \cite{KimSilRya14}. A detailed analysis of the geometry of unital channels was previously given in Ref.\ \cite{MenWol09}. 
There, it was shown that a quantum channel is unital if and only if it can be written as an affine combination of unitary gates. 
(\emph{Affine} here means that the expansion coefficients sum to $1$. 
Unlike \emph{convex} combinations, they are however not restrict to being non-negative.)
Motivated by the result for Clifford gates, one can ask more generally: 
What are the sets of unitary gates that span the set of unital and trace-preserving maps? 

A general answer to this question can be given using the notion of unitary $t$-designs. 
Unitary $t$-designs \cite{DanCleEme09,GrossAudenaertEisert:2007} (and their state-cousins, spherical $t$-designs \cite{DelGoeSei77,RenBluSco04}, respectively) are discrete subsets of the unitary group $\U(d)$ (resp. complex unit sphere) that are evenly distributed in the sense that their average reproduces the Haar (resp. uniform) measure over the full unitary group (resp. complex unit sphere) up to the $t$-th moment. 
The multi-qubit Clifford group, for example, forms a \emph{unitary $3$-design} \cite{Zhu15,Web15,KueGro15}. 
For spherical designs, a close connection between informational completeness for quantum state estimation and the notion of a $2$-design has been established in Ref.~\cite{RenBluSco04}, see also 
Refs.\ \cite{Sco06,App05,GroKraKue15_partial}. 
A similar result holds for quantum process estimation, and is the starting point of our work.
Indeed, the following is essentially due to Ref.~\cite{ScottDesigns}. We give a concise proof in form of the slightly more general Theorem \ref{thm:affcomb_conv} in \refdetailsandproof~\ref{sec:tight_frame}.

\begin{proposition}[{Informational completeness and unitary designs}]\label{prop:affcomb}
	Let $\{ \mc U_k\}_{k=1}^N$ be the gate set of a unitary $2$-design, represented as channels. 
	Every unital and trace-preserving map $\mc X$ can be 
	written 
	as an affine 
	combination 
	$\mc X = \frac1N \sum_{k=1}^N  c_k(\mc X) \mc U_k$
	of the $\mc U_k$'s. 
	The coefficients are given by $c_k(\mc X) = C \Favg(\mc U_k, \mc X) - \frac Cd + 1$, where $C = d(d+1)(d^2 -1)$. 
\end{proposition}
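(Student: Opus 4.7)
The plan is to reduce the asserted identity to a single Haar integral over $\U(d)$ and evaluate it via Schur's lemma. Using the standard entanglement-fidelity identity
\[
	\Favg(\mc U, \mc X) = \frac{d\,F_e(\mc U, \mc X) + 1}{d+1}, \qquad F_e(\mc U, \mc X) = \tfrac{1}{d^2}\tr\bigl(\hat{\mc U}^\dagger \hat{\mc X}\bigr),
\]
where $\hat{\mc X}$ denotes the Liouville (natural) representation of $\mc X$, the integrand $\Favg(\mc U_k,\mc X)\,\mc U_k$ is a polynomial of degree two in each of $U$ and $\bar U$. The defining $2$-design property of $\{\mc U_k\}$ therefore allows one to replace $\tfrac{1}{N}\sum_k \Favg(\mc U_k,\mc X)\mc U_k$ by the Haar integral $\int \Favg(\mc U,\mc X)\mc U\,dU$, so that everything reduces to computing this single group integral.

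To evaluate it, I would decompose the natural representation into its isotypic components: $\hat{\mc U} = \hat{\mc D} + R_U$, where $\hat{\mc D} = d^{-1}\ketbra{\Omega_0}{\Omega_0}$ is the Liouville form of the depolarising channel (spanning the trivial subrepresentation) and $R_U$ is the irreducible $(d^2-1)$-dimensional adjoint representation acting on traceless operators. A map $\mc X$ is unital and trace-preserving precisely when its Liouville matrix has the form $\hat{\mc X} = \hat{\mc D} + R_{\mc X}$ for some linear operator $R_{\mc X}$ on the adjoint subspace. Schur's lemma and orthogonality of matrix coefficients then supply the two moments
\[
	\int R_U\,dU = 0, \qquad \int (R_U)_{\alpha\beta}\,(R_U)^*_{\gamma\delta}\,dU = \tfrac{1}{d^2-1}\delta_{\alpha\gamma}\delta_{\beta\delta}.
\]
The first kills all cross terms arising in $\hat{\mc U}^\dagger \hat{\mc X}$; the second extracts $R_{\mc X}$ with weight $1/(d^2-1)$. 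Together with $\int \mc U\,dU = \mc D$ (a $1$-design identity), a short computation gives
\[
	\tfrac{1}{N}\sum_{k=1}^N \Favg(\mc U_k,\mc X)\,\mc U_k = \tfrac{1}{C}\mc X + \tfrac{C/d - 1}{C}\mc D,
\]
with $C = d(d+1)(d^2-1)$.

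Solving for $\mc X$ and substituting $\mc D = \tfrac{1}{N}\sum_k \mc U_k$ yields exactly the claimed coefficients $c_k(\mc X) = C\,\Favg(\mc U_k,\mc X) - C/d + 1$. A final check that this is genuinely an affine combination reduces to $\int \Favg(\mc U,\mc X)\,dU = 1/d$ for every TP map, itself an immediate consequence of $\int U\rho U^\dagger\,dU = I/d$ applied inside definition~\eqref{eq:average_fidelity}. The main obstacle is the representation-theoretic bookkeeping: aligning the trivial/adjoint split of the natural representation with the unital/TP structure entering the statement, and correctly tracking which cross terms survive integration. Once this identification is in place, no tool beyond Schur's lemma and matrix-coefficient orthogonality is required --- in particular, no Weingarten calculus --- because the integrand has total degree $(2,2)$ and the adjoint representation of $\U(d)$ is irreducible, so the relevant commutants are one-dimensional.
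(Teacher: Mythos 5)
Your proposal is correct and follows essentially the same route as the paper: the paper's proof (Theorems~\ref{thm:utpprojection} and~\ref{thm:affcomb_conv}) likewise splits the natural representation into the trivial block on $\{z\Id\}$ and the irreducible $(d^2-1)$-dimensional block on traceless Hermitian matrices, applies Schur orthogonality there (packaged as Lemma~\ref{lem:irrep frame}, ``irreps form a Parseval frame''), and uses the degree-$(2,2)$ structure of the integrand to pass from the Haar integral to the $2$-design average. Your affineness check via $\int \Favg(\mc U,\mc X)\,dU = 1/d$ matches the paper's observation that $\frac{1}{N}\sum_k c_k(\mc X) = \Tr[\mc X(\Id/d)] = 1$.
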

Hence, every unital and trace-preserving map is uniquely determined by the \AGFs\ with respect to an arbitrary unitary $2$-design.

  Clifford gates are a particularly prominent gate set with this 2-design feature. However, its cardinality scales super-polynomially in the dimension $d$. For explicit characterisations, this is far from optimal. 
However, in certain dimensions there exist subgroups of the Clifford group with cardinality proportional to $d^4$ that also form a 2-design \cite{Chau:2005, GrossAudenaertEisert:2007}. More generally, order of $d^4\log(d)$ Clifford gates drawn i.i.d.\ uniformly at random are an approximate $2$-design \cite{AmbainisEtAl:2009}. From Proposition~\ref{prop:affcomb}, we expect that  such randomly generated approximate $2$-designs yield approximate reconstruction schemes for unital channels. 

Our main result focuses on the particular task of reconstructing multi-qubit unital channels that are close to being unitary, i.e.\ well-approximated by a channel of Kraus rank equal to one.
Techniques from low-rank matrix reconstruction 
\cite{FazHinBoy01,GroLiuFla10,Gro11,FlaGroLiu12,KabKueRau15,BalKalDeu14}
allow for exploiting this additional piece of information in order to reduce the number of \AGFs\ required to uniquely reconstruct an unknown unitary gate. 

Suppose we are given a list  of  $m$ \AGFs\ 
\begin{equation}
f_i = \Favg(\mc C_i, \mc X) + \epsilon_i 	\label{eq:Fmeasurements}
\end{equation}
 -- possibly corrupted by additive noise $\epsilon_i$ -- between the unknown unitary gate $\mc X$ and Clifford gates $\mc C_i$ that are chosen uniformly at random.
In order to reconstruct $\mc X$ from these observations, we propose to perform a least-squares fit over the set of unital quantum channels, i.e.

\begin{equation}\label{eq:algorithm}
\begin{split}
	\operatorname{minimise}\quad &\sum_{i=1}^m (\Favg(\mc C_i,\mc Z) - f_i)^2 \\
	\operatorname{subject\ to}\quad &\text{$\mc Z$ is a unital quantum channel}.
\end{split}
\end{equation}
We emphasise that this is an efficiently solvable convex optimisation problem.
The feasible set is convex since it is the intersection of an affine subspace (unital and trace-preserving maps) and a convex cone (completely positive maps).

Valid for multi-qubit gates ($d=2^n$), our second main result states that this reconstruction procedure is guaranteed to succeed with (exponentially) high probability, provided that the number $m$ of \AGFs\ is proportional (up to a $\log(d)$-factor) to the number of degrees of freedom in a general unitary gate. 
The error of the reconstructed channel is measured with the Frobenius norm in Choi representation $\norm{\argdot}$, see \detailsandproof\ for details. 
Here, we give a concise statement for the case of unitary gates.
A more general version -- Theorem~\ref{thm:Choi-recovery} in \detailsandproof\ -- shows that the result can be extended to cover approximately unitary channels.
\\

\begin{theorem}[{Recovery guarantee for unitary gates}]
\label{thm:recoveryguarantee} 
\hfill\\
Fix the dimension $d=2^n$.
Then,
\begin{equation}
	m \geq c d^2 \log(d)
\end{equation}
noisy \AGFs\ with randomly chosen Clifford gates suffice with high probability (of at least $1- \mathrm{e}^{-\gamma m}$) to reconstruct \emph{any} 
unitary quantum channel $\mathcal{X}$ via \eqref{eq:algorithm}. 
This reconstruction is stable in the sense that the minimiser $\mc Z^\sharp$ of \eqref{eq:algorithm} is guaranteed to obey
 \begin{equation}\label{eq:errorbound}
 	\norm{\mc Z^\sharp - \mc X} \leq \tilde{C} \frac{d^2}{\sqrt{m}} \norm{\epsilon}_{\ell_2}.
 \end{equation}
 The constants $\tilde C,c, \gamma>0$ are independent of $d$. 
 \end{theorem}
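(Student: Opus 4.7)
The plan is to recast \eqref{eq:algorithm} as structured low-rank matrix recovery and establish a robust injectivity property for the measurement map on the relevant descent cone via Mendelson's small-ball method. Using the standard identity relating $\Favg(\mc C, \mc Z)$ to the Hilbert--Schmidt inner product of Choi matrices, the map $\mc A : \mc Z \mapsto (\Favg(\mc C_i, \mc Z))_{i=1}^m$ becomes, up to a deterministic offset absorbed by the trace-preserving constraint, a random linear map on Choi matrices. Because $\mc X$ is a unitary channel, its Choi matrix has rank one, so the target lies on an extreme ray of the positive cone that parametrises the feasible set.

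Let $T$ denote the cone of feasible directions at $\mc X$, i.e.\ the conic hull of $\{\mc Z - \mc X : \mc Z \text{ unital and CPTP}\}$. By feasibility of $\mc X$ and optimality of $\mc Z^\sharp$, one has $\mc Z^\sharp - \mc X \in T$ and $\norm{\mc A(\mc Z^\sharp - \mc X)}_{\ell_2} \leq 2\norm{\epsilon}_{\ell_2}$, since $\mc X$ itself achieves objective value $\norm{\epsilon}_{\ell_2}^2$ in \eqref{eq:algorithm}. The stability bound \eqref{eq:errorbound} then reduces to proving a lower isometry
\[
\inf_{\Delta \in T,\ \norm{\Delta}=1}\frac{1}{\sqrt{m}}\norm{\mc A(\Delta)}_{\ell_2} \;\geq\; \tau
\]
with $\tau$ of order $1/d$. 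I would establish this via Mendelson's small-ball inequality, which requires (a) a Paley--Zygmund-type lower bound on the marginal tail $\PP\bigl[|\Favg(\mc C,\Delta) - \EE\,\Favg(\mc C,\Delta)| \geq 2\xi\bigr]$ uniformly over $\Delta \in T \cap S_\fro$, and (b) an upper bound on the empirical Rademacher complexity of the indexed class $\{\Favg(\argdot,\Delta) : \Delta \in T \cap S_\fro\}$. Ingredient (b) is controlled by the Gaussian width of the descent cone at a rank-one matrix in $\CC^{d^2\times d^2}$, which is known to scale as $d$ in Frobenius normalisation; combined with the deterministic rescaling coming from the $\Favg$-to-Choi relation, this is exactly what produces the $m \gtrsim d^2 \log d$ sample threshold.

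The main obstacle is ingredient (a). Paley--Zygmund reduces it to controlling the ratio of the fourth moment to the squared second moment of $\Favg(\mc C, \Delta)$ under a Clifford-uniform $\mc C$. The second moment is pinned down by the 2-design property already used in Proposition~\ref{prop:affcomb} and yields an explicit tight-frame identity for $\mc A$. The fourth moment, however, is \emph{not} determined by the 3-design property of the Clifford group and is genuinely different from the corresponding Haar integral that underlies the 4-design analysis of Ref.~\cite{KimLiu16}. I would therefore develop a Clifford analogue of Collins' Weingarten calculus for the fourth diagonal tensor power, using the recent representation-theoretic description of the multi-qubit Clifford group from Refs.~\cite{ZhuKueGra16,HelWalWeh16,gross2017schur}, in order to evaluate this integral explicitly and bound the fourth moment by a constant multiple of the squared second moment, uniformly in $\Delta$. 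With this moment ratio in hand, Mendelson's theorem delivers the $1 - \mathrm{e}^{-\gamma m}$ success probability and the stability constant $\tilde C$, completing the proof.
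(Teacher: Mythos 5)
Your proposal is essentially sound and shares all the hard ingredients with the paper's proof: passage to the Choi representation where the unitary target is rank one, Mendelson's small-ball method with (a) a Paley--Zygmund tail bound driven by the ratio of fourth to squared second moments of the random variable $d^2(\mc T,\mc C)$ and (b) a bound on the empirical width, the second moment fixed by the $2$-/$3$-design property, and -- crucially -- the recognition that the fourth Clifford moment is not a $4$-design moment and must be computed via a Weingarten-type calculus built on the representation theory of Refs.~\cite{ZhuKueGra16,HelWalWeh16,gross2017schur}. Where you diverge is in the injectivity formalism: you propose a direct minimum-conic-singular-value bound on the descent cone of the feasible set at the fixed target $\mc X$, whereas the paper establishes the robust null space property (Definition~\ref{def:NSP}, following Ref.~\cite{KabKueRau15}) for $\mc A$ on the set $\eurSet$ of \emph{effectively unit-rank}, trace- and identity-annihilating matrices, and then invokes Theorem~\ref{thm:NSPconsequence}. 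The two routes feed on identical moment estimates, but the NSP version buys a \emph{uniform} guarantee (one draw of Cliffords works simultaneously for all unitary channels) plus the model-mismatch term $\tnorm{X|_c}$ of Theorem~\ref{thm:Choi-recovery} for approximately unitary channels; your descent-cone version as stated is non-uniform in $\mc X$, although the argument that descent-cone elements at a rank-one extreme point of the trace-normalised PSD slice are effectively low rank is exactly the content of the paper's Lemma~\ref{lem:effective_low_rank}, so promoting it to a uniform statement costs nothing. Two smaller points: the lower isometry constant $\tau$ must be of order $1/d^2$ rather than $1/d$ (each $\Favg$-measurement of a unit-Frobenius-norm direction is typically of size $d^{-2}$), or the claimed error bound $\tilde C\, d^2 m^{-1/2}\norm{\epsilon}_{\ell_2}$ would not come out; and the empirical width cannot be read off from a generic Gaussian-width formula, since the measurement vectors $J(\mc C_i)$ are far from Gaussian -- the paper controls it directly via H\"older, the non-commutative Khintchine inequality, and a matrix Chernoff bound, which is where the $\log(d)$ factor in the sampling rate originates.
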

\noindent We note the following:
\begin{enumerate}[nosep,leftmargin=0em,labelwidth=*,align=left,label={(\roman*)}]
\item Eq.~\eqref{eq:errorbound} shows the protocol's inherent stability to additive noise. This stability, combined with the robustness of randomised benchmarking against SPAM errors, results in an estimation procedure that is potentially more resource-intensive, but considerably less susceptible to experimental imperfections and systematic errors than many other reconstruction protocols \cite{FlaLiu11,FlaGroLiu12,KliKueEis17}.

\item The proof can be verbatim adapted to an optimisation of the $\ell_1$-norm instead of the $\ell_2$-norm in Eq.~\eqref{eq:algorithm}, resulting in a slightly stronger error bound.

\item The theorem achieves a quadratic improvement (up to a $\log$-factor) over the minimal number of \AGFs\ required for a naive reconstruction via linear inversion for the case of noiseless measurements. 
But what is the number of measurements required to obtain the \AGFs\ and to suppress the effect of the measurement noise $\epsilon$ in the reconstruction error \eqref{eq:errorbound}? 
For randomised benchmarking setups a fair accounting of all involved errors is beyond the scope of the current work.
But in order to show that the scaling of the noise term in our reconstruction error \eqref{eq:errorbound} is essentially optimal, we consider the conceptually simpler measurement setting where the channel's Choi state is measured directly. 
In \refdetailsandproof~\ref{sec:optimality} we prove upper and lower bounds to the minimum number of channel uses sufficient for a reconstruction via Algorithm~\eqref{eq:algorithm} with reconstruction error \eqref{eq:errorbound} bounded by $\recerror>0$. 
This number of channel uses scales as ${d^4}/{\recerror^2}$ up to log-factors. 
The upper bound relies on direct fidelity estimation \cite{FlaLiu11}. 
In order to establish a lower bound we extend information theoretic arguments from Ref.~\cite{FlaGroLiu12} to rank-$1$ measurements. 
\item Finally, we note that the reconstruction \eqref{eq:algorithm} can be  practically calculated using standard convex optimization packages. 
A numerical demonstration is shown in Figure~\ref{fig:numerical_reconstruction} and discussed in more detail in \refdetailsandproof~\ref{sec:numerics}. 
There we also show that measuring \AGFs\ with respect to Clifford unitaries seems to be comparable to Haar-random measurements, even in the presence of noise. This confirms an observation that was already mentioned in Ref.~\cite{KimLiu16}.
\end{enumerate}

\begin{figure}
\def\mywidth{1}
	\input{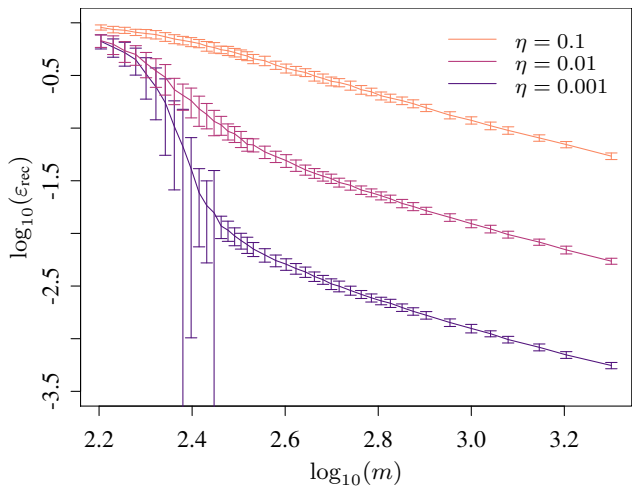}
	\caption{
		Reconstruction of a Haar random $3$-qubit channel using the optimization \eqref{eq:algorithm}: 
		The plots show the dependence of the observed average reconstruction error 
		$\recerror \coloneqq \norm{\mc Z^\sharp - \mc X}$ on 
		the number of \AGFs\ $m$ for different noise strengths 
		$\eta\coloneqq  \norm{\epsilon}_{\ell_2}$. 
		The error bars denote the observed standard deviation. 
		The averages are taken over $100$ samples of random {i.i.d.} measurements and channels (non-uniform). 
		The Matlab code and data used to create these plots can be found on GitHub \cite{our_git_repo}.
	}
	\label{fig:numerical_reconstruction}
\end{figure}

The proof of Theorem~\ref{thm:recoveryguarantee} is presented in \refdetailsandproof~\ref{ssec:proof_of_the_recovery_guarantee}. The \AGFs{} can be interpreted as expectation values of certain observables, which are unit rank projectors onto directions that correspond to elements of the Clifford group. In contrast, most previous work on tomography via compressed sensing feature observables that have full rank, e.g.~tensor products of Pauli operators. Since we now want to utilize observables that have unit rank, a different approach is needed. One approach, developed by a subset of the authors in \cite{KimLiu16} is to use strong results from low rank matrix reconstruction and phase retrieval \cite{CanStroVor13,CanLi13,GroKraKue15_partial,KueRauTer15,KabKueRau15}. These methods \cite{KueRauTer15,KabKueRau15} require measurements that look sufficiently random and unstructured, in that their $4$th-order moments are close to those of the uniform Haar measure. 
The multi-qubit Clifford group, however, does constitute a 3-design, but not a 4-design. 
In Ref.~\cite{KimLiu16} this discrepancy is partially remedied by imposing additional  constraints (a ``non-spikiness condition'', see also Ref.\ \cite{KraLiu16}) on the unitary channels to be reconstructed. 
In turn, their result also required these constraints to be included in the algorithmic reconstruction which renders the algorithm impractical \footnote{The cardinality of the Clifford group grows superpolynomially in the Hilbert space dimension $d$. Therefore, the non-spikiness with respect to the Clifford group quickly corresponds to a demanding number of constraints. In fact, about $10^8$ and $10^{13}$ constraints are already required for $3$-qubits and $4$-qubits, respectively.}. Moreover, important classes of channels, e.g. Pauli channels, do in general not satisfy this condition. 
Here, we overcome these issues by appealing to recent works that fully characterise the fourth moments of the Clifford group \cite{ZhuKueGra16,HelWalWeh16}. 
In order to apply these results, we develop an integration formula for fourth moments over the Clifford group. 
This formula is analogous to the integration over the unitary group know as Collins' calculus with Weingarten functions \cite{Col03}; see \refdetailsandproof~\ref{sec:rep_theory}.  
Equipped with this new representation theoretic technique we show in \refdetailsandproof~\ref{subsec:fourthmoment} that the deviation of the Clifford group from a unitary 4-design is -- in a precise sense -- mild enough for the task at hand. 

Our final result addresses the \emph{unitarity} of a quantum channel. Introduced by Wallman et al.~\cite{WalGraHar15}, the unitarity is a measure for the coherence of a (noise) channel $\mc E$.
It is defined to be the average purity of the output states of a slightly altered channel $\mathcal{E}'$
\footnote{$\mc E'$ is defined so that $\mc E' (\Id
) = 0$ and 
	$\mc E' (X) = \mc E(X) - { \tr (\mc E (X)}/{\sqrt{d}} \Id
$ for all traceless $X$.
	}
\begin{equation}
u(\mathcal{E}) = \int \rmd  \psi \tr \left( \mc E' \left( | \psi \rangle \! \langle \psi | \right)^\dagger \mc E' \left( | \psi \rangle \! \langle \psi | \right) \right)
\label{eq:unitarity}
\end{equation}
that flags the absence of trace-preservation and unitality. 
The unitarity can be estimated efficiently by using techniques similar to randomised benchmarking \cite{FenWalBuo16}. 
It is also an important figure of merit when one aims to compare the \AGF\ of a noisy gate implementation to its diamond distance \cite{KueLonFla15, Wal15} -- a task that is important for certifying fault tolerance capabilities of  quantum devices.

Although useful, the existing definition of the unitarity \eqref{eq:unitarity} is arguably not very intuitive. Here, we try to (partially) amend this situation by providing a simple statistical interpretation:
\begin{theorem}
[{Operational interpretation of unitarity}]
\label{thm:unitarity}
Let $\left\{\mathcal{U}_k \right\}_{k=1}^N$ be the gate set of a unitary 2-design. Then, for all hermicity preserving maps $\mc X$
\begin{equation}
\mathrm{Var} \left[ \Favg \left(\mathcal{U}_k, \mathcal{X} \right) \right] = \frac{ u (\mathcal{X})}{d^2 (d+1)^2},
\label{eq:unitarity_variance}
\end{equation}
where the variance is computed with respect to $\mathcal{U}_k$ drawn randomly from the
unitary 2-design.
\end{theorem}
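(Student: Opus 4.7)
The plan is to express $\Favg(\mc U,\mc X)$ as an affine functional of the Pauli--Liouville (transfer matrix) representation $\hat{\mc U}$, so that the variance reduces to a second moment of $\mc U_k$ --- a quantity captured exactly by any unitary $2$-design from its Haar average, which can in turn be evaluated using Schur orthogonality on the irreducible adjoint representation of $\mathrm{U}(d)$. I fix an orthonormal Hermitian basis $\{\Id/\sqrt d\}\cup \{P_i\}_{i=1}^{d^2-1}$ and decompose transfer matrices into the corresponding blocks. Unitality and trace-preservation of $\mc U$ force $\hat{\mc U}=1\oplus A_U$, where $A_U$ is the orthogonal representation of $U$ on the traceless Hermitian subspace. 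Writing $\hat{\mc X}$ with $(0,0)$-entry $a_X$ and lower-right block $A_X$, the same twirl $\int\rmd\psi\,\ketbra\psi\psi^{\otimes 2}=(I+F)/(d(d+1))$ underlying Proposition~\ref{prop:affcomb} yields
\begin{equation*}
  \Favg(\mc U,\mc X) = \frac{d+a_X}{d(d+1)} + \frac{\tr(A_U^\dagger A_X)}{d(d+1)}.
\end{equation*}

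Only the second summand varies with $\mc U_k$, and it is linear in $\mc U_k$. Its square is therefore of bidegree $(1,1)$ in $\mc U_k$ and is reproduced exactly by the $2$-design, so $\mathrm{Var}_k[\Favg]=\mathrm{Var}_{\mathrm{Haar}}[\Favg]$. The $1$-design condition further yields $\mathbb E_k[A_{U_k}]=0$ by irreducibility and non-triviality of the adjoint representation, so the variance equals the Haar second moment. Schur orthogonality for the $(d^2-1)$-dimensional adjoint irrep gives
\begin{equation*}
  \int \rmd U\,(A_U)_{ij}(A_U)_{kl} = \frac{\delta_{ik}\delta_{jl}}{d^2-1},
\end{equation*}
and summing yields $\mathbb E[\tr(A_U^\dagger A_X)^2]=\tr(A_X^\dagger A_X)/(d^2-1)$, hence
\begin{equation*}
  \mathrm{Var}_k[\Favg(\mc U_k,\mc X)] = \frac{\tr(A_X^\dagger A_X)}{d^2(d+1)^2(d^2-1)}.
\end{equation*}

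The last step is to recognise $\tr(A_X^\dagger A_X)/(d^2-1)$ as $u(\mc X)$. By the defining properties of $\mc X'$ in the footnote, its Pauli--Liouville matrix is precisely the block $A_X$ (with the $\Id$-row and -column zeroed out). Using $\int\rmd\psi\,r_i(\psi)r_j(\psi)=\delta_{ij}/(d(d+1))$ for $r_i(\psi)=\tr(P_i\ketbra\psi\psi)$ --- yet another consequence of the same twirl --- a direct evaluation of the integral in \eqref{eq:unitarity} in the Wallman--Granade--Harper normalisation produces exactly $\tr(A_X^\dagger A_X)/(d^2-1)$ and closes the argument. The main obstacle is purely bookkeeping: one must track the prefactors in the definition of $\mc X'$ carefully so that the dimensional conventions align with the variance formula, in particular the relative factor $d/(d-1)$ relating the raw Haar integral to the Wallman unitarity. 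The representation-theoretic content --- that $2$-designs match Haar at second order, combined with Schur's lemma for the adjoint irrep --- is entirely standard.
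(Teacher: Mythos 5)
Your argument is correct and follows essentially the same route as the paper's: both proofs decompose maps into the trivial block plus the $(d^2-1)$-dimensional adjoint block, use the $1$-design property to kill the mean of the fluctuating part, and invoke the $2$-design property to replace the second moment by its Haar value, which is then evaluated by Schur's lemma on the adjoint irrep (your entrywise Schur orthogonality relation is exactly the Parseval-frame identity of Lemma~\ref{lem:irrep frame} that the paper uses), before matching the result to $u(\mathcal X)=\tr(A_X^\dagger A_X)/(d^2-1)$. One small slip: for a general hermiticity-preserving $\mathcal X$ the constant term of your fidelity formula should be $\tfrac{(d+1)a_X}{d(d+1)}=\tfrac{a_X}{d}$ rather than $\tfrac{d+a_X}{d(d+1)}$ (they coincide only when $\mathcal X$ is trace-preserving, i.e.\ $a_X=1$), but since this term is independent of $\mathcal U$ it has no effect on the variance and the conclusion stands.
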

The proof of the theorem is given in \refdetailsandproof~\ref{sec:tight_frame}. 
Note that the variance is taken with respect to unitaries drawn from the unitary 2-design and not the variance of the average fidelity with respect to the input state as calculated, e.g.\ in Ref.\ \cite{MagesanEtAl:2011}.

\section{Conclusion and outlook}\label{sec:conclusions}
In this work we address the crucial task of characterising quantum channels.
We do so by relying on \AGFs\ of the quantum channel 
of interest with simple-to-implement Cliffords. 
More specifically, we start by noting that
(i) the unital part of any quantum channel can be written in terms of a unitary $2$-design 
with expansion coefficients given by \AGFs. 
As a consequence, for certain Hilbert space dimensions $d$, the unital part can be reconstructed from $d^4$ \AGFs\ with Clifford group operations by a straight-forward and stable expansion formula. 
(ii) As the main result, we prove for the case of unitary gates that the reconstruction can be practically done using only essentially order of $d^2$ 
random \AGFs\ with Clifford gates.
In a simplified measurement setting, we show that this setting is provably resource optimal
in terms of the number of channel invocations.
For the proof, we derive a formula for the integration of fourth moments over the Clifford group, which is similar to Collins' calculus with Weingarten functions. 
This integration formula might also be useful for other purposes. 
(iii) We prove that the unitarity of a quantum channel, which is a measure for the coherence of noise \cite{WalGraHar15}, has a simple statistical interpretation: 
It corresponds to the variance of the \AGF\ with unitaries sampled from a unitary $2$-design.

The focus of this work is on the reconstruction of quantum gates. 
Here, the assumption of unitarity
considerably simplifies the representation-theoretic effort for 
establishing the fourth moment bounds required for applying strong existing proof techniques from low rank matrix 
recovery. 
These extend naturally to higher Kraus ranks and we leave this generalisation to future work. Existing results \cite{KueZhuGro16a,KueZhuGro16b} indicate that the deviation of the Clifford group from a unitary 4-design may become more pronounced when the rank of the states/channels in question increases. This may lead to a non-optimal rank-scaling of the required number of observations $m$. 
In fact, a straightforward extension of Theorem~\ref{thm:recoveryguarantee} to the Kraus rank-$r$ case already yields a recovery guarantee with a scaling of $m \sim r^5 d^2 \log(d)$.

Practically, it is important to explore how this protocol behaves when applied to data obtained from interleaved randomised benchmarking experiments. In Ref. \cite{KimSilRya14}, the authors show how to use interleaved randomised benchmarking experiments to measure the \AGF\ between a known Clifford and the combined process of an unknown gate concatenated with the average Clifford error process. In order to obtain tomographic information about the isolated unknown gate, the authors had to do a linear inversion of the average Clifford error. However, in most cases, we  expect the average Clifford error to be close to a depolarizing channel which has very high rank. Thus, 
building on our intuition obtained for quantum states \cite{RioGroFla16} and using our techniques, we could obtain a low-rank approximation to the combined unknown gate and average Clifford error, which under the assumption of a high rank Clifford error, would naturally pick out the coherent part of the unknown gate. 


\section{Details and proofs} \label{sec:details}
In this section we provide proofs and further details of the results of the work.
Section~\ref{sec:rep_theory}--\ref{subsec:fourthmoment} develop the prerequisites to prove the recovery guarantee, Theorem~\ref{thm:recoveryguarantee}, in Section~\ref{ssec:proof_of_the_recovery_guarantee}. The optimality of this result is addressed in Section~\ref{sec:optimality}. 
The expansion of unital maps in terms of a unitary $2$-design, Proposition~\ref{prop:affcomb}, is derived in Section~\ref{sec:tight_frame}. 
In Section~\ref{sec:unitarity}, we show that the unitarity of a hermiticity preserving map can be expressed as the variance of its average gate fidelity with respect to a unitary $2$-design. 
We also discuss possible implications. 
Finally, Section~\ref{sec:numerics} provides further details of the numerical demonstration of the protocol. 


We start by specifying the notation that is used subsequently. For a vector space $V$ we denote the space of its endomorphisms by $\Lin(V)$. In particular, let $H_d$ denote the space of hermitian operators on a $d$-dimensional complex Hilbert space.
We label the vector space of endomorphisms on $H_d$ by $\Lin(H_d)$ and denote its elements with calligraphic letters. 
For every map $\mathcal{X} \in \Lin(H_d)$, we define its adjoint 
$\mathcal{X}^\dagger \in \Lin(H_d)$ with respect to the Hilbert-Schmidt inner product $(\cdot, \cdot)$ on $H_d$. 
We denote the subset of \emph{completely positive} maps by 
$\operatorname{CP}(H_d) \subset \Lin(H_d)$. 
\emph{Quantum channels} are elements of $\CP(H_d)$ that are 
\emph{trace preserving} (TP), i.e.\
$\tr \left( \mc {E}(X) \right) = \tr (X)$ for all  
$X \in H_d$. 
This condition is equivalent to the identity matrix 
$\Id \in H_d$ being a fixed point of the adjoint channel, 
$\mc E\ad (\Id) = \Id$.
Similarly, a map (or channel) $\mc E$ that itself has the identity as a fixed-point, $\mc E(\Id) = \Id$,  is called \emph{unital}. The affine subspace of TP and unital maps is denoted by $\Lin_\utp(H_d) \subset \Lin(H_d)$. 
We further denote the linear hull of $\Lin_\utp(H_d)$ by $\Lin_{\overline{\utp}}(H_d)$.

Most of our results feature a norm on $\Lin (H_d)$, which is naturally induced on by the average gate fidelity (\AGF) \eqref{eq:average_fidelity} in the following way. We define the inner product on $\Lin(H_d)$ as
\begin{equation}\label{eq:inner_product}
 	(\mc X, \mc Y) = \frac{d+1}{d} \Favg(\mc X, \mc Y) - \frac{1}{d^2}( \mc X(\Id), \mc Y(\Id))
\end{equation}
and denote the induced norm on $\Lin(H_d)$ by $\| \mc X \|^2 = \left( \mc X, \mc X \right)$.  The pre-factors are chosen such that unitary channels $\mc U \in \Lin(H_d)$ have unit norm. 

Note that  this inner product is proportional to the previously defined Hilbert-Schmidt inner product applied to the
Choi and Liouville representations:
\begin{align}
\left( \mc X, \mc Y \right) = \left( J(\mc X), J(\mc Y) \right) = \frac{1}{d^2} \left(\mc L(\mc X),\mc L (\mc Y) \right),
\label{eq:inner_product_choi}
\end{align}
see Refs.\ \cite{HorHorHor99,Niel02} and also \cite[Proposition~1]{KueLonFla15}. 
We choose the convention that Choi matrices of quantum channels have unit trace, i.e.~$\Tr(J(\mc X))=1$. Furthermore, for $X \in H_d$ we will encounter the  Schatten norms $\tnorm{X} = \Tr[\sqrt{XX\ad}]$, $\fnorm{X} = \sqrt{\Tr(XX\ad)}$ and $\snorm{X} = \sqrt{\mu_\text{max}(XX\ad)}$, where $\mu_\text{max}(Y)$ denotes the maximum eigenvalue of a Hermitian matrix $Y$. For a vector $y \in \RR^m$ and $q \in \NN$ the $\ell_q$-norm is defined by $\norm{y}_{\ell_q} = (\sum_{i=1}^m |y_i|^q)^{1/q}$.  

For a map $\mc T: H_d \to H_d$ we define the random variable 
\begin{equation}\label{eq:SmcT}
	S_{\mc T} = d^2 (\mc T, \mc U)
\end{equation}
where $\mc U$ is a unitary channel $\mc U(X) = U X U\ad$ with $U$ either chosen uniformly at random  from the full unitary group $\U(d)$, or the Clifford group $\Cl(d)$, depending on the context. The main technical ingredients for the the proofs of our main results
are an expression for the second and fourth moment of $S_{\mc T}$.
 To this end, 
an integration formula for the first four moments over the Clifford group is developed in Section~\ref{sec:rep_theory}. We then derive an explicit expression for the second moment of $S_{\mc T}$ in Section~\ref{subsec:secondmoment} and an upper bound on the fourth moment of $S_{\mc T}$ 
in Section~\ref{subsec:fourthmoment}. 
These bounds are essential prerequisites for applying strong techniques from low-rank matrix reconstruction to prove our recovery guarantee, Theorem~\ref{thm:recoveryguarantee}, for unitary gates in Section~\ref{ssec:proof_of_the_recovery_guarantee}.

\subsection{An integration formula for the Clifford group} \label{sec:rep_theory}

One of the main technical ingredients of the proof is an explicit formula for integrals of the diagonal action of the Clifford group $\Cliff(d)$. More precisely, for a unitary representation $R : G \to \End(V)$ of a subgroup $G \subset \U(d)$ carried by a vector space $V$, we define $E_R: \End(V) \to \End(V)$ (``twirling'') as 
\begin{equation}\label{eq:E_R}
	E_R(A) = \int _G R(g) A R(g)\ad d\mu(g),
\end{equation}
where $\mu$ is the invariant measure induced by the Haar measure on $\U(d)$.

For $V = (\CC^d)^{\otimes n}$ we denote the diagonal action of a subgroup $G$ of $\GL(\CC^d)$ by $\Delta^n_{G}: G \to \GL(V)$, i.e.
\begin{equation}
	\Delta^n_{G}: U \mapsto \underbrace{U \otimes \ldots \otimes U}_{n \text{ times}}.
\end{equation}
Note that if $G$ is a subgroup of the unitary group $\U(d)$ then $\Delta^n_G$ is a unitary representation. The main result of this chapter is an explicit expression for $E_{\Delta_{\Cliff(d)}^4}(A)$ for arbitrary $A \in \End(V)$.  

For $E_{\Delta_{\U(d)}^n}(A)$, where the integration is carried out over the entire unitary group, an explicit formula was derived in Refs.~\cite{Col03,CollinsSniady}. It is instructive to review the result of  Ref.\ \cite{CollinsSniady} and its proof first. Our derivation of the analogous expression for the Clifford group follows the same strategy and makes use of many of the intermediate results. 

\subsubsection{Integration over the unitary group \texorpdfstring{$\U(d)$}{U(d)} }

To state the result we have to introduce notions from the representation theory of $\Delta^n_{\U(d)}$ which can be found, e.g., in Refs.~\cite{CollinsSniady,GoodmanWallach:2009,FouHar91,Col03}. Schur-Weyl duality relates the irreducible representations of the diagonal action of $\GL(V)$ to the irreducible representations of the natural action of the symmetric group $S_n$  on $V$. 
Recall that  the representation $\Delta^n_{\U(d)}$ decomposes into irreducible representations $\Delta^\lambda_{\U(d)}: \U(d) \to \GL(W_\lambda)$ labelled by partitions $\lambda = (\lambda_1, \lambda_2, \ldots, \lambda_{l(\lambda)})$ of $n$ into $l(\lambda)\leq d$ integers, i.e. $\sum_{i = 1}^{l(\lambda)} \lambda_i = n$. For short, we denote  a partition of $n$ by $\lambda \vdash n$ and dimensions of the Weyl-modules $W_\lambda$ by $D_\lambda$.

Let $\{\ket i\}_{i=1}^d$ be an orthonormal basis of $\CC^d$. 
We define the representation $\pi_{S_n}^{d}: S_n \to \GL(V)$ by linearly extending 
\begin{equation}
	\pi^d_{S_n}(\tau): \ket {i_1} \otimes \ldots \otimes \ket {i_k} \mapsto \ket {i_{\tau^{-1}(1)}} \otimes \ldots \otimes \ket {i_{\tau^{-1}(k)}}. 
\end{equation}

The irreducible representations of $\pi^d_{S_n}$, $\pi^\lambda_{S_n}: S_n \to \GL(S_\lambda)$ are also labelled by  partitions $\lambda \vdash n$. The dimensions of the Specht-modules $S_\lambda$ are denoted by $d_\lambda$.
Since the actions of $\Delta^n_{\U(d)}$ and $\pi_{S_n}^d$ commute, they induce a representation of $\U(d) \times S_n$ on $(\CC^d)^{\otimes n}$ that decomposes into irreducible representations as follows:

\begin{theorem}[Schur-Weyl decomposition]
\label{thm:SchurWeyl}
	The action of $\U(d) \times S_n$ on $(\CC^d)^{\otimes n}$ is multiplicity free and $(\CC^d)^{\otimes n}$ decomposes into irreducible components as 
	\begin{equation}\label{eq:SchurWeylDecomposition}
		(\CC^d)^{\otimes n} \cong \bigoplus_{\lambda \vdash n, l(\lambda) \leq d} W_\lambda \otimes S_\lambda
	\end{equation}
	on which $\U(d) \times S_n$ acts as $\Delta^\lambda_{\U(d)} \otimes \pi_{S_n}^\lambda$. 
\end{theorem}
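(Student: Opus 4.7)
The plan is to establish Schur--Weyl duality via the double commutant theorem. The first observation is immediate: $\Delta^n_{\U(d)}(U)=U^{\otimes n}$ commutes with every tensor-factor permutation $\pi^d_{S_n}(\tau)$, so the two actions combine into a representation of the product group $\U(d) \times S_n$ on $V = (\CC^d)^{\otimes n}$. Given this, a standard result on commuting actions on a finite-dimensional space says that \emph{if} the two group algebras generate mutually centralizing subalgebras of $\End(V)$, then $V$ decomposes multiplicity-freely as a direct sum $\bigoplus_\mu M_\mu \otimes N_\mu$, where $M_\mu$ and $N_\mu$ are pairwise non-isomorphic irreducibles of the respective groups. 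Since $\CC[S_n]$ is semisimple by Maschke's theorem, one half of the double-commutant property follows automatically once the other half is established.

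The key technical step is therefore to prove that the linear span $\mathcal{A} := \mathrm{span}\{ U^{\otimes n} : U \in \U(d)\}\subset \End(V)$ equals the subalgebra $\End_{S_n}(V)$ of $S_n$-intertwiners. My plan is to argue as follows. First, by Weyl's unitary trick, replacing $\U(d)$ by its complexification does not change the span, so $\mathcal{A} = \mathrm{span}\{A^{\otimes n} : A \in \End(\CC^d)\}$. Next, by polarization one shows this span equals the image of the symmetrization map $\Sym^n(\End(\CC^d)) \to \End(V)$. Finally, this image is exactly the $S_n$-invariant subspace of $\End(V)\cong \End(\CC^d)^{\otimes n}$ under simultaneous permutation of both factors, which is precisely $\End_{S_n}(V)$. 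Iterating the centralizer operation then yields $\mathrm{span}(\pi^d_{S_n}(S_n)) = \End_{\U(d)}(V)$.

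With the double-commutant property in hand, the decomposition
\begin{equation*}
V \cong \bigoplus_{\lambda} W_\lambda \otimes S_\lambda
\end{equation*}
follows from the structure theory of semisimple algebras: the summands are indexed by the $S_n$-irreducibles $S_\lambda$ that actually occur in $\pi^d_{S_n}$, the $W_\lambda$ are pairwise non-isomorphic irreducible $\U(d)$-modules, and $\U(d) \times S_n$ acts as $\Delta^\lambda_{\U(d)} \otimes \pi^\lambda_{S_n}$ on the $\lambda$-summand. To identify the labelling set and the modules themselves, I would invoke the classical highest-weight classification: the irreducible polynomial representations of $\GL(\CC^d)$ are parametrised by partitions $\lambda$ with $l(\lambda) \leq d$, and the corresponding Weyl module $W_\lambda$ is obtained from $V$ by applying a Young symmetrizer, which shows that the $S_n$-irrep appearing alongside $W_\lambda$ is exactly $S_\lambda$. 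The condition $l(\lambda) \leq d$ arises because a Young diagram with more than $d$ rows cannot be realized on $(\CC^d)^{\otimes n}$, as antisymmetrizing over more than $d$ basis vectors yields zero.

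The main obstacle is the double-commutant step, i.e.\ proving $\mathcal{A} = \End_{S_n}(V)$. The direction $\mathcal{A} \subseteq \End_{S_n}(V)$ is immediate from the commutativity of the two actions; the reverse inclusion is the substantive content. The polarization argument above is standard, but it hides the genuine input, namely that polynomial functions of bounded degree on $\U(d)$ separate elements of $\End(V)$. Once this is accepted, the remainder of the argument is formal representation theory applied to a semisimple pair of commuting algebras.
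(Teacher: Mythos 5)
The paper does not prove Theorem~\ref{thm:SchurWeyl}: it is stated as classical background, with the relevant representation theory deferred to the cited references (e.g.\ Refs.~\cite{CollinsSniady,GoodmanWallach:2009,FouHar91}), and the paper's own technical work begins only afterwards with Theorem~\ref{thm:intU} and its Clifford analogue. Your proposal is a correct outline of the standard double-commutant proof of Schur--Weyl duality, and each step you sketch is sound: the commutation of the two actions, the reduction via Zariski density of $\U(d)$ in $\End(\CC^d)$ to the span of $\{A^{\otimes n}\}$, the polarization identification of that span with the symmetric subspace $\Sym^n(\End(\CC^d))\subset \End(\CC^d)^{\otimes n}\cong\End(V)$, which is exactly the fixed-point set of simultaneous permutation and hence $\End_{S_n}(V)$, and finally the double-commutant theorem (using semisimplicity of the image of $\CC[S_n]$) to obtain the multiplicity-free bipartite decomposition. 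The identification of the labels via highest weights and Young symmetrizers, including the vanishing of antisymmetrizations over more than $d$ tensor factors to explain the constraint $l(\lambda)\leq d$, is likewise the standard argument. You correctly flag that the substantive content sits in the inclusion $\End_{S_n}(V)\subseteq\mathrm{span}\{U^{\otimes n}\}$; a fully written-out proof would need to supply the polarization identity and the density argument in detail, but nothing in your outline would fail.
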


We denote the orthogonal projections on $W_\lambda \otimes S_\lambda$ by $P_\lambda$ and the character on the irreducible representation  $\pi^\lambda_{S_n}$ of $S_n$ by $\chi^\lambda (\pi) \coloneqq \Tr(\pi^\lambda_{S_n}(\pi))$. The orthogonal projectors can be written as
\begin{equation}\label{eq:projasperm}
	P_\lambda = \frac{d_\lambda}{n!} \sum_{\sigma \in S_n} \chi^\lambda(\sigma)  \pi_{S_n}^d(\sigma),
\end{equation}
see, e.g.\ Ref.~\cite[Eq.~(12.10)]{Wigner1959Grouptheory}.
In terms of these projectors $E_{\Delta_{\U(d)}^n}(A)$ can be calculated using the following theorem.

\begin{theorem}[Integration over the unitary group $\U(d)$]\label{thm:intU}
	Let $A \in \mc \End(V)$. Then, for $R = \Delta^n_{\U(d)}$ and $G = \U(d)$,
	\begin{equation}
	\begin{split}
		&E_{\Delta^n_{\U(d)}}(A) \\
		&\quad\quad= \frac{1}{n !} \sum_{\tau \in S_n} \Tr( A \pi^d_{S_n}(\tau))\, \pi^d_{S_n}(\tau^{-1}) \sum_{\lambda \vdash n,\ l(\lambda) \leq d} \frac{d_\lambda}{D_\lambda} P_\lambda.
	\end{split}
	\end{equation}
\end{theorem}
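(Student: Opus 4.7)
The plan is to characterize $E_R$ abstractly as an orthogonal projection onto the commutant and then verify that the claimed expression realizes this projection. For any unitary representation $R$ of a compact group on $V$, the twirl $E_R$ is the orthogonal projection with respect to the Hilbert--Schmidt inner product on $\End(V)$ onto $\Comm(R)$: $E_R(A)\in\Comm(R)$ and $E_R^2=E_R$ are direct consequences of Haar invariance, while self-adjointness
$$\Tr\!\bigl(E_R(A)^\dagger B\bigr) \;=\; \Tr\!\bigl(A^\dagger E_R(B)\bigr)$$
follows from invariance of the Haar measure under $g\mapsto g^{-1}$. Combined with Schur--Weyl duality (Theorem~\ref{thm:SchurWeyl}), which identifies $\Comm(\Delta^n_{\U(d)})=\pi^d_{S_n}(\CC[S_n])$, this characterizes $E_{\Delta^n_{\U(d)}}(A)$ uniquely as the element of $\pi^d_{S_n}(\CC[S_n])$ obeying $\Tr\!\bigl(E_{\Delta^n_{\U(d)}}(A)\,\pi^d_{S_n}(\sigma)\bigr)=\Tr\!\bigl(A\,\pi^d_{S_n}(\sigma)\bigr)$ for all $\sigma\in S_n$.

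Denote the proposed right-hand side of the theorem by $F(A)$. Since $F(A)$ is a linear combination of $\pi^d_{S_n}(\rho)$'s it manifestly lies in the commutant, so only the trace identities above need to be verified. Using that each $P_\lambda$ commutes with every $\pi^d_{S_n}(\sigma)$ (both preserve the Schur--Weyl decomposition) and that $\pi^d_{S_n}(\rho)$ acts on $W_\lambda\otimes S_\lambda$ as $\Id_{W_\lambda}\otimes\pi^\lambda_{S_n}(\rho)$, so that $\Tr(\pi^d_{S_n}(\rho)P_\lambda)=D_\lambda\,\chi^\lambda(\rho)$, a short computation reduces the goal to the character identity
$$\frac{1}{n!}\sum_{\tau\in S_n}\Tr\!\bigl(A\,\pi^d_{S_n}(\tau)\bigr)\sum_{\lambda\vdash n,\,l(\lambda)\le d} d_\lambda\,\chi^\lambda(\tau^{-1}\sigma) \;=\; \Tr\!\bigl(A\,\pi^d_{S_n}(\sigma)\bigr).$$

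To establish this identity, observe that $f_A(\tau):=\Tr(A\,\pi^d_{S_n}(\tau))$ is a sum of matrix coefficients of the representation $\pi^d_{S_n}\cong\bigoplus_{l(\lambda)\le d} D_\lambda\,\pi^\lambda_{S_n}$. Expanding $\chi^\lambda(\tau^{-1}\sigma)$ via unitarity and applying Schur orthogonality to any matrix coefficient $f$ of an irreducible $\pi^\mu_{S_n}$ yields
$$\sum_{\tau\in S_n} f(\tau)\,\chi^\lambda(\tau^{-1}\sigma) \;=\; \frac{n!}{d_\mu}\,\delta_{\lambda\mu}\,f(\sigma),$$
so multiplying by $d_\lambda/n!$ and summing over $\lambda$ with $l(\lambda)\le d$ returns $f(\sigma)$ (the relevant $\mu$ automatically satisfy $l(\mu)\le d$ and are included). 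Linearity extends this to $f_A$ and concludes the proof. The main obstacle is really just representation-theoretic bookkeeping: once the orthogonal-projection viewpoint is fixed, the argument collapses to one application of Schur orthogonality, with the restriction $l(\lambda)\le d$ in the sum being exactly the right cutoff because Schur--Weyl guarantees $\pi^d_{S_n}$ has no components with $l(\lambda)>d$.
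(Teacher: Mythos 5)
Your proof is correct, but it is organized differently from the paper's. The paper \emph{derives} the formula: it introduces the auxiliary map $\Phi(A)=\sum_{\sigma}\Tr(A\,\pi^d_{S_n}(\sigma^{-1}))\,\pi^d_{S_n}(\sigma)$, shows via the module-morphism property $E_R(AB)=E_R(A)B$ that $\Phi(A)=\Phi(E_{\Delta^n_{\U(d)}}(A))=E_{\Delta^n_{\U(d)}}(A)\,\Phi(\Id)$, and then computes $\Phi(\Id)^{-1}=\frac{1}{n!}\sum_{\lambda}\frac{d_\lambda}{D_\lambda}P_\lambda$ explicitly from the character expansion of $\Tr(\pi^d_{S_n}(\sigma^{-1}))$ and the projector formula; the answer is $\Phi(A)\Phi(\Id)^{-1}$. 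You instead \emph{verify} the stated formula: you characterize the twirl as the orthogonal projection onto $\Comm(\Delta^n_{\U(d)})$, note that an element of the commutant is uniquely determined by its trace pairings against all $\pi^d_{S_n}(\sigma)$, and check those pairings for the candidate expression by one application of Schur orthogonality of matrix coefficients (your convolution identity $\sum_\tau f(\tau)\chi^\lambda(\tau^{-1}\sigma)=\tfrac{n!}{d_\mu}\delta_{\lambda\mu}f(\sigma)$ is just the statement that $\tfrac{d_\lambda}{n!}\chi^\lambda$ convolves as a central idempotent). The two arguments rest on the same facts --- $E(A)$ lies in the commutant and has the same traces against permutation operators as $A$, plus character orthogonality --- so the mathematical content is equivalent; what your route buys is brevity and a transparent ``uniqueness plus verification'' structure, at the cost of needing to know the answer in advance, whereas the paper's $\Phi$-based derivation is constructive and, importantly for this paper, is the template that generalizes directly to the Clifford-group version (Theorem~\ref{thm:intCl}) by replacing $\Phi$ with $\tilde\Phi$. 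Your handling of the cutoff $l(\lambda)\le d$ is also correct: Schur--Weyl guarantees $f_A$ contains only matrix coefficients of irreps appearing in $\pi^d_{S_n}$, so the restricted sum over $\lambda$ reproduces $f_A(\sigma)$ exactly.
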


This formula differs slightly from the original statement presented in Ref.~\cite{CollinsSniady}.
 The more common formulation presented there follows from evaluating the expression of Theorem~\ref{thm:intU} using a standard tensor basis of $\End(V)$ \footnote{This way of stating the result of Ref.\ \cite{CollinsSniady} was brought to our attention by study notes of K.~Audenaert}. However, here we have opted for a presentation of Theorem~\ref{thm:intU} that is easier to generalise beyond the full unitary group.

In the remainder of this section, we present a proof of Theorem~\ref{thm:intU} following  the strategy of Ref.\ \cite{CollinsSniady}. The commutant of a subset $\mc A \subset \End(V)$ is the subset of $\End(V)$ defined by 
\begin{equation}
	\Comm(\mc A) = \{B \in \End(V) \mid  B A = A B \quad \forall A \in \mc A \} . 
\end{equation}

It is straight-forward to verify the following well-known properties of $E_R$:
\begin{lemma}[Properties of $E_R$]\label{lem:propE}
Let $R$ be a unitary representation of a subgroup $ G \subseteq \U(d)$.
	Then, for all $A \in \End(V)$ and $B \in \Comm(R(G))$, the map $E_R$ (defined in Eq.~\eqref{eq:E_R}) fulfils 
\begin{align}
\Tr(E_R(A)) =& \Tr(A),\label{lem:propE:trace} \\
E_R(A B) =& E_R(A) B,  \label{lem:propE:modmorph} \\
E_R(A) \in & \Comm(R(G)).  \label{lem:propE:comm}
\end{align}
\end{lemma}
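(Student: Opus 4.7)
The proof is a standard unpacking of the definition of $E_R$ as an average over the group together with the unitarity of $R$ and invariance of the Haar measure $\mu$; none of the three properties requires deep machinery. I would simply verify each of the three displayed identities in turn, exploiting the fact that for a unitary representation $R(g)^\dagger = R(g^{-1})$ and that $\mu$ is a left- and right-invariant probability measure.

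For the trace identity, the plan is to pull the trace inside the integral (allowed because the integrand is a bounded operator-valued function on a compact group), use cyclicity to write $\Tr(R(g) A R(g)^\dagger) = \Tr(A)$ independent of $g$, and then use $\int_G d\mu(g) = 1$. For the module-morphism property, I would first observe that $B \in \Comm(R(G))$ gives $R(g) B = B R(g)$, which upon multiplying by $R(g)^\dagger$ on both sides and using unitarity yields the equivalent statement $B R(g)^\dagger = R(g)^\dagger B$. Substituting this into $E_R(AB) = \int R(g) A B R(g)^\dagger \, d\mu(g)$ lets me pull $B$ past $R(g)^\dagger$ and then out of the integral, giving $E_R(A) B$. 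For the commutant property, I would compute $R(h) E_R(A)$ by absorbing $R(h)$ into the integrand to get $\int R(hg) A R(g)^\dagger d\mu(g)$, then change variables $g' = hg$ (legitimate by left invariance of $\mu$), which turns $R(g)^\dagger$ into $R(g')^\dagger R(h)$, yielding $E_R(A) R(h)$; since $h \in G$ is arbitrary, $E_R(A) \in \Comm(R(G))$.

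There is no real obstacle here: the only small subtlety is the manipulation that turns $B \in \Comm(R(G))$ into $B R(g)^\dagger = R(g)^\dagger B$ (which uses unitarity in an essential way — for a non-unitary representation the commutant statement would not suffice to move $B$ past $R(g)^\dagger$), and the change of variables in the third part, which requires explicitly citing invariance of $\mu$. Both points warrant a brief sentence but no computation. The proof should fit in well under half a page.
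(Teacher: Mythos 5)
Your proof is correct and is precisely the standard verification that the paper itself omits (it states the lemma as ``straight-forward to verify''): cyclicity of the trace plus normalisation of $\mu$ for \eqref{lem:propE:trace}, commutation of $B$ with $R(g)\ad$ for \eqref{lem:propE:modmorph}, and left-invariance of the Haar measure for \eqref{lem:propE:comm}. One minor remark: unitarity is not strictly essential for moving $B$ past $R(g)\ad$, since $B$ commuting with the invertible operator $R(g)$ already implies it commutes with $R(g)^{-1}=R(g^{-1})$, which equals $R(g)\ad$ here; this does not affect the validity of your argument.
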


The last statement of Lemma~\ref{lem:propE} implies that $E_{\Delta_{\U(d)}}^n(A)$ is in the commutant of $\Delta_{\U(d)}^n$ for all $A \in \End(V)$. Using the decomposition of Theorem~\ref{thm:SchurWeyl} and Schur's Lemma we therefore conclude that $E_{\Delta_{\U(d)}^n}(A)$ acts as the identity on the Weyl-modules, 
\begin{equation}
	E_{\Delta_{\U(d)}^n}(A) = \sum_{\lambda \vdash n, l(\lambda) \leq d} \Id_{D_\lambda} \otimes E_\lambda
\end{equation} 
with $E_\lambda  \in \End(S_\lambda)$. In general, the direct sum of  endomorphisms acting on the irreducible representations of a group is isomorphic to the group ring which consists of formal (complex) linear  combinations of the group elements \cite[Propositon~3.29]{FouHar91}.
We denote the group ring of $S_n$ by $\CC[S_n]$. 

To derive an explicit expression of the coefficient of the expansion of $E_{\Delta_{\U(d)}^n}(A)$  in $\CC[S_n]$, we introduce the map $\Phi: \End(V) \to \End(V)$
\begin{equation}\label{eq:Phi}
	\Phi(A) = \sum_{\sigma \in S_n } \Tr(A \pi^d_{S_n}(\sigma^{-1}))\pi^d_{S_n}(\sigma). 
\end{equation}
We will make use of the following properties of the map $\Phi$. 
\begin{lemma}[Properties of $\Phi$] \label{lem:propPhi}
For all $A \in \End(V)$ and $B \in \Comm(\Delta_{\U(d)}^n)$
\begin{align}
\Phi(A) =& \Phi(E_{\Delta_{\U(d)}^n}(A)), \label{lem:propPhi:E} \\
\Phi(B) =& B\Phi(\Id), \label{lem:propPhi:Bimod}\\
\Phi(\Id)^{-1} =& \frac{1}{n!} \sum_{\lambda \vdash n, l(\lambda) \leq d} \frac{d_\lambda}{D_\lambda} P_\lambda.\label{lem:propPhi:Id}
\end{align}
\end{lemma}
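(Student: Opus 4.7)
The plan is to prove the three identities separately, each exploiting a different facet of the Schur-Weyl structure on $(\CC^d)^{\otimes n}$. Part \eqref{lem:propPhi:E} exploits that the twirl $E_R$ with $R = \Delta^n_{\U(d)}$ fixes the commutant $\Comm(\Delta^n_{\U(d)})$; part \eqref{lem:propPhi:Bimod} exploits that every commutant element is a linear combination of the $\pi^d_{S_n}(\tau)$; part \eqref{lem:propPhi:Id} is a direct character computation combined with the projector formula \eqref{eq:projasperm}.

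For \eqref{lem:propPhi:E} I would compare the two sides coefficient-wise in the $\{\pi^d_{S_n}(\sigma)\}$ expansion. Schur-Weyl places every $\pi^d_{S_n}(\sigma^{-1})$ into $\Comm(\Delta^n_{\U(d)})$, so the integrand in $\Tr(E_R(A)\pi^d_{S_n}(\sigma^{-1})) = \int \Tr(R(g) A R(g)\ad \pi^d_{S_n}(\sigma^{-1})) \rmd\mu(g)$ simplifies under cyclicity of the trace to $\Tr(A\pi^d_{S_n}(\sigma^{-1}))$. Every coefficient of $\Phi$ is thus preserved, yielding $\Phi(A) = \Phi(E_R(A))$.

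For \eqref{lem:propPhi:Bimod} the double commutant theorem (applied to Theorem~\ref{thm:SchurWeyl}) identifies $\Comm(\Delta^n_{\U(d)})$ with the linear span of $\{\pi^d_{S_n}(\tau)\}_{\tau \in S_n}$, so one can write $B = \sum_\tau b_\tau \pi^d_{S_n}(\tau)$ for some coefficients $b_\tau \in \CC$. Substituting this into $B\,\Phi(\Id)$ and reindexing the resulting double sum by $\rho = \tau\sigma$ converts it back into $\sum_\rho \Tr(B\pi^d_{S_n}(\rho^{-1}))\pi^d_{S_n}(\rho) = \Phi(B)$, which is a routine manipulation rather than a genuine obstacle.

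The main work lies in \eqref{lem:propPhi:Id}. Here I would start from $\Tr(\pi^d_{S_n}(\sigma^{-1})) = \sum_\lambda D_\lambda \chi^\lambda(\sigma^{-1})$, obtained from Theorem~\ref{thm:SchurWeyl} by tracing over each isotypic block and using $\Tr(\Id_{W_\lambda}) = D_\lambda$. Substituting this into the definition of $\Phi(\Id)$ gives
\begin{equation*}
\Phi(\Id) = \sum_{\lambda \vdash n,\ l(\lambda)\le d} D_\lambda \sum_{\sigma \in S_n} \chi^\lambda(\sigma^{-1})\pi^d_{S_n}(\sigma).
\end{equation*}
Because $S_n$-characters are real-valued, the inner sum coincides with $\sum_\sigma \chi^\lambda(\sigma)\pi^d_{S_n}(\sigma) = (n!/d_\lambda)P_\lambda$ by \eqref{eq:projasperm}, so $\Phi(\Id) = n!\sum_\lambda (D_\lambda/d_\lambda)P_\lambda$. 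Since the $P_\lambda$ are mutually orthogonal projectors that sum to the identity on $(\CC^d)^{\otimes n}$, this already is a spectral decomposition and can be inverted termwise to yield the claimed expression for $\Phi(\Id)^{-1}$.
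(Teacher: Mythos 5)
Your proposal is correct and follows essentially the same route as the paper's proof: part \eqref{lem:propPhi:E} via the trace-invariance of the twirl against commutant elements, part \eqref{lem:propPhi:Bimod} by reducing to $B = \pi^d_{S_n}(\tau)$ (via the group-ring/double-commutant identification) and reindexing the sum, and part \eqref{lem:propPhi:Id} by the Schur--Weyl character decomposition of $\Tr(\pi^d_{S_n}(\sigma^{-1}))$ combined with the projector formula \eqref{eq:projasperm} and termwise inversion of the resulting spectral decomposition. The only cosmetic differences are that you unpack the properties of $E_R$ directly rather than citing Lemma~\ref{lem:propE}, and you make explicit the realness of $S_n$-characters, which the paper leaves implicit.
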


\begin{proof}{}
	\begin{enumerate}
		\item Since $\pi^d_{S_n}(\sigma^{-1})$ is in $\Comm(\Delta^n_{\U(d)})$ for all $\sigma \in S_n$, we can apply  Lemma~\ref{lem:propE}  to get
		\begin{equation}
		\begin{aligned}
			\Tr(E_{\Delta^n_{\U(d)}}(A) \pi^d_{S_n}(\sigma^{-1}))
			 &=  
			 \Tr(E_{\Delta^n_{\U(d)}}(A \pi^d_{S_n}(\sigma^{-1}))) 
			 \\	
			 &=  \Tr(A  \pi^d_{S_n}(\sigma^{-1})) 
			\, ,
		\end{aligned}
		\end{equation}
		which establishes the first statement.
		\item Since the commutant is isomorphic to the group ring, it suffices to proof the statement for all $B = \pi^d_{S_n}(\tau)$ with $\tau \in S_n$. In this case, using the cyclicity of the trace for the first equality, we find
		\begin{equation}
		\begin{aligned}
			\Phi(\pi^d_{S_n}(\tau)) &=   \sum_{\sigma \in S_n } \Tr(\pi^d_{S_n}(\sigma^{-1})\pi^d_{S_n}(\tau) )\pi^d_{S_n}(\sigma) \\
			&= \sum_{\sigma \in S_n} \Tr(\pi^d_{S_n}( \tau\sigma^{-1})) \pi^d_{S_n}(\sigma) \\
			&=  \sum_{\sigma \in S_n} \Tr(\pi^d_{S_n}(\sigma^{-1})) \pi^d_{S_n}( \sigma \tau) \\
			&= \pi^d_{S_n}( \tau)\sum_{\sigma \in S_n} \Tr(\pi^d_{S_n}(\sigma^{-1})) \pi^d_{S_n}( \sigma).
		\end{aligned}
		\end{equation}
		Here we have used that $\pi^d_{S_n}(\tau \sigma) = \pi^d_{S_n}(\sigma) \pi^d_{S_n}(\tau)$ for all $\tau, \sigma \in S_n$. 
		\item 
		Using Theorem~\ref{thm:SchurWeyl} (Schur-Weyl duality), we can rewrite $\Phi(\Id)$ as 
		\begin{equation}
		\begin{aligned}
			\Phi(\Id) &=\sum_{\sigma \in S_n} \Tr( \pi^d_{S_n}(\sigma^{-1})) \pi^d_{S_n}(\sigma) \\
				&= \sum_{\sigma \in S_n} \sum_{\lambda \vdash n, l(\lambda) \leq d} D_\lambda \Tr(\pi_\lambda(\sigma^{-1}))\pi^d_{S_n}(\sigma)   \\
				&=   \sum_{\lambda \vdash n, l(\lambda) \leq d}  D_\lambda  \sum_{\sigma \in S_n}\chi^\lambda(\sigma) \pi^d_{S_n}(\sigma) .
		\end{aligned}
		\end{equation}
		 The explicit expression \eqref{eq:projasperm} for the projectors identifies 
		 $\Phi(\Id)$ as 
		\begin{equation}
		 	\Phi(\Id) = n! \sum_{\lambda \vdash n, l(\lambda) \leq d} \frac{D_\lambda}{d_\lambda} P_\lambda.
		 \end{equation} 
		 Since the $\{P_\lambda\}$ are a complete set of orthogonal projectors, the inverse of $\Phi(\Id)$ is given by
		 \begin{equation}
		 	\Phi(\Id)^{-1} = \frac{1}{n!} \sum_{\lambda \vdash n, l(\lambda) \leq d} \frac{d_\lambda}{D_\lambda} P_\lambda.
		 \end{equation}
	\end{enumerate}
\end{proof}

We are now in position to give a concise proof of Theorem~\ref{thm:intU}:
\begin{proof}[Proof of Theorem~\ref{thm:intU}]
	From Eqns.~\eqref{lem:propPhi:E} and \eqref{lem:propPhi:Bimod} we conclude $\Phi(A) = \Phi(E_{\Delta^n_{\U(d)}} (A)) = E_{\Delta^n_{\U(d)}}(A) \Phi(\Id)$ and, thus, 
	$ E_{\Delta^n_{\U(d)}}(A) = \Phi(A) \Phi(\Id)^{-1}$. 
	Inserting the expression \eqref{lem:propPhi:Id} for $\Phi(\Id)^{-1}$ and the definition \eqref{eq:Phi} of $\Phi$ yields the expression of the theorem. 
\end{proof}

\subsubsection{Integration over the Clifford group}
We now turn our attention to the Clifford group and aim at an analogous result to Theorem~\ref{thm:intU} for $E_{\Delta^4_{\Cl(d)}}(A)$ with $A\in \End(V)$. As the former result for the unitary group, the result  for the Clifford group heavily relies on a characterisation of the commutant of $\Delta^4_{\Cl(d)}$. The required results for the Clifford group were derived in Ref.\ \cite{ZhuKueGra16} and apply to multi-qubit dimensions $d = 2^n$.
This paper introduces the orthogonal projection
\begin{equation}\label{eq:Qdef}
	Q = \frac{1}{d^2} \sum_{k=1}^{d^2} W_k^{\otimes 4}
\end{equation}
where $W_1,\ldots,W_{d^2} \in \End \left( \mathbb{C}^d \right)$ are the multi-qubit Pauli matrices. In fact, the $d^2$-dimensional range of $Q$ forms a particular stabiliser code. We denote by $Q^\perp = \Id - Q$ the orthogonal projection onto the complement of this stabiliser code. The orthogonal projection $Q$ commutes with every $\pi^d_{S_4}(\sigma)$, $\sigma \in S_4$. Thus, $Q$ acts trivially on the Specht modules $S_\lambda$ in the Schur-Weyl decomposition \eqref{eq:SchurWeylDecomposition}.
Following the notation conventions from Ref.~\cite{ZhuKueGra16}, we
denote the subspace of the Weyl module $W_\lambda$ that intersects with the range of $Q$ by $W^+_\lambda$ and its dimension as $D^+_\lambda$. Analogously, the orthogonal complement of $W^+_\lambda$ shall be $W^-_\lambda$ with dimension $D^-_\lambda$.  
We are now ready to state the main result of this section. 

\begin{theorem}[Integration over the Clifford group $\Cl(d)$]\label{thm:intCl}
	Let 
	 $A \in \mc \End(V)$. Then, 
\begin{equation}\begin{split}
	E_{\Delta^4_{\Cl(d)}}(A) 
	&= \frac{1}{4!} \sum_{\lambda \vdash 4, l(\lambda) \leq d} d_\lambda  \sum_{\sigma \in S_4} 
	\\
	&\times \bigg[ \frac{1}{D^+_\lambda} \Tr(A Q \pi^d_{S_4}(\sigma^{-1})) Q  \\
	&  + \frac{1}{D^-_\lambda}\Tr(A Q^\perp \pi^d_{S_4}(\sigma^{-1}))Q^\perp 	\bigg] \\
	&\times  \pi^d_{S_4}(\sigma)P_\lambda.
\end{split}\end{equation}
\end{theorem}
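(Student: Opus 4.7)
The plan is to imitate the proof of Theorem~\ref{thm:intU}, now replacing the commutant of $\Delta^n_{\U(d)}$ by the (larger) commutant of $\Delta^4_{\Cl(d)}$. The characterisation of Ref.~\cite{ZhuKueGra16} states that for multi-qubit dimensions $\Comm(\Delta^4_{\Cl(d)})$ is the algebra generated by $\pi^d_{S_4}(S_4)$ together with the stabiliser projector $Q$ of Eq.~\eqref{eq:Qdef}. Since $Q$ commutes with every $\pi^d_{S_4}(\sigma)$ and satisfies $Q^2 = Q$, every element of this algebra is a linear combination of $\{Q\pi^d_{S_4}(\sigma),\, Q^\perp\pi^d_{S_4}(\sigma)\}_{\sigma\in S_4}$. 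By Eq.~\eqref{lem:propE:comm}, the twirl $E_{\Delta^4_{\Cl(d)}}(A)$ therefore lies in this span for every $A\in \End(V)$.

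Mirroring this commutant structure, I would introduce the Clifford analogue of $\Phi$,
\begin{equation*}
\Phi_{\Cl}(A) \coloneqq \sum_{\sigma\in S_4}\Bigl[\Tr\bigl(A Q \pi^d_{S_4}(\sigma^{-1})\bigr)\, Q + \Tr\bigl(A Q^\perp \pi^d_{S_4}(\sigma^{-1})\bigr)\, Q^\perp\Bigr]\pi^d_{S_4}(\sigma),
\end{equation*}
and then prove three analogues of Lemma~\ref{lem:propPhi}: (i) $\Phi_{\Cl}(A) = \Phi_{\Cl}\bigl(E_{\Delta^4_{\Cl(d)}}(A)\bigr)$, which follows from Eqs.~\eqref{lem:propE:trace}--\eqref{lem:propE:modmorph} because both $Q\pi^d_{S_4}(\sigma^{-1})$ and $Q^\perp\pi^d_{S_4}(\sigma^{-1})$ lie in the commutant; (ii) the bimodule identity $\Phi_{\Cl}(B) = B\,\Phi_{\Cl}(\Id)$ for every $B\in\Comm(\Delta^4_{\Cl(d)})$, which by linearity reduces to checking it on $B = Q\pi^d_{S_4}(\tau)$ and $B = Q^\perp\pi^d_{S_4}(\tau)$ and then follows from the same cyclic-trace change of summation variable as in the proof of Eq.~\eqref{lem:propPhi:Bimod}, supplemented by $Q^2 = Q$, $(Q^\perp)^2 = Q^\perp$ and $QQ^\perp = 0$; and (iii) an explicit form of $\Phi_{\Cl}(\Id)^{-1}$.

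For step (iii), the operator $Q\pi^d_{S_4}(\sigma^{-1})$ acts as $\Id_{D_\lambda^+}\otimes \pi^\lambda_{S_4}(\sigma^{-1})$ on $W_\lambda^+\otimes S_\lambda$ and vanishes elsewhere, so $\Tr\bigl(Q\pi^d_{S_4}(\sigma^{-1})\bigr) = \sum_\lambda D_\lambda^+\chi^\lambda(\sigma^{-1})$, and analogously for $Q^\perp$ with $D_\lambda^-$. Using reality of the symmetric-group characters and Eq.~\eqref{eq:projasperm}, this gives
\begin{equation*}
\Phi_{\Cl}(\Id) = 4!\sum_{\lambda\vdash 4,\, l(\lambda)\leq d}\Bigl[\tfrac{D_\lambda^+}{d_\lambda}Q + \tfrac{D_\lambda^-}{d_\lambda}Q^\perp\Bigr] P_\lambda,
\end{equation*}
which is diagonal in $V = \bigoplus_\lambda(W_\lambda^+\oplus W_\lambda^-)\otimes S_\lambda$ and hence trivially invertible on each block. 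Combining (i) and (ii) yields $E_{\Delta^4_{\Cl(d)}}(A) = \Phi_{\Cl}(A)\,\Phi_{\Cl}(\Id)^{-1}$, and multiplying out while using $QQ^\perp = 0$ and the commutation of $\pi^d_{S_4}(\sigma)$ with $Q$, $Q^\perp$ and $P_\lambda$ reproduces the claimed formula. The main conceptual obstacle is step (ii): one has to verify that the bimodule property continues to hold on \emph{both} branches of the enlarged commutant, which is where the algebraic interplay of $Q$, $Q^\perp$ and $\pi^d_{S_4}$ enters crucially; all other steps run parallel to the unitary-group argument.
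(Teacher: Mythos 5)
Your proposal is correct and follows essentially the same route as the paper: your map $\Phi_{\Cl}$ coincides with the paper's $\tilde\Phi(A)=\Phi(AQ)Q+\Phi(AQ^{\perp})Q^{\perp}$ (since $Q$ commutes with the $\pi^d_{S_4}(\sigma)$), and your three properties (i)--(iii) together with the final identity $E_{\Delta^4_{\Cl(d)}}(A)=\Phi_{\Cl}(A)\,\Phi_{\Cl}(\Id)^{-1}$ are exactly Lemma~\ref{lem:propTildePhi} and the paper's concluding step. The only cosmetic difference is that the paper phrases the commutant structure via the decomposition \eqref{eq:cliffcommform} from Theorem~\ref{thm:SchurWeylClifford} rather than as ``the algebra generated by $\pi^d_{S_4}(S_4)$ and $Q$,'' but these are equivalent.
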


To set-up the proof we summarise the necessary results of Ref.~\cite{ZhuKueGra16} in the following theorem:
\begin{theorem}[Representation theory of the Clifford group \cite{ZhuKueGra16}]
\label{thm:SchurWeylClifford}
	Whenever $W^\pm_\lambda$ are non-trivial, the action of $\Cl(d) \times S_4$ on $(\CC^d)^{\otimes 4}$ is multiplicity free and $(\CC^d)^{\otimes 4}$ decomposes into irreducible components
	\begin{equation}\label{eq:SchurWeylDecompositionClifford}
		(\CC^d)^{\otimes 4} \cong \bigoplus_{\lambda \vdash 4, l(\lambda) \leq d} (W^+_\lambda \otimes S_\lambda) \oplus (W^-_\lambda \otimes S_\lambda),
	\end{equation}
	on which $\Cl(d) \times S_4$ acts as $\Delta^\lambda_{\Cl(d)} \otimes \pi_{S_4}^\lambda$. 

	The dimensions of $W^+_\lambda$ are of polynomials in $d$ of degree $4$ and the dimensions of $W^-_\lambda$ are either vanishing or polynomials in $d$ of degree $2$. 
\end{theorem}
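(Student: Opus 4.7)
The plan is to refine the Schur--Weyl decomposition (Theorem~\ref{thm:SchurWeyl}) by explicitly identifying the commutant of $\Delta^4_{\Cl(d)}$ and then invoking the double-commutant theorem. Since $\Cl(d) \subset \U(d)$, the inclusion $\Comm(\Delta^4_{\U(d)}) \subset \Comm(\Delta^4_{\Cl(d)})$ is automatic, and so $\pi^d_{S_4}(\CC[S_4])$ already sits inside the Clifford commutant. I would next verify that the operator $Q$ defined in~\eqref{eq:Qdef} provides a genuinely new element. This uses the defining property of the Clifford group that Pauli conjugation $C W_k C^\dagger = \phi_{C,k} W_{\pi_C(k)}$ is a signed permutation with phase $\phi_{C,k} \in \{\pm 1,\pm \i\}$; the fourth tensor power kills the phase ($\phi_{C,k}^4 = 1$), and the residual permutation $\pi_C$ of Pauli labels leaves the uniform sum $Q$ invariant.

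The main obstacle is to prove that $\pi^d_{S_4}(\CC[S_4])$ and $Q$ \emph{together} generate the full commutant. My approach would be a dimension count. Character orthogonality gives
\begin{equation*}
\dim \Comm(\Delta^4_{\Cl(d)}) = \frac{1}{|\Cl(d)|} \sum_{C \in \Cl(d)} |\Tr(C)|^8 ,
\end{equation*}
a fourth-moment quantity that the detailed representation theory of the multi-qubit Clifford group evaluates to $2\sum_{\lambda \vdash 4} d_\lambda^2 = 2 \cdot 4! = 48$ once $n$ is large enough that no $W^\pm_\lambda$ collapses (the ``non-trivial'' caveat in the theorem statement handles small $d$ by allowing some blocks to vanish). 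On the other hand, $Q$ is manifestly $S_4$-invariant because each summand $W_k^{\otimes 4}$ is symmetric under permutations of the four tensor factors. Consequently the unital $\ast$-algebra generated by $Q$ and $\pi^d_{S_4}(\CC[S_4])$ is linearly spanned by $\{Q\, \pi^d_{S_4}(\sigma),\, Q^\perp \pi^d_{S_4}(\sigma)\}_{\sigma \in S_4}$ and has dimension at most $2 \cdot 4! = 48$. Matching upper and lower bounds forces equality and pins down the commutant.

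Given this characterisation, the decomposition~\eqref{eq:SchurWeylDecompositionClifford} follows from the double-commutant theorem together with Schur's lemma. Since $Q$ commutes with every $\pi^d_{S_4}(\sigma)$ and with $\Delta^4_{\U(d)}$, Schur forces $Q$ to act on each $\U(d) \times S_4$-isotypic block $W_\lambda \otimes S_\lambda$ as an orthogonal projector $P^+_\lambda \otimes \Id_{S_\lambda}$, whose range and kernel define $W^+_\lambda$ and $W^-_\lambda$ respectively. The refined sub-blocks $W^\pm_\lambda \otimes S_\lambda$ inherit irreducibility under $\Cl(d) \times S_4$ from the dimension match just established, and the $S_4$-factor remains $\pi^\lambda_{S_4}$ since $Q$ acts trivially on $S_\lambda$. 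Multiplicity-freeness follows because the number of non-trivial sub-blocks coincides with the number of irreducible summands read off from the commutant dimension.

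Finally, the dimension statements reduce to explicit rank counts. Since $Q$ is a projector with $\Tr(Q) = d^{-2} \sum_k \Tr(W_k)^4 = d^2$ (only the identity Pauli contributes to the single-factor trace), one of the two sums $\sum_{\lambda \vdash 4} d_\lambda D^\pm_\lambda$ equals $d^2$ and the other equals $d^4 - d^2$. Combined with the hook-content formula making each $D_\lambda = D^+_\lambda + D^-_\lambda$ a degree-$4$ polynomial in $d$, this forces the two families of dimensions to match the degrees stated in the theorem: the component summing to $d^2$ is of degree $2$ (or vanishing when the block collapses), and its complement inherits the leading degree-$4$ behaviour of $D_\lambda$. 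Explicit polynomial expressions follow from $d_\lambda D^+_\lambda = \Tr(Q P_\lambda)$ via the character formula~\eqref{eq:projasperm} and the cycle-factorisation $\Tr(W_k^{\otimes 4} \pi^d_{S_4}(\sigma)) = \prod_c \Tr(W_k^{|c|})$ with the product ranging over the cycles $c$ of $\sigma$.
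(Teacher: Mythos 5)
The paper itself offers no proof of this theorem: it is explicitly a summary of results imported from Ref.~\cite{ZhuKueGra16}, so your proposal has to stand on its own, and there it contains a fatal quantitative error at the central step. The commutant dimension $\frac{1}{|\Cl(d)|}\sum_{C\in\Cl(d)}|\Tr(C)|^{8}$ is \emph{not} $2\sum_{\lambda\vdash 4}d_\lambda^2 = 48$; for $n\geq 3$ it equals $30$ (and is smaller still for $n\leq 2$), see \cite{ZhuKueGra16,HelWalWeh16} or the count $\prod_{k=0}^{2}(2^k+1)=30$ in \cite{gross2017schur}. Correspondingly, your spanning set $\{Q\,\pi^d_{S_4}(\sigma),\,Q^\perp\pi^d_{S_4}(\sigma)\}_{\sigma\in S_4}$ is linearly dependent: $QP_\lambda = 0$ \emph{identically in $d$} for $\lambda=(3,1)$ and $\lambda=(2,1,1)$. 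Your own proposed cycle formula detects this: with $f(\sigma) \coloneqq d^{-2}\sum_k\prod_c\Tr(W_k^{|c|})$ one finds $f = d^2,\ d,\ d^2,\ 1,\ d$ on the classes $1^4,\ 2\,1^2,\ 2^2,\ 3\,1,\ 4$, whence $D^+_\lambda = \frac{1}{4!}\sum_\sigma \chi^\lambda(\sigma)f(\sigma)$ vanishes for the two partitions with $d_\lambda=3$, while $D^+_{(4)}=(d+1)(d+2)/6$, $D^+_{(2,2)}=(d^2-1)/3$, $D^+_{(1^4)}=(d-1)(d-2)/6$. So the collapse of blocks is a structural feature valid for \emph{all} $n$, not a small-$d$ degeneracy absorbed by the ``non-trivial'' caveat; your upper bound (at most $48$, in truth $24+1+4+1=30$) and your claimed lower bound ($48$) never match, and nothing is forced. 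Worse, the number you actually need as input, the eighth trace moment $=30$, is precisely the hard content of \cite{ZhuKueGra16}: quoting it as a known evaluation, with a value apparently back-derived from the conclusion you hoped for, makes the argument circular where it is not simply false.

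Two secondary points. First, $Q$ does \emph{not} commute with $\Delta^4_{\U(d)}$: by Schur--Weyl duality that would place $Q$ in $\pi^d_{S_4}(\CC[S_4])$ and make the Clifford group a unitary $4$-design, contradicting the entire point. Only invariance under $S_4$ and under $\Delta^4_{\Cl(d)}$ holds; fortunately $S_4$-invariance alone already gives the block form $Q=\bigoplus_{\lambda} P^+_\lambda\otimes\Id_{S_\lambda}$ on the $S_4$-isotypic components, so this slip is repairable. Second, your final trace bookkeeping is sound: from $\sum_\lambda d_\lambda D^+_\lambda = \Tr Q = d^2$ the blocks of $W_\lambda$ meeting $\ran Q$ have dimensions of degree at most $2$ in $d$ (or vanish), while their complements carry the degree-$4$ growth. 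Note that, with the paper's definition of $W^+_\lambda$ as the part of $W_\lambda$ meeting the $d^2$-dimensional range of $Q$, this is the \emph{opposite} of the printed final sentence of the theorem; your orientation is the internally consistent one, since the paper's own fourth-moment bound (Lemma~\ref{lem:fourthMomentCliff}) needs $D^+_\lambda \sim d^2$ and $D^-_\lambda \sim d^4$ to reach the required $d^{-4}$ scaling. The printed $\pm$ labels in that sentence thus appear interchanged --- a discrepancy your computation would have exposed, had the commutant step been carried out with the correct dimension.
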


From Theorem~\ref{thm:SchurWeylClifford} we learn that an element of the commutant of the diagonal action of the Clifford group $\Delta^4_{\Cl(d)}$ can be written in the form 
\begin{equation}\label{eq:cliffcommform}
	B = Q\bigoplus_{\lambda \vdash 4, l(\lambda) \leq d} (\Id_{D_\lambda} \otimes B^+_\lambda) + Q^\perp \bigoplus_{\lambda \vdash 4, l(\lambda) \leq d} (\Id_{D_\lambda} \otimes B^-_\lambda),
\end{equation}
where $B^\pm_\lambda \in \End(S_\lambda)$ are linear operators acting on the Specht modules $S_\lambda$. 

To expand elements of $\Comm(\Delta^4_{\Cl(d)})$, we  define the map  $\tilde\Phi: \End(V) \to \End(V)$, $\tilde\Phi(A) = \Phi(AQ)Q + \Phi(AQ^\perp)Q^\perp$ with $\Phi$ from \eqref{eq:Phi}. The map $\tilde\Phi$ has properties comparable to the map $\Phi$, but is adapted to the diagonal representation of the Clifford group.
\begin{lemma} \label{lem:propTildePhi}
For all $A \in \End(V)$ and $B \in \Comm(\Delta_{\Cl(d)}^4)$
\begin{align}
\tilde\Phi(A) =& \tilde\Phi(E_{\Delta_{\Cl(d)}^4}(A))	\label{lem:propTildePhi:E}, 	\\
\tilde\Phi(B) =& B\tilde\Phi(\Id), \label{lem:propTildePhi:Bimod}\\
\tilde\Phi(\Id)^{-1} =& \frac{1}{4!} \sum_{\lambda\vdash 4, l(\lambda) \leq d} d_\lambda P_\lambda \left[ \frac{1}{D_\lambda^+} Q + \frac{1}{D_\lambda^-} Q^\perp \right].\label{lem:propTildePhi:Id}
\end{align}

\end{lemma}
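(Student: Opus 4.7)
The plan is to mirror the three-part proof of Lemma~\ref{lem:propPhi} for the Clifford analog $\tilde\Phi$, using the additional orthogonal decomposition $\Id = Q + Q^\perp$ that refines the commutant of $\Delta^4_{\U(d)}$ into that of $\Delta^4_{\Cl(d)}$ via Theorem~\ref{thm:SchurWeylClifford}. Two ingredients will be used throughout. First, $Q$ (hence $Q^\perp$) lies in $\Comm(\Delta^4_{\Cl(d)})$ and commutes with every $\pi^d_{S_4}(\sigma)$: each summand $W_k^{\otimes 4}$ in \eqref{eq:Qdef} is symmetric under permutations of tensor factors, and Cliffords permute Paulis up to signs that cancel in the fourth tensor power, which is the starting point of \cite{ZhuKueGra16}. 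Second, $E_{\Delta^4_{\Cl(d)}}$ inherits the trace-preservation and right-module properties of Lemma~\ref{lem:propE} with respect to its own commutant.

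For \eqref{lem:propTildePhi:E}, I expand $\Phi(E_{\Delta^4_{\Cl(d)}}(A)Q) = \sum_\sigma \Tr(E_{\Delta^4_{\Cl(d)}}(A)\,Q\pi^d_{S_4}(\sigma^{-1}))\pi^d_{S_4}(\sigma)$; since $Q\pi^d_{S_4}(\sigma^{-1})$ lies in $\Comm(\Delta^4_{\Cl(d)})$, Lemma~\ref{lem:propE} absorbs it into the argument of $E_{\Delta^4_{\Cl(d)}}$ and then strips it off by trace preservation, giving $\Phi(E_{\Delta^4_{\Cl(d)}}(A)Q) = \Phi(AQ)$; the $Q^\perp$ half is identical. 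For \eqref{lem:propTildePhi:Bimod}, Theorem~\ref{thm:SchurWeylClifford} together with Schur-Weyl duality on each Specht factor $S_\lambda$ lets me parametrise every $B \in \Comm(\Delta^4_{\Cl(d)})$ as $B = QA_+ + Q^\perp A_-$ with $A_\pm$ in the image of $\CC[S_4]$ under $\pi^d_{S_4}$. Because $Q$ and $Q^\perp$ commute with $A_\pm$ and $QQ^\perp=0$, one has $BQ = QA_+$ and $BQ^\perp = Q^\perp A_-$, and the same change-of-summation used in the proof of Lemma~\ref{lem:propPhi} yields $\Phi(QA_\pm) = \Phi(Q)A_\pm$ and $\Phi(Q^\perp A_\pm) = \Phi(Q^\perp)A_\pm$. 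Both $\Phi(Q)$ and $\Phi(Q^\perp)$ are central in $\pi^d_{S_4}(\CC[S_4])$ because their coefficients $\Tr(Q\pi^d_{S_4}(\sigma^{-1}))$ and $\Tr(Q^\perp\pi^d_{S_4}(\sigma^{-1}))$ are class functions of $\sigma$ (as $Q,Q^\perp$ commute with $\pi^d_{S_4}$), so commuting $A_\pm$ across them and using $Q^2=Q$, $(Q^\perp)^2=Q^\perp$ assembles $\tilde\Phi(B) = QA_+\Phi(Q) + Q^\perp A_-\Phi(Q^\perp) = B\tilde\Phi(\Id)$.

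For \eqref{lem:propTildePhi:Id}, Theorem~\ref{thm:SchurWeylClifford} applied to the range of $Q$ yields $\Tr(Q\pi^d_{S_4}(\sigma^{-1})) = \sum_{\lambda \vdash 4,\ l(\lambda)\leq d} D^+_\lambda\,\chi^\lambda(\sigma)$, and substituting into the definition \eqref{eq:Phi} of $\Phi$ and invoking \eqref{eq:projasperm} turns this into $\Phi(Q) = 4!\sum_\lambda (D^+_\lambda/d_\lambda)P_\lambda$; the identical computation gives $\Phi(Q^\perp) = 4!\sum_\lambda (D^-_\lambda/d_\lambda)P_\lambda$ (summing only over $\lambda$ with $D^-_\lambda>0$, so no $1/D^-_\lambda$ singularity arises). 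Since $P_\lambda$ commutes with both $Q$ and $Q^\perp$, the family $\{P_\lambda Q,\, P_\lambda Q^\perp\}_\lambda$ is a complete set of mutually orthogonal nonzero projectors, so $\tilde\Phi(\Id) = \Phi(Q)Q + \Phi(Q^\perp)Q^\perp$ is diagonal in this basis and inverting block-by-block gives exactly \eqref{lem:propTildePhi:Id}. The principal obstacle I anticipate is the parametrisation step in the proof of \eqref{lem:propTildePhi:Bimod}, namely confirming that $\Comm(\Delta^4_{\Cl(d)}) = Q\cdot\pi^d_{S_4}(\CC[S_4]) + Q^\perp\cdot\pi^d_{S_4}(\CC[S_4])$ via the double commutant theorem applied to Theorem~\ref{thm:SchurWeylClifford}; once this is in place, every remaining step is a direct transfer of the unitary-group arguments with the bookkeeping of $Q,Q^\perp$ inserted.
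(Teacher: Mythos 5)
Your proposal is correct and follows essentially the same route as the paper: part (i) via absorbing $Q\pi^d_{S_4}(\sigma^{-1})\in\Comm(\Delta^4_{\Cl(d)})$ into $E_{\Delta^4_{\Cl(d)}}$ and using trace preservation, part (ii) via the commutant parametrisation $B=QB_1+Q^\perp B_2$ from Eq.~\eqref{eq:cliffcommform} together with the reindexing trick of Lemma~\ref{lem:propPhi}, and part (iii) via the character expansion of $\Tr(Q\pi^d_{S_4}(\sigma^{-1}))$, Eq.~\eqref{eq:projasperm}, and block-wise inversion on the orthogonal projectors $P_\lambda Q$, $P_\lambda Q^\perp$. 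Your explicit remarks on the centrality of $\Phi(Q)$ in $\pi^d_{S_4}(\CC[S_4])$ and on omitting vanishing $D^-_\lambda$ blocks are harmless refinements of bookkeeping the paper leaves implicit.
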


\begin{proof}
\quad
	\begin{enumerate}
		\item Since $Q\pi^d_{S_4}(\sigma^{-1})$ and $Q^\perp \pi^d_{S_4}(\sigma^{-1})$  are in $\Comm(\Delta^4_{\Cl(d)})$ for all $\sigma \in S_4$, we can again apply  Lemma~\ref{lem:propE}  to get
		$ \Tr(E_{\Delta^4_{\Cl(d)}}(A) Q \pi^d_{S_4}(\sigma^{-1})) =  \Tr(E_{\Delta^4_{\Cl(d)}}(A Q \pi^d_{S_4}(\sigma^{-1}))) =  \Tr(A Q \pi^d_{S_4}(\sigma^{-1}))$ and likewise for $Q^\perp$ instead of $Q$. Inserting this in the definition of $\tilde\Phi$ yields the first statement.
		\item From the expansion of elements $B \in \Comm(\Delta_{\Cl(d)}^4)$ in \eqref{eq:cliffcommform}, we conclude that $B$ can be expressed as $B = QB_1 + Q^\perp B_2$, where  $B_1$ and $B_2$ are in the group ring $\CC[S_4]$. 
		Hence, it suffices to show the statement, $\tilde\Phi(B) = B\tilde\Phi(\Id)$, for $B = Q \pi^d_{S_4}(\sigma)$ and $B=Q^\perp \pi^d_{S_4}(\sigma)$. In the first case, we find
		\begin{equation}
		\begin{aligned}
			\tilde\Phi( Q \pi^d_{S_4}(\sigma)) &= \Phi( Q \pi^d_{S_4}(\sigma))Q \\
			&= \Phi( Q \Id)Q \pi^d_{S_4}(\sigma) \\
			&= \tilde\Phi( \Id)Q \pi^d_{S_4}(\sigma) \, ,
		\end{aligned}
		\end{equation}
		where property~\eqref{lem:propE:modmorph} from Lemma~\ref{lem:propE} has been used in the second step.
		The proof of $Q^\perp$ is analogous. 
		\item Using the decomposition \eqref{eq:SchurWeylDecompositionClifford} of Theorem~\ref{thm:SchurWeylClifford}, we can calculate
 		\begin{equation}\begin{split}
 			\tilde\Phi(\Id) = &\sum_{\lambda  \vdash 4, l(\lambda) \leq d} \sum_{\sigma \in S_4}  \chi_{\pi_{S_4}^d} (\sigma^{-1}) \pi_{S_4}^d(\sigma) \\ 
 			&\times \left[ D_\lambda^+
		 	 Q   + D_\lambda^- Q_\lambda^\perp  \right] \\
 	 		&= 4! \sum_{\lambda} \frac{1}{d_\lambda} P_\lambda \left[ D_\lambda^+
 	 		Q   + D_\lambda^- Q^\perp  \right],
		\end{split}\end{equation}
		where the last line follows again from the expression \eqref{eq:projasperm} for the projectors. 
		Inverting this expression yields
		\begin{equation}
			\tilde\Phi(\Id)^{-1} = \frac{1}{4!} \sum_{\lambda} d_\lambda P_\lambda \left[ \frac{1}{D_\lambda^+} Q + \frac{1}{D_\lambda^-} Q^\perp \right].
		\end{equation}
	\end{enumerate}
\end{proof}

With these statements for the Clifford group at hand, we can proceed to prove Theorem~\ref{thm:intCl}.

\begin{proof}[Proof of Theorem~\ref{thm:intCl}]
Eq.~\eqref{lem:propTildePhi:E} in Lemma~\ref{lem:propTildePhi} and \ref{lem:propTildePhi:Bimod} in Lemma~\ref{lem:propTildePhi} can be combined to conclude  $\tilde\Phi(A) = \tilde\Phi(E_{\Delta^4_{\Cl(d)}} (A)) = E_{\Delta^4_{\U(d)}}(A) \tilde\Phi(\Id)$ and, thus, 
	$ E_{\Delta^4_{\Cl(d)}}(A) = \tilde\Phi(A) \tilde\Phi(\Id)^{-1}$. The expression for $\tilde\Phi(\Id)^{-1}$ was derived in Lemma~\ref{lem:propTildePhi}, Eq.~\eqref{lem:propTildePhi:Id}. Together with the definition of $\tilde\Phi$ the expression of the theorem follows after some simplification. 
\end{proof}

\subsection{The second moment}\label{subsec:secondmoment}
The main result of this section is the following expression for the second moment of $S_{\mc T}$ defined in Eq.~\eqref{eq:SmcT}.
We shall use this statement multiple times in the proofs of our main results. 
\begin{lemma}[The $2$-nd moment for $\U(d)$]\label{lem:secondMoment}
Let $\mathcal{T}:H_d \to H_d$ be a map. Then
\begin{equation}
		\begin{aligned}
		&\EE_{U \sim \operatorname{Haar}(\U(d))}[S^2_{\mc T}]  \\
		&= \frac{1}{d^2 -1} \Big\{ 
		d^2 \norm{\mc T}^2 + \Tr(\mathcal{T}(\Id))^2 \\& \quad\quad\quad\quad\quad - \frac{1}{d} \left(\fnorm{\mc T(\Id)}^2 +\fnorm{\mc T\ad(\Id)}^2\right)\Big\},
	\end{aligned}
\end{equation}
for $S_\mc T$ defined in Eq.~\eqref{eq:SmcT}.
\end{lemma}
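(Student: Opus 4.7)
The plan is to lift the second-moment computation to a tensor-network calculation on $(\CC^d)^{\otimes 4}$ via the Choi representation, and then apply the $n=2$ case of Collins' formula (Theorem~\ref{thm:intU}).

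First, using the equivalence $(\mc X, \mc Y) = (J(\mc X), J(\mc Y))$ from Eq.~\eqref{eq:inner_product_choi} and the fact that $J(\mc U) = \tfrac{1}{d}(U\otimes \Id)|\Omega\rangle\langle\Omega|(U\ad\otimes\Id)$ for a unitary channel $\mc U$ (where $|\Omega\rangle = \sum_i|i,i\rangle$ is the unnormalised maximally entangled vector), I would rewrite $S_\mc T = d\,\langle\langle U|J(\mc T)|U\rangle\rangle$ with $|U\rangle\rangle = (U\otimes\Id)|\Omega\rangle$. Squaring then gives
$S_\mc T^2 = d^2\,\Tr\bigl[(J(\mc T)\otimes J(\mc T))\,|U\rangle\rangle\langle\langle U|^{\otimes 2}\bigr]$,
which reduces the task to evaluating the Haar average of $|U\rangle\rangle\langle\langle U|^{\otimes 2}$, an operator on $(\CC^d)^{\otimes 4}$ carrying a copy of $U$ on positions $1$ and $3$ and identities on positions $2$ and $4$.

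Second, I would conjugate by the swap of positions $2$ and $3$ to regroup the two copies of $U$ into a single $U\otimes U$ acting on positions $1,2$. The Haar integral then reduces to applying $E_{\Delta^2_{\U(d)}}$ from Theorem~\ref{thm:intU} to the reshuffled operator $Y$ on positions $1,2$ (with remaining factors on $3,4$). A short computation gives the two partial traces needed to feed into the $n=2$ Weingarten formula, namely $\Tr_{12}Y = \Id_{34}$ and $\Tr_{12}(F_{12}Y) = F_{34}$, and after substituting these and undoing the $2\leftrightarrow 3$ swap I expect
\begin{equation*}
\EE\bigl[|U\rangle\rangle\langle\langle U|^{\otimes 2}\bigr] = \tfrac{1}{d(d^2-1)}\bigl[d\,\Id - F_{13} - F_{24} + d\,F_{13}F_{24}\bigr],
\end{equation*}
where $F_{13}$ swaps the two ``output'' legs and $F_{24}$ the two ``input'' legs of the paired Choi matrices.

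Third, I would contract this expression against $J(\mc T)\otimes J(\mc T)$. With the standard dictionary $\Tr J(\mc T) = \tfrac{1}{d}\Tr\mc T(\Id)$, $\Tr J(\mc T)^2 = \norm{\mc T}^2$, $\Tr_2 J(\mc T) = \tfrac{1}{d}\mc T(\Id)$, and $\Tr_1 J(\mc T) = \tfrac{1}{d}\mc T\ad(\Id)^T$, the four resulting contractions evaluate to $\tfrac{1}{d^2}(\Tr\mc T(\Id))^2$, $\tfrac{1}{d^2}\fnorm{\mc T(\Id)}^2$, $\tfrac{1}{d^2}\fnorm{\mc T\ad(\Id)}^2$, and $\norm{\mc T}^2$, respectively. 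Multiplying by $d^2$ and combining with the overall $\tfrac{1}{d(d^2-1)}$ prefactor collapses to the claimed identity.

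The main obstacle is tensor-index bookkeeping. I need to reshuffle the four tensor positions correctly to expose the $U\otimes U$ structure required by Collins' formula, evaluate the partial traces of the reshuffled $|\Omega\rangle\langle\Omega|^{\otimes 2}$ against $\Id_{12}$ and $F_{12}$, and then identify each of the two surviving swaps with its proper physical contraction. In particular, $F_{13}$ (on output legs) must be paired with $\fnorm{\mc T(\Id)}^2$ and $F_{24}$ (on input legs) with $\fnorm{\mc T\ad(\Id)}^2$; swapping these two roles would spoil the non-unital/non-trace-preserving asymmetry present in the right-hand side of the lemma.
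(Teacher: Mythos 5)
Your proposal is correct, and it rests on the same engine as the paper's proof --- the $n=2$ case of the Schur--Weyl/Collins twirl --- but it is organised differently. The paper first decomposes $\mc T(X)=\sum_i\lambda_i T_iXT_i\ad$ via Choi's theorem, writes $S_{\mc T}=\sum_i\lambda_i|\Tr(U\ad T_i)|^2$, feeds this into the general $k$-th moment formula (Lemma~\ref{lem:generalKthMoment}), and then evaluates four terms $\Gamma_{\Id\Id},\Gamma_{\FF\FF},\Gamma_{\FF\Id},\Gamma_{\Id\FF}$; you instead keep $J(\mc T)$ intact, compute the operator identity
\begin{equation*}
\EE_U\bigl[\,|U\rangle\rangle\langle\langle U|^{\otimes 2}\bigr]=\tfrac{1}{d(d^2-1)}\bigl[\,d\,\Id-\FF_{13}-\FF_{24}+d\,\FF_{13}\FF_{24}\bigr],
\end{equation*}
and contract against $J(\mc T)^{\otimes 2}$. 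I verified your intermediate steps: the partial traces of the reshuffled $|\Omega\rangle\langle\Omega|^{\otimes 2}$ are indeed $\Id$ and the swap, the displayed average is correct, and your four contractions match the paper's four $\Gamma$'s term by term ($\Id\leftrightarrow\Gamma_{\Id\Id}$, $\FF_{13}\leftrightarrow\Gamma_{\Id\FF}$, $\FF_{24}\leftrightarrow\Gamma_{\FF\Id}$, $\FF_{13}\FF_{24}\leftrightarrow\Gamma_{\FF\FF}$), so the prefactors assemble to the claimed identity. What your route buys is the avoidance of the Kraus decomposition and of the intermediate general-$k$ lemma, at the cost of heavier four-leg index bookkeeping; the paper's route buys a formula (Lemma~\ref{lem:generalKthMoment}) that is reused verbatim for the fourth-moment bounds, which is why the authors set it up that way. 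One small remark: the input/output pairing you flag as the ``main obstacle'' is actually harmless for this particular lemma, since $\fnorm{\mc T(\Id)}^2$ and $\fnorm{\mc T\ad(\Id)}^2$ enter the final expression with the same coefficient $-1/d$, so swapping the roles of $\FF_{13}$ and $\FF_{24}$ would not change the result --- though getting it right matters if one wants the individual terms.
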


For trace-annihilating and $\Id$-annihilating maps, one arrives at a much simpler expression:
\begin{corollary}[Expression for trace-annihilating and $\Id$-annihilating maps]\label{lem:secondMomentBound}
Let $\mathcal{T} \in \operatorname{V}_\utpn$ be a map that is trace-annihilating and $\Id$-annihilating. Then the second moment of $S_{\mathcal{T}}$ is
	\begin{equation}
		\EE_{U \sim \operatorname{Haar}(\U(d))}[S^2_{\mc T}] 
		= \frac{d^2}{d^2 -1} 
		\norm{\mc T}^2 .
	\end{equation}
\end{corollary}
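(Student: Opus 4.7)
The plan is to apply Lemma~\ref{lem:secondMoment} directly after translating the two hypotheses --- trace-annihilating and $\Id$-annihilating --- into vanishing statements for $\mathcal{T}(\Id)$ and $\mathcal{T}^\dagger(\Id)$, which are precisely the quantities appearing in the lemma's formula besides $\|\mathcal{T}\|^2$.

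First, I would observe that the hypothesis of being $\Id$-annihilating simply means $\mathcal{T}(\Id) = 0$, which immediately kills two of the four terms in the lemma: both $\Tr(\mathcal{T}(\Id))^2$ and $\fnorm{\mathcal{T}(\Id)}^2$ vanish. Next, I would translate the trace-annihilating condition $\Tr(\mathcal{T}(X)) = 0$ for all $X \in H_d$ into a statement about the adjoint. Using the defining property of $\mathcal{T}^\dagger$ with respect to the Hilbert-Schmidt inner product, together with $\Tr(\mathcal{T}(X)) = (\Id, \mathcal{T}(X)) = (\mathcal{T}^\dagger(\Id), X)$, the fact that this vanishes for every Hermitian $X$ forces $\mathcal{T}^\dagger(\Id) = 0$. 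Consequently $\fnorm{\mathcal{T}^\dagger(\Id)}^2 = 0$, eliminating the final remaining term.

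Substituting these four zero-valued terms into Lemma~\ref{lem:secondMoment} leaves only
\begin{equation*}
\EE_{U \sim \operatorname{Haar}(\U(d))}[S^2_{\mathcal{T}}] = \frac{1}{d^2-1}\cdot d^2\,\|\mathcal{T}\|^2 = \frac{d^2}{d^2-1}\,\|\mathcal{T}\|^2,
\end{equation*}
which is the claimed identity. There is no genuine obstacle here: the corollary is a one-line specialization of the lemma, and the only subtle point worth spelling out is the passage from ``trace-annihilating'' to $\mathcal{T}^\dagger(\Id)=0$ via the adjoint relation. For this reason I would keep the written proof short --- essentially the two implications above followed by the substitution.
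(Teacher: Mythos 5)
Your proposal is correct and matches the paper's proof: both simply specialize Lemma~\ref{lem:secondMoment} by noting that the two annihilation hypotheses force $\mc T(\Id)=0$ and $\mc T\ad(\Id)=0$, killing all terms except $d^2\norm{\mc T}^2$. The only cosmetic difference is which hypothesis you credit for $\Tr(\mc T(\Id))=0$ (the paper derives it from trace-annihilation, you from $\Id$-annihilation), but both derivations are valid.
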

\begin{proof}
	 This follows directly from Lemma~\ref{lem:secondMoment} and the observation that $\mathcal{T}$ being  trace-annihilating translates to $\Tr(\mc T(\Id))) = 0$ and $\fnorm{\mc T\ad(\Id)} = 0$ and  $\mathcal{T}$ being $\Id$-annihilating further requires $\fnorm{\mc T(\Id)} = 0$. 
\end{proof}

Before proving Lemma~\ref{lem:secondMoment}, we derive a general expression for the $k$-th moment of $S_{\mc T}$.  To this end, recall that by Choi's theorem an endomorphism $\mc T$ of $H_d$ (i.e.\ a hermiticity preserving map) can be decomposed as 
\begin{equation}\label{eq:mapdecomposition}
	\mc{T}(X) = \sum_{i = 1}^r \lambda_i T_i X T\ad_i,
\end{equation}
where $\lambda_i \in \RR$ and  $T_1,\ldots,T_r$ are linear operators with unit Frobenius norm.
In this decomposition, the random variable 
$S_{\mc T}$ from Eq.~\eqref{eq:SmcT}, with $\mc U(X) = UXU\ad $ takes the form 
\begin{equation}
	S_{\mc T} = d^2 (\mc T, \mc U) = \sum_{i=1}^r \lambda_i | \Tr(U\ad T_i)|^2 
\end{equation}
and its $k$-th moment can be expressed as follows:
\begin{lemma}[$k$-th moment of $S_{\mc T}$]\label{lem:generalKthMoment}
	For $k \in \mathbb{N}$ and $T_i$ defined by Eq.~\eqref{eq:mapdecomposition}
	we have
	\begin{equation}
	\begin{aligned}
		&\EE_{U \sim \operatorname{Haar}(\U(d))}[S^k_{\mc T}]  \\
			&=\sum_{i_1, \ldots, i_k = 1}^r \lambda_{i_1} \cdots \lambda_{i_k}  
		 \frac{1}{k !} \sum_{\tau \in S_k}   \sum_{\lambda \vdash k,\ l(\lambda) \leq d} \frac{d_\lambda}{D_\lambda} \\
		 & \quad 
		 \times\Tr\left[
		 \bigotimes_{j = 1}^k T\ad_{i_{\tau(j)}} 
		  P_\lambda \bigotimes_{j = 1}^k T_{i_j} \right].
	\end{aligned}
	\end{equation}
\end{lemma}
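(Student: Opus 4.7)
The plan is to reduce the $k$-th moment to a single Haar integral handled by Theorem~\ref{thm:intU}. Substituting the Kraus-like decomposition turns $(S_{\mc T})^k$ into $\sum_{i_1,\dots,i_k}\lambda_{i_1}\cdots\lambda_{i_k}\prod_{j=1}^k|\tr(U^\dagger T_{i_j})|^2$, so after exchanging sum and expectation the task reduces to computing $\EE_U\bigl[\prod_{j=1}^k |\tr(U^\dagger T_{i_j})|^2\bigr]$ for each fixed tuple $(i_1,\ldots,i_k)$.

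To put this in a form amenable to Theorem~\ref{thm:intU}, I bundle the operators into $A\coloneqq T_{i_1}\otimes\cdots\otimes T_{i_k}\in\End(V)$ with $V=(\mathbb{C}^d)^{\otimes k}$. The identities $\prod_j\tr(M_j)=\tr(\bigotimes_j M_j)$ and $(U^\dagger)^{\otimes k}\bigotimes_j T_{i_j}=\bigotimes_j(U^\dagger T_{i_j})$ rewrite the integrand compactly as $|\tr[(U^{\otimes k})^\dagger A]|^2$. Expanding the squared modulus in the computational basis of $V$ produces a quadruple sum over matrix indices in which the only $U$-dependence is $\overline{(U^{\otimes k})_{\vec a\vec b}}(U^{\otimes k})_{\vec c\vec d}$, whose Haar average is exactly a matrix entry of a twirl,
\[
\EE\bigl[\overline{(U^{\otimes k})_{\vec a\vec b}}\,(U^{\otimes k})_{\vec c\vec d}\bigr] = E_{\Delta^k_{\U(d)}}(e_{\vec d\vec b})_{\vec c\vec a},
\]
where $e_{\alpha\beta}$ denotes the matrix unit.

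Next, plug in Theorem~\ref{thm:intU}. Since $\tr(e_{\vec d\vec b}\pi^d_{S_k}(\tau))=\pi^d_{S_k}(\tau)_{\vec b\vec d}$, the quadruple sum collapses after the contractions $\sum_{\vec b,\vec d}A_{\vec a\vec b}\pi^d_{S_k}(\tau)_{\vec b\vec d}(A^\dagger)_{\vec d\vec c}=[A\,\pi^d_{S_k}(\tau)\,A^\dagger]_{\vec a\vec c}$ followed by the contraction over $\vec a,\vec c$ against $[\pi^d_{S_k}(\tau^{-1})P_\lambda]_{\vec c\vec a}$, yielding a single trace on $V$:
\[
\EE\bigl[|\tr[(U^{\otimes k})^\dagger A]|^2\bigr]=\tfrac{1}{k!}\sum_{\tau\in S_k}\sum_{\lambda}\tfrac{d_\lambda}{D_\lambda}\tr\bigl[A\,\pi^d_{S_k}(\tau)\,A^\dagger\,\pi^d_{S_k}(\tau^{-1})\,P_\lambda\bigr].
\]

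Finally, I bring this into the stated form using the conjugation identity $\pi^d_{S_k}(\tau^{-1})(X_1\otimes\cdots\otimes X_k)\pi^d_{S_k}(\tau)=X_{\tau(1)}\otimes\cdots\otimes X_{\tau(k)}$ — a direct consequence of the paper's convention $\pi^d_{S_n}(\tau)\ket{i_1\cdots i_k}=\ket{i_{\tau^{-1}(1)}\cdots i_{\tau^{-1}(k)}}$ — which identifies $\bigotimes_j T^\dagger_{i_{\tau(j)}}=\pi^d_{S_k}(\tau^{-1})A^\dagger\pi^d_{S_k}(\tau)$. Combining this with trace cyclicity and the bijective relabelling $\tau\leftrightarrow\tau^{-1}$ inside $\sum_{\tau\in S_k}$ rewrites the trace as $\tr[\bigotimes_j T^\dagger_{i_{\tau(j)}}\,P_\lambda\,\bigotimes_j T_{i_j}]$, matching the lemma. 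The main obstacle is purely conventional bookkeeping: the inverse in the $\pi^d_{S_n}$-convention, the cyclic rotations of the trace, and the relabelling $\tau\leftrightarrow\tau^{-1}$ in the sum over $S_k$ all have to be synchronised so that the $\tau$ appearing in $T^\dagger_{i_{\tau(j)}}$ comes out with the exact direction stated — not as $\tau^{-1}$.
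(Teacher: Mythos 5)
Your proposal is correct and follows essentially the same route as the paper's proof: both reduce the $k$-th moment to the twirl of a matrix unit (your $e_{\vec{d}\vec{b}}$ is the paper's $\ketbran{m}{n}$ evaluated in a product basis), apply Theorem~\ref{thm:intU}, and then absorb the conjugation by $\pi^d_{S_k}(\tau)$ into a permutation of the tensor factors, using the sum over all of $S_k$ to resolve the $\tau$ versus $\tau^{-1}$ bookkeeping. No gap.
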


\begin{proof}
	We can rewrite the $k$-th unitary moment of $S_{\mc T}$ as 
	\begin{equation}
	\begin{aligned}
		&\EE_{U \sim \operatorname{Haar}(\U(d))}[S^k_{\mc T}]  \\
			&= \EE_{U}\sum_{i_1, \ldots, i_k = 1}^r \lambda_{i_1} \cdots \lambda_{i_k} |\Tr(U\ad T_{i_1})|^2\cdots  |\Tr(U\ad T_{i_k})|^2 \\
			&=  \EE_{U} \sum_{i_1, \ldots, i_k = 1}^r \lambda_{i_1}\cdots \lambda_{i_k} \\
			&\quad\quad\times \Tr\left[\bigotimes_{j = 1}^k T_{i_j}\ad\, U^{\otimes k} \right] \Tr\left[ {U\ad}^{\otimes k} \bigotimes_{j = 1}^k T_{i_j}\right] \\
			&= \sum_{i_1, \ldots, i_k = 1}^r \lambda_{i_1} \cdots \lambda_{i_k} \\
			&\quad\quad\times \sum_{m,n = 1}^{d^k} \bra{m} \bigotimes_{j = 1}^k T\ad_{i_j} E_{\Delta_{\U(d)}^k}(\ketbra{m}{n}) \bigotimes_{j = 1}^k T_{i_j} \ket{n} 
	\end{aligned}
	\end{equation}
	where in the last line we evaluated the trace in an orthonormal basis $\{ \ket{m} \mid m \in \{ 1, \ldots, d^k\}\}$ for $(\CC^d)^{\otimes k}$.
Using the expression for $E_{\Delta^k_{\U(d)}}$ of Theorem~\ref{thm:intU} we get 
	\begin{equation}
	\begin{aligned}
		&\EE_{U \sim \operatorname{Haar}(\U(d))}[S^k_{\mc T}]  \\
		&= \sum_{i_1, \ldots, i_k = 1}^r \lambda_{i_1} \cdots \lambda_{i_k}  
		 \frac{1}{k !} \sum_{\tau \in S_k}   \sum_{\lambda \vdash k,\ l(\lambda) \leq d} \frac{d_\lambda}{D_\lambda} \\
		 & \qquad \times\Tr\left[\pi^d_{S_k}(\tau) 
		 \bigotimes_{j = 1}^k T\ad_{i_j} \pi^d_{S_k}(\tau^{-1})
		  P_\lambda \bigotimes_{j = 1}^k T_{i_j} \right] \\
		  &= \sum_{i_1, \ldots, i_k = 1}^r \lambda_{i_1} \cdots \lambda_{i_k}  
		 \frac{1}{k !} \sum_{\tau \in S_k}   \sum_{\lambda \vdash k,\ l(\lambda) \leq d} \frac{d_\lambda}{D_\lambda} \\
		 & \qquad \times\Tr\left[
		 \bigotimes_{j = 1}^k T\ad_{i_{\tau(j)}} 
		  P_\lambda \bigotimes_{j = 1}^k T_{i_j} \right].
	\end{aligned}
	\end{equation}	
\end{proof}

\begin{proof}[Proof of Lemma~\ref{lem:secondMoment}]
	\newcommand{\scranti}{{\Yboxdim{4pt}\,\yng(1,1)}}
	\newcommand{\scrsym}{{\Yboxdim{4pt}\,\yng(2)}}

	We evaluate the expression of Lemma~\ref{lem:generalKthMoment} for the case $k=2$. To this end recall that the irreducible representations of $S_2$ are the symmetric (\scrsym\,) and antisymmetric representation (\scranti\,). The central projections are given by $P_\scrsym = \frac{1}{2} ( 1 + \FF)$ and $P_\scranti = \frac{1}{2} ( 1 - \FF)$ \cite{FouHar91}, where $\FF$ is the bipartite flip operator $\mathbb{F}: (\CC^d)^{\otimes 2} \to (\CC^d)^{\otimes 2}$, $\ket x \otimes \ket y \mapsto \ket y \otimes \ket x$.
	The dimensions are $d_\scrsym = d_\scranti = 1$, $D_\scrsym = \frac{d (d-1)}{2}$
	and $D_\scranti = \frac{d (d+1)}{2}$. 
For $A,B \in H_d^{\otimes 2}$ we introduce the following short-hand notation 
	\begin{equation}
		\Gamma_{AB} \coloneqq  \sum_{i,j}^r \lambda_i \lambda_j  \Tr\left[A(T\ad_i\otimes T\ad_j ) B(T_i \otimes T_j)\right].
	\end{equation}
	Rearranging the terms in the first statement of the Lemma~\ref{lem:generalKthMoment} then yields
	\begin{align}
		&\!\!\!\!\!\! \EE_{U \sim \operatorname{Haar}(\U(d))}[S^2_{\mc T}]  
		\\
		&= 
		\frac{1}{4}\Bigg\{
		\left[\frac{1}{D_\scrsym} + \frac{1}{D_\scranti}\right]\left[\Gamma_{\Id \Id} + \Gamma_{\FF \FF} \right] 
		\\
		&\qquad
			+ \left[\frac{1}{D_\scrsym} - \frac{1}{D_\scranti}\right]\left[\Gamma_{\FF \Id} + \Gamma_{\Id \FF} \right]\Bigg\} 
		\\
		&= 
		\frac{1}{d^2-1} \left\{ \Gamma_{\Id \Id} + \Gamma_{\FF \FF} - \frac{1}{d} \left( \Gamma_{\Id \FF} + \Gamma_{\FF \Id}\right) \right\} . \label{eq:second_moment_aux1}
	\end{align}
	The four $\Gamma$-terms can be evaluated explicitly. For the first term, we obtain
	\begin{equation}
	\begin{aligned}
		\Gamma_{\Id \Id} &= \sum_{i,j = 1}^r \lambda_i \lambda_j \fnorm{T_i}^2 \fnorm{T_j}^2  \\
		 &=\left(\sum_i \lambda_i  \Tr( T_i \Id T_i^\dagger) \right)^2  \\
		 &= \Tr(\mc T(\Id))^2.
	\end{aligned}
	\end{equation}
	The second terms reads
	\begin{equation}
	\begin{aligned}
		\Gamma_{\FF \FF} &= \sum_{i,j = 1}^r \lambda_i \lambda_j |\Tr(T\ad_i T_j)|^2 \\
		&= d^2 \norm{\mc T}^2
	\end{aligned}
	\end{equation}
	and the third term can be written as 
	\begin{equation}
	\begin{aligned}
		\Gamma_{\FF \Id}
		&= \sum_{i,j=1}^r \lambda_i \lambda_j \tr \left( T\ad_i T_i T\ad_j T_j \right) \\
		&= \fnorm{\mathcal{T}\ad (\Id )}^2.
	\end{aligned}
	\end{equation}
Moreover, a computation that closely resembles this reformulation yields
	$\Gamma_{\Id \FF} = \fnorm{\mc T (\Id)}^2$ and the claim follows.
\end{proof}

\subsection{A fourth moment bound}\label{subsec:fourthmoment}

The main result of this section is an upper bound for the fourth moment of $S_{\mc T}$ when $\mc U$ is a Haar random Clifford operation. 
To gain some intuition,  let us first derive an upper bound on the fourth moment taken with respect to the full unitary group. Note that a similar bound has already been derived in Ref.~\cite{KimLiu16}. 
\begin{lemma}[$4$-th moment bound for $\U(d)$]\label{lem:fourthMomentU}
	Let $\mc T: H_d \to H_d$ be a map. Then for $S_\mc T$ defined in Eq.~\eqref{eq:SmcT}
	 \begin{align}
		\EE_{U \sim \operatorname{Haar}(\U(d))}[S^4_{\mc T}] 
		\leq C \tnorm{J(\mc T)}^4
	\end{align}
	with some constant $C > \frac{1}{3}$ independent of the dimension $d$. 
\end{lemma}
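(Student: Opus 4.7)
The plan is to apply the general moment formula Lemma~\ref{lem:generalKthMoment} with $k=4$ and reduce the claim to a purely dimensional estimate, exploiting that the Schur--Weyl projectors are orthogonal. To avoid notation clashes I denote partitions of $4$ by $\nu\vdash 4$ (reserving $\lambda_i$ for the scalar coefficients in the decomposition of $\mc T$).

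First, I pick a convenient decomposition: write $\mc T(X) = \sum_{i=1}^r \lambda_i T_i X T_i\ad$ with $\{T_i\}$ Hilbert--Schmidt-orthonormal, obtained from the spectral decomposition of $J(\mc T)$ (which is Hermitian because $\mc T$ is hermiticity-preserving). Then $\fnorm{T_i}=1$, the eigenvalues of $J(\mc T)$ are $\lambda_i/d\in\RR$, and $\tnorm{J(\mc T)} = \tfrac{1}{d}\sum_i |\lambda_i|$. Inserting this into Lemma~\ref{lem:generalKthMoment} and taking absolute values termwise produces a sum over $\tau\in S_4$, $\nu\vdash 4$, and multi-indices of traces $\Tr[B\ad P_\nu A]$, where $A=\bigotimes_{j=1}^4 T_{i_j}$ and $B=\bigotimes_{j=1}^4 T_{i_{\tau(j)}}$ both have unit Frobenius norm. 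Since $P_\nu$ is a self-adjoint orthogonal projection, Cauchy--Schwarz bounds each such trace by $\fnorm{P_\nu A}\fnorm{P_\nu B}\leq 1$. After this crude estimate the quadruple index sum factorises as $(\sum_i|\lambda_i|)^4 = d^4\tnorm{J(\mc T)}^4$, the average over $\tau\in S_4$ trivialises, and only the dimensional prefactor $d^4 \sum_{\nu\vdash 4,\, l(\nu)\leq d} d_\nu/D_\nu$ remains.

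To finish, I need to show this prefactor is uniformly bounded in $d$. The Weyl (hook-content) dimension formula shows that for each of the five partitions $\nu\vdash 4$, $D_\nu$ is a polynomial of degree $4$ in $d$ with leading coefficient $1/h_\nu$ (where $h_\nu$ is the product of hook lengths of $\nu$), while $d_\nu=4!/h_\nu$. Hence $d^4 d_\nu/D_\nu \to 4!$ for each $\nu$ as $d\to\infty$, so $d^4\sum_\nu d_\nu/D_\nu$ is bounded uniformly in $d$, with limit $5\cdot 4! = 120$. Any uniform upper bound serves as the constant $C$, and checking the five small-$d$ cases where $l(\nu)>d$ forces some terms to drop out only improves matters.

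The main ``obstacle'' is really only bookkeeping: the Cauchy--Schwarz step is lossy in the numerical value of $C$ but not in its $d$-dependence, which is what the statement requires. Obtaining a sharper $C$ (matching the lower bound $>\tfrac13$ hinted at in the statement) would demand tracking cancellations between permutations and among different projectors $P_\nu$ rather than the naive triangle inequality; but this refinement is not needed for the stated dimension-independent bound.
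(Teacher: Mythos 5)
Your proposal is correct and follows essentially the same route as the paper: apply Lemma~\ref{lem:generalKthMoment} with $k=4$, bound each trace term by Cauchy--Schwarz using that $P_\lambda$ is an orthogonal projection and the $T_i$ are Frobenius-normalised, factorise the index sum into $(\sum_i|\lambda_i|)^4 = d^4\tnorm{J(\mc T)}^4$, and absorb $d^4\sum_{\lambda\vdash 4} d_\lambda/D_\lambda$ into a dimension-independent constant. The only cosmetic difference is that you invoke the Weyl (hook-content) dimension formula explicitly where the paper cites the degree-$4$ polynomial growth of the Weyl modules from its Clifford representation-theory theorem; both amount to the same dimensional estimate.
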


\begin{proof}
Applying Cauchy-Schwarz to an individual summand on the right hand side of Lemma~\ref{lem:generalKthMoment} yields for all $k$
	\begin{equation}
	\begin{aligned}
		\left|\Tr\left[
		 \bigotimes_{j = 1}^k T\ad_{i_{\tau(j)}} 
		  P_\lambda \bigotimes_{j = 1}^k T_{i_j} \right] \right|
		  &\leq \fnorm{P_\lambda \bigotimes_{j = 1}^k T_{i_{\tau(j)}}  }\fnorm{P_\lambda \bigotimes_{j = 1}^k T_{i_{j}}} \\
		  &\leq \fnorm{\bigotimes_{j = 1}^k T_{i_{\tau(j)}}  }\fnorm{\bigotimes_{j = 1}^k T_{i_{j}}} \\
		  &= \prod_{j=1}^k \fnorm{T_{i_j}}^2,
	\end{aligned}
	\end{equation}
which is independent of the permutation $\tau \in S_k$.
We may therefore conclude
	\begin{equation}\label{eq:umoments:upperbound}
	\begin{aligned}
		&\EE_{U \sim \operatorname{Haar}(\U(d))}[S^k_{\mc T}]  \\
			&\leq \sum_{i_1, \ldots, i_k = 1}^r    \prod_{j=1}^k  \left|\lambda_{i_j}\right|  \fnorm{T_{i_j}}^2 \sum_{\lambda \vdash k,\ l(\lambda) \leq d} \frac{d_\lambda}{D_\lambda}.
	\end{aligned}
	\end{equation}

	From Theorem~\ref{thm:SchurWeylClifford} we observe that for $k=4$
	\begin{equation}
		\sum_{\lambda \vdash 4,\ l(\lambda) \leq d } \frac{d_\lambda}{D_\lambda} \leq \frac{C}{d^4}
	\end{equation}
	for some constant $C > \frac{1}{3}$ independent of $d$. 
	Thus, Eq.~\eqref{eq:umoments:upperbound} implies the desired bound.
\end{proof}

In an analogous way we can derive a sufficient bound on the fourth moment of $S_{\mc T}$ when the average is performed over the Clifford group. The result will be stated in Lemma~\ref{lem:fourthMomentCliff}. To get the correct dimensional pre-factors in the bound, we have to rely on particular properties of the  projection $Q$ of Eq.~\eqref{eq:Qdef} appearing in the representation theory of the fourth order diagonal action of Clifford group in Theorem~\ref{thm:intCl}. The following technical result takes care of this issue.

\begin{lemma}[Properties of the projection $Q$]\label{lem:Qnormbound}
For $\{T_l\}_{l=1}^r \subset \End(\CC^d)$ and $Q$ defined in Eq.~\eqref{eq:Qdef}
\begin{equation}
	\fnorm{Q \bigotimes_{j=1}^4 T_{i_j} Q}
	\leq \frac{1}{d} \prod_{j=1}^4 \fnorm{T_{i_j}}.
\end{equation}
\end{lemma}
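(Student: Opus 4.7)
The plan is to diagonalise $Q$ in a ``doubled Bell basis'' adapted to the bipartition $12|34$ of $(\CC^d)^{\otimes 4}$, rewrite $\fnorm{QXQ}^2$ for $X = \bigotimes_{j=1}^4 T_{i_j}$ as a double sum over Pauli indices, and then combine a pointwise Cauchy--Schwarz estimate on one bipartite factor with a summed Pauli-completeness identity on the other.

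First I would establish the explicit form of $Q$. Let $\ket{\phi_+} = d^{-1/2}\sum_i \ket{ii}$ and set $\ket{\phi_k} := (W_k \otimes I)\ket{\phi_+}$, so that $\{\ket{\phi_k}\}_{k=1}^{d^2}$ is an orthonormal basis of $\CC^d \otimes \CC^d$. A direct computation shows $(W_a \otimes W_a)\ket{\phi_k} = \sigma(a,k)\ket{\phi_k}$ for some sign $\sigma(a,k) \in \{\pm 1\}$, and squaring the signs coming from the two halves of the bipartition shows that each doubled Bell state $\ket{\phi_k\phi_k} := \ket{\phi_k}_{12}\otimes\ket{\phi_k}_{34}$ is fixed by every stabiliser $W_a^{\otimes 4}$. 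Since these $d^2$ orthonormal vectors span the $d^2$-dimensional range of $Q$, one obtains
\[
Q = \sum_{k=1}^{d^2} \ket{\phi_k\phi_k}\bra{\phi_k\phi_k}.
\]
Using the standard vectorisation identity $\bra{\phi_k}(A\otimes B)\ket{\phi_l} = d^{-1}\Tr(W_k\ad A W_l B^T)$, this yields for $X = \bigotimes_{j=1}^4 T_{i_j}$ the decomposition
\[
QXQ = \frac{1}{d^2}\sum_{k,l} F_{12}(k,l)\,F_{34}(k,l)\,\ket{\phi_k\phi_k}\bra{\phi_l\phi_l},
\]
where $F_{ab}(k,l) := \Tr(W_k\ad T_{i_a} W_l T_{i_b}^T)$, and by orthonormality
\[
\fnorm{QXQ}^2 = \frac{1}{d^4}\sum_{k,l}|F_{12}(k,l)|^2\,|F_{34}(k,l)|^2.
\]

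The remaining task is to bound this double sum by $d^{-2}\prod_j\fnorm{T_{i_j}}^2$ by treating the two factors asymmetrically. For $F_{12}$, unitary invariance of the Frobenius norm together with Cauchy--Schwarz gives the pointwise estimate $|F_{12}(k,l)| \leq \fnorm{W_k\ad T_{i_1}}\fnorm{W_l T_{i_2}^T} = \fnorm{T_{i_1}}\fnorm{T_{i_2}}$. For $F_{34}$, I would first apply Pauli completeness $\sum_k|\Tr(W_k\ad A)|^2 = d\fnorm{A}^2$ to obtain $\sum_l d\fnorm{T_{i_3}W_l T_{i_4}^T}^2$, then use the unitary-channel identity $\sum_l W_l A W_l\ad = d\Tr(A)I$ to close the sum: $\sum_{k,l}|F_{34}(k,l)|^2 = d^2\fnorm{T_{i_3}}^2\fnorm{T_{i_4}}^2$. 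Pulling $\max_{k,l}|F_{12}(k,l)|^2$ outside the sum over $F_{34}$ yields $\fnorm{QXQ}^2 \leq d^{-4}\cdot\fnorm{T_{i_1}}^2\fnorm{T_{i_2}}^2\cdot d^2\fnorm{T_{i_3}}^2\fnorm{T_{i_4}}^2 = d^{-2}\prod_j\fnorm{T_{i_j}}^2$.

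The main obstacle is the first step: establishing the explicit doubled-Bell form of $Q$ requires careful sign bookkeeping in $(W_a \otimes W_a)\ket{\phi_k}$, in particular making sure that signs coming from ``imaginary'' Paulis (those containing an odd number of $Y$'s) square to $+1$ when the two halves of the bipartition are paired. Once this structural identity is in hand, the remaining computation is essentially forced, and the asymmetric use of $F_{12}$ (pointwise maximum) and $F_{34}$ (Parseval-type sum) is precisely what produces the gain of $d^{-2}$ over the trivial bound $\fnorm{QXQ}^2 \leq \prod_j\fnorm{T_{i_j}}^2$ coming from $\snorm{Q}\leq 1$.
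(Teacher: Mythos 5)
Your proposal is correct, and it reaches the bound by a genuinely different organisation of the same underlying mechanism. The paper keeps $Q$ in its Pauli-sum form, writes $\fnorm{Q\bigotimes_{j}T_{i_j}Q}^2=\Tr[Q\bigotimes_j T_{i_j}Q\bigotimes_j T_{i_j}\ad]=\tfrac{1}{d^4}\sum_{k,l}\prod_{j}c_{k,l}(T_{i_j})$ with $c_{k,l}(T)=\Tr(W_kTW_lT\ad)$, and must first apply Cauchy--Schwarz to the $(k,l)$-indexed double sum to split the four factors into the pairs $(T_1,T_2)$ and $(T_3,T_4)$; only then does it bound each pair asymmetrically, one factor pointwise by $\fnorm{T}^2$ (Lemma~\ref{lem:PauliMagnitudeBound}) and the other summed exactly via the flip-operator identity $\mathbb{F}=\tfrac1d\sum_kW_k^{\otimes 2}$ (Lemma~\ref{lem:flip}). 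You instead diagonalise $Q$ in the doubled-Bell basis -- a correct structural fact, since the $d^2$ orthonormal vectors $\ket{\phi_k}_{12}\otimes\ket{\phi_k}_{34}$ are fixed by every $W_a^{\otimes4}$ (the transpose and commutation signs square to $+1$) and $\Tr Q=d^2$ -- so that each matrix element of $QXQ$ factorises as $\tfrac{1}{d^2}F_{12}(k,l)F_{34}(k,l)$ and $\fnorm{QXQ}^2=\tfrac{1}{d^4}\sum_{k,l}|F_{12}|^2|F_{34}|^2$ comes out directly. This makes the intermediate Cauchy--Schwarz on the double sum unnecessary: the max-times-sum estimate applies immediately, with your Parseval-plus-depolarising computation $\sum_{k,l}|F_{34}(k,l)|^2=d^2\fnorm{T_{i_3}}^2\fnorm{T_{i_4}}^2$ playing exactly the role of the paper's flip-operator step. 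Both routes gain the crucial factor $d^{-2}$ from the same asymmetric treatment of the two pairs; yours is arguably cleaner in that the explicit maximally entangled basis for the range of $Q$ is a reusable byproduct that the paper never makes explicit, at the cost of the sign bookkeeping you correctly identify as the only delicate point.
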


This bound is tight.
In fact, one can show that it is saturated if all $T_i$'s are chosen to be the same stabiliser state. The proof of Lemma~\ref{lem:Qnormbound} requires two other properties of multi-qubit Pauli matrices $W_1, \ldots, W_{d^2}$. The first property is summarised by the following lemma.

\begin{lemma}[Magnitude of multi-qubit Pauli matrices]\label{lem:PauliMagnitudeBound}
	For $A, B \in \End(\CC^d)$, 
	\begin{equation}
		\Tr(W_j A W_k B) \leq \fnorm{A}\fnorm{B}
	\end{equation}
	for all $j,k \in \{1, \ldots, d^2\}$. 
\end{lemma}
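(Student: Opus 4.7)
The plan is to establish the bound via two standard ingredients: (i) the Cauchy--Schwarz inequality for the Hilbert--Schmidt inner product, and (ii) the unitary invariance of the Frobenius norm, exploiting the fact that multi-qubit Pauli matrices $W_1,\ldots,W_{d^2}$ are unitary.

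First I would apply Cauchy--Schwarz in the form $|\Tr(X Y)| \leq \fnorm{X}\fnorm{Y}$ with the grouping $X = W_j A W_k$ and $Y = B$, obtaining
\begin{equation}
|\Tr(W_j A W_k B)| \leq \fnorm{W_j A W_k}\,\fnorm{B}.
\end{equation}
Next I would observe that since the $W_i$ are unitary, $W_i W_i\ad = W_i\ad W_i = \Id$, and hence the Frobenius norm is invariant under both left- and right-multiplication by any $W_i$. Explicitly, using the cyclicity of the trace,
\begin{equation}
\fnorm{W_j A W_k}^2 = \Tr\bigl(W_k\ad A\ad W_j\ad W_j A W_k\bigr) = \Tr(A\ad A) = \fnorm{A}^2.
\end{equation}
Combining these two observations yields the claim.

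There is no serious obstacle here: the statement is essentially a one-line consequence of Cauchy--Schwarz and the unitarity of the multi-qubit Pauli operators. The only mild subtlety is remembering that the absolute value sign is needed on the left-hand side if $\Tr(W_j A W_k B)$ is complex, but since the lemma is stated as an inequality, it is implicitly understood that the bound holds for the modulus (and therefore a fortiori for the real-valued quantity itself whenever the trace happens to be real).
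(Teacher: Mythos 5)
Your proposal is correct and follows essentially the same route as the paper's own proof: Cauchy--Schwarz for the Hilbert--Schmidt inner product applied to the pairing of $W_j A W_k$ with $B$, followed by unitary invariance of the Frobenius norm to reduce $\fnorm{W_j A W_k}$ to $\fnorm{A}$. Your remark about the modulus is also apt, since the paper's statement (and proof) implicitly bounds the real-valued quantity by the modulus in exactly the way you describe.
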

\begin{proof}
This statement follows directly from Cauchy-Schwarz and the unitary invariance of the Frobenius norm:
\begin{equation}
\begin{aligned}
\Tr \left( W_j A W_k B \right)
&= \left( B^\dagger, W_j A W_k \right) \\
&\leq \| B^\dagger \|_2 \| W_j A W_k \|_2 \\
&= \| B \|_2 \| A \|_2.
\end{aligned}
\end{equation}
\end{proof}

The second property is that the two multi-qubit flip operator $\mathbb{F}$ can be expanded in terms of tensor products of Pauli matrices.

\begin{lemma}[Multi-qubit flip operator in terms of Pauli matrices]\label{lem:flip}
	\begin{equation}
		\mathbb{F} = \frac{1}{d} \sum_{i=1}^{d^2} W_i^{\otimes 2}. 
	\end{equation}
\end{lemma}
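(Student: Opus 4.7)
The plan is to expand the flip operator in the orthonormal operator basis built from multi-qubit Pauli matrices. First I would recall two standard facts specific to the $d=2^n$ setting: the Pauli matrices $W_1,\ldots,W_{d^2}$ are Hermitian and satisfy the orthogonality relation $\Tr(W_i W_j)=d\,\delta_{ij}$, so the rescaled family $\{W_i/\sqrt{d}\}_{i=1}^{d^2}$ is a Hilbert--Schmidt orthonormal basis of $\End(\CC^d)$. Tensoring this basis with itself yields an orthonormal basis $\{(W_i\otimes W_j)/d\}_{i,j=1}^{d^2}$ of $\End(\CC^d\otimes\CC^d)$.

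Next I would write the Fourier-type expansion of a general operator in this basis and specialise it to $A=\mathbb{F}$:
\[
\mathbb{F} \;=\; \frac{1}{d^2}\sum_{i,j=1}^{d^2} \Tr\bigl((W_i\otimes W_j)\mathbb{F}\bigr)\, W_i\otimes W_j.
\]
The key computational ingredient is the ``swap trick'' identity $\Tr((X\otimes Y)\mathbb{F})=\Tr(XY)$, valid for all $X,Y\in\End(\CC^d)$, which I would apply with $X=W_i$ and $Y=W_j$. Combined with Pauli orthogonality it collapses each coefficient to $\Tr(W_i W_j)/d^2 = \delta_{ij}/d$, so only the diagonal terms $i=j$ survive. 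This immediately gives
\[
\mathbb{F}\;=\;\frac{1}{d}\sum_{i=1}^{d^2} W_i\otimes W_i \;=\; \frac{1}{d}\sum_{i=1}^{d^2} W_i^{\otimes 2},
\]
as claimed.

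There is no substantial obstacle; both Pauli orthogonality and the swap-trick identity are elementary. The only point worth flagging is that the identity is specific to the multi-qubit case $d=2^n$ because it exploits the fact that the cardinality $d^2$ of the multi-qubit Pauli group matches the dimension of $\End(\CC^d)$, so the Paulis actually form a basis. In other dimensions the analogous expansion would require a different orthonormal operator basis, for instance the Heisenberg--Weyl or generalised Gell-Mann matrices.
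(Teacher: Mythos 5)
Your proposal is correct and follows essentially the same route as the paper: expand $\mathbb{F}$ in the orthonormal basis of tensor products of rescaled Pauli matrices and use $\Tr\bigl((W_i\otimes W_j)\mathbb{F}\bigr)=\Tr(W_iW_j)=d\,\delta_{ij}$ to collapse the double sum to the diagonal. The only difference is that you make the swap-trick identity explicit, which the paper leaves implicit.
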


\begin{proof}
The re-normalised Pauli matrices form an orthonormal basis of $H_d$:
\begin{equation}
X = \frac{1}{d} \sum_{k=1}^d W_k \Tr \left( W_k X \right) \quad \forall X \in H (\CC^n).
\end{equation}
We can extend this to a basis of $H_d^{\otimes 2}$ by considering all possible tensor products of Pauli matrices. Expanding the flip operator in this basis yields
\begin{equation}
\begin{aligned}
\mathbb{F}
&= \frac{1}{d^2} \sum_{k,l=1}^{d^2} W_k \otimes W_l \Tr \left( \mathbb{F} W_k \otimes W_l \right) \\
&= \frac{1}{d^2} \sum_{k,l=1}^{d^2} W_k \otimes W_l d \delta_{k,l} = \frac{1}{d} \sum_{k=1}^{d^2} W_k^{\otimes 2}
\end{aligned}
\end{equation}
as claimed.
\end{proof}
We are now equipped to prove Lemma~\ref{lem:Qnormbound}. 
\begin{proof}[Proof of Lemma~\ref{lem:Qnormbound}]
	We start by inserting the definition of $Q$, \eqref{eq:Qdef}. Fixing {w.l.o.g.} an order of the indices, we obtain 
	\begin{align}
		&\!\!\!\! \Tr\left[ Q \bigotimes_{j=1}^4 T_j Q \bigotimes_{j=1}^4 T_j\ad\right] \\
		&= \frac{1}{d^4} \sum_{k,l=1}^{d^2} \prod_{j=1}^4 \Tr\left[W_k T_j W_l T_j\ad\right] \\
		&= \frac{1}{d^4} \sum_{k,l=1}^{d^2}
		c_{k,l}(T_1)c_{k,l}(T_2)c_{k,l}(T_3)c_{k,l}(T_4),
		\label{eq:incs}
	\end{align}
	where we defined $c_{k,l}(T_j) \coloneqq \Tr(W_k T_j W_l T_j\ad) \in \mathbb{C}$. These numbers obey
	\begin{equation}
	\begin{aligned}
		\overline{c_{k,l}(T_j)} =& \overline{\Tr \left( W_k T_j W_l T_j^\dagger \right)}
		= \Tr \left( \left( W_k T_j W_l T_j^\dagger \right)^\dagger \right) \\
		=& \Tr \left( T_j W_l^\dagger T_j^\dagger W_k \right) = c_{k,l}(T^\dagger_j).
	\end{aligned}
	\end{equation}
	In addition, Lemma~\ref{lem:PauliMagnitudeBound} implies
	\begin{align}
		|c_{k,l}(T_j)|^2 =& \left| \Tr \left( W_k T_j W_l T_j^\dagger \right) \right|^2 \leq \| T_j \|_2^4.  \label{eq:csAbsoluteValueBound}
	\end{align}

	Equation~\eqref{eq:incs} can be viewed as a complex-valued inner product between two $d^2$-dimensional vectors indexed by $k$ and $l$. This expression can be upper bounded by the Cauchy-Schwarz inequality:
	\begin{align}
		& \frac{1}{d^4} \sum_{k,l=1}^{d^2} c_{k,l}(T_1) c_{k,l}(T_2) c_{k,l} (T_3) c_{k,l} (T_4) \\
 		&=\frac{1}{d^4} \sum_{k,l=1}^{d^2} \overline{ c_{k,l}(T_1^\dagger) c_{k,l}(T_2^\dagger)}
		c_{k,l}(T_3) c_{k,l}(T_4) \\
		&\leq \frac{1}{d^2} \sqrt{\frac{1}{d^2} \sum_{k,l} \left| c_{k,l} (T_1^\dagger) c_{k,l}(T_2^\dagger) \right|^2}
		\nonumber 
		\\
		&\quad\quad\times \sqrt{\frac{1}{d^2} \sum_{k,l} \left| c_{k,l} (T_3) c_{k,l}(T_4) \right|^2}. \label{eq:cssqrts}
	\end{align}
	The first square-root can be bounded in the following way
	\begin{equation}
	\begin{aligned}
		&\!\!\!\! \sqrt{\frac{1}{d^2} \sum_{k,l} \left| c_{k,l} (T_3) c_{k,l}(T_4) \right|^2} \\
		&\leq  \sqrt{\| T_1^\dagger \|_2^4 \frac{1}{d^2} \sum_{k,l} c_{k,l} (T_2^\dagger) } \\
		&= \| T_1 \|_2^2 \sqrt{ \frac{1}{d^2} \sum_{k,l} \Tr \left( W_k T_2^\dagger W_l T_2 \right)^2} \\
		&= \| T_1 \|_2^2 \sqrt{ \Tr \left( \frac{1}{d} \sum_k W_k^{\otimes 2} (T_2^\dagger)^{\otimes 2} \frac{1}{d} \sum_l W_l^{\otimes 2} T_2^{\otimes 2} \right)} \\
		&= \| T_1 \|_2^2	 \sqrt{ \Tr \left( \mathbb{F}\, (T_2^\dagger)^{\otimes 2}\, \mathbb{F}\, T_2^{\otimes 2} \right)} \\
		&= \| T_1 \|_2^2 \sqrt{ \Tr \left( T_2^\dagger T_2 \right)^2} = \| T_1 \|_2^2 \| T_2 \|_2^2.
	\end{aligned}
	\end{equation}	
	Here, we have applied the magnitude bound \eqref{eq:csAbsoluteValueBound} for $c_{k,l}(T_1^\dagger)$ in the second line and applied Lemma~\ref{lem:flip}. 

	The second square root can be bounded in a complete analogous fashion, i.e.
	\begin{equation}
		\sqrt{\frac{1}{d^2} \sum_{k,l} \left| c_{k,l} (T_3) c_{k,l}(T_4) \right|^2}
		\leq \| T_3 \|_2^2 \| T_4 \|_2^2.
	\end{equation}
	Inserting both bounds into Eq.~\eqref{eq:cssqrts} yields the desired claim.
\end{proof}

Having established Lemma~\ref{lem:Qnormbound}, we will now state the bound on the fourth moment of $S_{\mc T}$ when the average is performed over the Clifford group. 
\begin{lemma}[$4$-th moment bound for $\Cliff(d)$]\label{lem:fourthMomentCliff}
	Let $\mc T: H_d \to H_d$ be a map. For $S_\mc T$ defined in Eq.~\eqref{eq:SmcT}, it holds
	\begin{equation}
		\EE_{U \sim \operatorname{Haar}(\Cliff(d))}[S^4_{\mc T}]  
			\leq C \tnorm{J(\mc T)}^4,
	\end{equation}
where $\| \cdot \|_1$ denotes the trace (or nuclear) norm and  the constant $C>0$
	is independent of $d$.
\end{lemma}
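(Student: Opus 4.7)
The plan is to mirror the proof of Lemma~\ref{lem:fourthMomentU} substituting Theorem~\ref{thm:intCl} for Theorem~\ref{thm:intU}, while invoking the improved $L^2$-estimate of Lemma~\ref{lem:Qnormbound} to compensate for the fact that $\Cl(d)$ fails to be a unitary $4$-design. First I fix the spectral decomposition $\mc T(X)=\sum_i\lambda_iT_iXT_i^\dagger$ of the Hermitian Choi matrix $J(\mc T)$, normalised so that each $\|T_i\|_2=1$; then the corresponding Choi vectors form an orthonormal system, whence $\|J(\mc T)\|_1=\tfrac{1}{d}\sum_i|\lambda_i|$. Consequently, proving the claim reduces to bounding $\EE_\Cl[S^4_{\mc T}]$ by a dimension-independent multiple of $(\sum_i|\lambda_i|)^4/d^4$.

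The identity derived in Lemma~\ref{lem:generalKthMoment} uses only the group-averaging operator $E_{\Delta^k_G}$ and so transfers verbatim from $G=\U(d)$ to $G=\Cl(d)$:
\begin{equation*}
\EE_\Cl[S^4_{\mc T}]=\sum_{i_1,\ldots,i_4}\prod_j\lambda_{i_j}\sum_{m,n}\bra m\bigotimes_j T^\dagger_{i_j}\,E_{\Delta^4_{\Cl(d)}}(\ketbra m n)\,\bigotimes_j T_{i_j}\ket n.
\end{equation*}
Inserting Theorem~\ref{thm:intCl} and performing the sums over $m,n$ produces, for each $(\lambda,\sigma)\in\{\lambda\vdash 4\}\times S_4$, two traces weighted by $d_\lambda/(4!\,D^+_\lambda)$ and $d_\lambda/(4!\,D^-_\lambda)$ respectively, and distinguished by whether $Q$ or $Q^\perp$ flanks $\bigotimes_j T^\dagger_{i_j}$. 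Because $Q$, $Q^\perp$, $P_\lambda$, and $\pi^d_{S_4}(\sigma)$ mutually commute, cyclicity collapses each trace to the form $\Tr[P_\lambda\bigotimes_j T_{i_{\sigma^{-1}(j)}}\,R\,\bigotimes_j T^\dagger_{i_j}\,R]$ with $R\in\{Q,Q^\perp\}$.

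The key step is to bound each simplified trace by Cauchy-Schwarz in the form $|\Tr[AB]|\leq\|A\|_2\|B\|_2$. For the $Q$-family we invoke Lemma~\ref{lem:Qnormbound}, which gives the decisive estimate $\|Q\bigotimes_j T^\dagger_{i_j}Q\|_2\leq\frac{1}{d}\prod_j\|T_{i_j}\|_2$; without this $1/d$ gain the $Q$-contribution alone would exceed $\|J(\mc T)\|_1^4$ by a polynomial factor in $d$. For the $Q^\perp$-family the trivial bound $\|Q^\perp\bigotimes_j T^\dagger_{i_j}Q^\perp\|_2\leq\prod_j\|T_{i_j}\|_2$ together with the small weight $1/D^-_\lambda$, whose polynomial degree in $d$ is fixed by Theorem~\ref{thm:SchurWeylClifford}, already produces a contribution of the right order. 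In both cases the other factor $\|P_\lambda\bigotimes_j T_{i_{\sigma^{-1}(j)}}\|_2$ is bounded by $\prod_j\|T_{i_j}\|_2$ using $\|P_\lambda\|_\infty\leq 1$. Summing the per-tuple bound $\prod_j|\lambda_{i_j}|\|T_{i_j}\|_2^2$ over the index tuples yields $(\sum_i|\lambda_i|)^4$, and combining with the residual dimensional prefactor $\sum_{\lambda\vdash 4}d_\lambda[1/(dD^+_\lambda)+1/D^-_\lambda]$ and $\|J(\mc T)\|_1^4=(\sum_i|\lambda_i|)^4/d^4$ yields the claim with a dimension-free constant $C$.

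The main obstacle is the fine-grained dimensional bookkeeping that forces Lemma~\ref{lem:Qnormbound}'s $1/d$ gain on the $Q$-block to combine with the explicit polynomial degrees of $D^+_\lambda$ and $D^-_\lambda$ provided by Theorem~\ref{thm:SchurWeylClifford} in just the right way to match the $1/d^4$ factor sitting inside $\|J(\mc T)\|_1^4$. The novelty over Lemma~\ref{lem:fourthMomentU} is precisely that the Clifford integration formula splits into two separate $(\lambda,\sigma)$-sums---one supported on $Q$ and one on $Q^\perp$---and only the Pauli-projection estimate of Lemma~\ref{lem:Qnormbound} is strong enough to tame the $Q$-part so that the overall constant is independent of $d$.
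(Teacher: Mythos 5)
Your overall strategy is the paper's: reduce via the general moment formula, insert Theorem~\ref{thm:intCl}, and treat the $Q$- and $Q^\perp$-blocks separately with Cauchy--Schwarz. However, your handling of the $Q$-block loses a factor of $d$ that the argument cannot afford. You split the trace as $|\Tr[AB]|\leq \fnorm{A}\fnorm{B}$ with $A=P_\lambda\bigotimes_j T_{i_{\sigma^{-1}(j)}}$ and $B=Q\bigotimes_j T\ad_{i_j}Q$, so Lemma~\ref{lem:Qnormbound} is applied to only \emph{one} of the two factors and you gain only a single power of $1/d$, leading to your prefactor $\sum_\lambda d_\lambda\bigl[1/(dD^+_\lambda)+1/D^-_\lambda\bigr]$. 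But since $\ran Q$ is a $d^2$-dimensional stabiliser code ($\Tr Q=d^2$), we have $\sum_\lambda d_\lambda D^+_\lambda=d^2$, hence $D^+_\lambda=\LandauO(d^2)$ and your $Q$-term is only $\LandauO(1/d^3)$; multiplied against $\sum_{i_1,\dots,i_4}\prod_j|\lambda_{i_j}|=(\sum_i|\lambda_i|)^4=d^4\tnorm{J(\mc T)}^4$ this leaves a stray factor of $d$, i.e.\ you only get $\EE[S^4_{\mc T}]\leq Cd\,\tnorm{J(\mc T)}^4$. (Relatedly, be careful with the degrees quoted in Theorem~\ref{thm:SchurWeylClifford}: the identity $\Tr Q = d^2$ forces $D^+_\lambda=\Theta(d^2)$ and $D^-_\lambda=\Theta(d^4)$, which is what the final estimate actually uses; with the degrees the other way around your $Q^\perp$-term would fail instead.)

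The missing idea is to symmetrise the Cauchy--Schwarz split so that Lemma~\ref{lem:Qnormbound} can be applied \emph{twice}. Because $Q$ and $P_\lambda$ are commuting orthogonal projections, one has
\begin{equation*}
\Tr\Bigl[Q\bigotimes_{j}T\ad_{i_{\tau(j)}}Q\,P_\lambda\bigotimes_{j}T_{i_j}\Bigr]
=\Tr\Bigl[\bigl(P_\lambda Q\bigotimes_{j}T_{i_{\tau(j)}}Q\bigr)\ad\,\bigl(P_\lambda Q\bigotimes_{j}T_{i_j}Q\bigr)\Bigr],
\end{equation*}
so Cauchy--Schwarz yields $\fnorm{Q\bigotimes_j T_{i_{\tau(j)}}Q}\,\fnorm{Q\bigotimes_j T_{i_j}Q}\leq d^{-2}\prod_j\fnorm{T_{i_j}}^2$. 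This produces the prefactor $\sum_\lambda d_\lambda\bigl[1/(d^2D^+_\lambda)+1/D^-_\lambda\bigr]=\LandauO(1/d^4)$, which exactly cancels the $d^4$ from $(\sum_i|\lambda_i|)^4$ and gives a dimension-free constant. With this single repair your argument coincides with the paper's proof.
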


\begin{proof}
	As for the unitary group, we can rewrite the $k$-th moment of $S_{\mc T}$ for the Clifford group as
	\begin{equation}
	 \begin{aligned}
		&\EE_{U \sim \operatorname{Haar}(\Cliff(d))}[S^k_{\mc T}]  \\
			&= \sum_{i_1, \ldots, i_k = 1}^r \lambda_{i_1} \cdots \lambda_{i_k} \sum_{m,n = 1}^{d^k} \\
			&\times \bra{m} \bigotimes_{j = 1}^k T\ad_{i_j} E_{\Delta_{\Cliff(d)}^k}(\ketbra{m}{n}) \bigotimes_{j = 1}^k T_{i_j} \ket{n} 
	\end{aligned}
	\end{equation} 
	using a basis $\{ \ket{m} \mid m \in \{ 1, \ldots, d^k\}\}$ for $(\CC^d)^{\otimes k}$.
	 The expression for $E_{\Delta^4_{\Cliff(d)}}$ with $k=4$ was derived in Theorem~\ref{thm:intCl}. It implies that
	 \begin{equation}
	 \begin{aligned}
		&\EE_{U \sim \operatorname{Haar}(\Cliff(d))}[S^4_{\mc T}]  \\
		  &= \sum_{i_1, \ldots, i_k = 1}^r \lambda_{i_1} \cdots \lambda_{i_k}  
		 \frac{1}{4 !} \sum_{\tau \in S_k}   \sum_{\lambda \vdash k,\ l(\lambda) \leq d} d_\lambda \\
		 & \times \Bigg\lbrace 
		 \frac{1}{D^+_\lambda}\Tr\left[Q
		 \bigotimes_{j = 1}^4 T\ad_{i_{\tau(j)}} Q
		  P_\lambda \bigotimes_{j = 1}^4 T_{i_j} \right] \\
		  &+  \frac{1}{D^-_\lambda}\Tr\left[Q^\perp
		 \bigotimes_{j = 1}^4 T\ad_{i_{\tau(j)}} Q^\perp
		  P_\lambda \bigotimes_{j = 1}^4 T_{i_j} \right] \Bigg\rbrace.
	\end{aligned}	
	\end{equation}
We may bound the first trace term by
	\begin{equation}
	\begin{aligned}
		&\left|\Tr\left[
		 Q\bigotimes_{j = 1}^4 T\ad_{i_{\tau(j)}} 
		  QP_\lambda \bigotimes_{j = 1}^4 T_{i_j} \right]\right|  \\
		  &\leq \fnorm{P_\lambda Q \bigotimes_{j = 1}^4 T_{i_{\tau(j)}} Q }\fnorm{P_\lambda Q\bigotimes_{j = 1}^4 T_{i_{j}}Q} \\
		  &\leq \fnorm{Q\bigotimes_{j = 1}^4 T_{i_{\tau(j)}} Q }\fnorm{Q\bigotimes_{j = 1}^4 T_{i_{j}}Q} \\
		 &\leq \frac{1}{d^2}\prod_{j=1}^4 \fnorm{T_{i_j}}^2,
	\end{aligned}
	\end{equation}
where we have used Cauchy-Schwarz and applied 
Lemma~\ref{lem:Qnormbound} in the last line. 
For the second trace term a looser bound suffices:
	\begin{equation}
	\fnorm{Q^\perp \bigotimes_{j = 1}^k T_{i_{\tau(j)}} Q^\perp} \leq \prod_{j=1}^k \fnorm{T_{i_j}}
	\end{equation}
	 for all $\tau \in S_4$. This follows directly from Cauchy-Schwarz.
	Altogether we conclude that
	\begin{equation}
	\begin{aligned}
		&\EE_{U \sim \operatorname{Haar}(\Cliff(d))}[S^4_{\mc T}]  \\
			&\leq \sum_{i_1, \ldots, i_4 = 1}^r    \prod_{j=1}^4  |\lambda_{i_j}|  \fnorm{T_{i_j}}^2 \sum_{\lambda \vdash k,\ l(\lambda) \leq d} d_\lambda \left[ \frac{1}{d^2D^+_\lambda}+\frac{1}{D^-_\lambda}\right] \\
			& \leq C \tnorm{J(\mc T)}^4
	\end{aligned}
	\end{equation} 
	with some constant $C > 0$ independent of $d$. The last step follows from the dimensions given in Theorem~\ref{thm:SchurWeylClifford}. 
\end{proof}

\subsection{Proof of Theorem~\ref{thm:recoveryguarantee} (recovery guarantee)}\label{ssec:proof_of_the_recovery_guarantee}

We consider the following measurements: For a map $\mc X \in \Lin(H_d)$ the measurement outcomes $f \in \RR^m$ are given by 
\begin{equation}
\begin{aligned}\label{eq:FmeasurementsExplicit}
	f_i &= \Favg(\mc C_i,\mc X) + \epsilon_i  \\
		&= \frac{1}{d+1} \left[ d\, ( \mc C_i, \mc X) 
		+ \frac{1}{d} \Tr(\mc X\ad ( \Id))\right]  + \epsilon_i,
\end{aligned}
\end{equation}
where $\mathcal C_i$ are random Clifford channels and $\epsilon \in \RR^m$ accounts for additional additive noise. 

To make use of the proof techniques developed for low rank matrix reconstruction \cite{KueRauTer15,KabKueRau15}, we will in the following work in the Choi representation of channels.
This has the advantage, that the Kraus rank directly translates to the familiar matrix rank. We define the Choi matrix of a map $\mc X \in  \Lin(H_d)$ as 
\begin{equation}\label{eqn:choi}
	J(\mc X) = (\mc X \otimes \Id)( \ketbra{\psi}{\psi}),
\end{equation}
where $| \psi \rangle =d^{-1/2} \sum_{k=1}^d \ket k \otimes \ket k \in \CC^d \otimes \CC^d$ is the maximally entangled state vector. 
The Choi matrix of a map is positive semi-definite if and only if the map is completely positive. We denote the cone of positive semi-definite matrices by $\operatorname{Pos}_{d^2}$.
A channel $\mc X$ is trace-preserving and unital if 
and only if both partial traces of the Choi matrix yield the maximally mixed state, i.e.\ $\Tr_1(J(\mc X))= \Tr_2(J(\mc X)) = \Id/d$. 
We will denote the set of Choi matrices that correspond to channels in $\Lin_\utp$ by $J(\Lin_\utp)$. 
Furthermore, we define $J(\operatorname{V}_\utpn)$ as the set of Choi matrices corresponding to trace- and identity-annihilating channels, i.e., both partial traces of operators in $J(\operatorname{V}_\utpn)$ vanish.
Moreover, recall that the inner product on $\Lin_\utp$ we introduced in \eqref{eq:inner_product} coincides with the Hilbert-Schmidt inner product of the corresponding Choi matrices \eqref{eq:inner_product_choi}.
Adhering to this correspondence, we slightly abuse notation and 
use $(\mc X, \mc Y)$ and $(J(\mc X), J(\mc Y))$ interchangeably. 

To formalise the robustness of our reconstruction we need to introduce the following notation. 
For a Hermitian matrix $Z\in H_d$ let $\lambda$ be the largest eigenvalue with an eigenvector $v$. 
We write $Z|_1 = \lambda \ketbra v v$ for the best unit rank approximation to $Z$
and $Z|_c \coloneqq Z-Z|_1$ denotes the corresponding ``tail''. 

In terms of the Choi matrix of $\mc X$ the measurement outcomes $f \in \RR^m$ read
\begin{align}\label{eq:ChoiMeasurements}
	f_i &= \frac{1}{d+1} \left[ d\, ( J(\mc C_i), J(\mc X)) 
		+ \Tr(J(\mc X))\right]  + \epsilon_i,
\end{align}
The underlying linear measurement map $\mathcal A: H_{d^2} \to \mathbb{R}^m$ is given by 
\begin{equation}\label{eq:measurements}
	\mathcal A_i(X) = \frac{1}{d+1}\left[d (J(\mc C_i),X) + \Tr(X)\right].
\end{equation}
 Since unital and trace preserving maps $\mc X$ have trace normalised Choi matrices the second trace-term of the measurement map is just a constant shift. We also define the set of measurement matrices $\{A_i\}_{i=1}^b$ that encode the measurement map as $\mc A_i(X) = (A_i, X)$:
$A_i = \frac{d}{d+1}\left[ J(\mc C_i) + \Id/d\right]$, where each $\mathcal{C}_i$ is a gate that is chosen uniformly at random (according to the Haar measure) from the multi-qubit Clifford group. 

In the Choi representation, we want to consider the optimisation problem
\begin{equation}\label{eq:Choi-algorithm}
\begin{split}
	\operatorname*{minimise}_Z \quad &\norm{\mathcal{A}(Z)- f}_{\ell_q} \\
	\operatorname{subject\ to}\quad &Z \in J(\Lin_\utp) \cap \operatorname{Pos}_{d^2},
\end{split}
\end{equation}
where we allow the minimisation of an arbitrary $\ell_q$-norm. The optimisation problem \eqref{eq:algorithm} is equivalent to \eqref{eq:Choi-algorithm} for $q=2$. 

We are interested in using the optimisation procedure \eqref{eq:Choi-algorithm} for the recovery of unitary quantum channels.  
In this section, we will derive the following recovery guarantee:

\begin{theorem}[Recovery guarantee]\label{thm:Choi-recovery}
	Let $\mc A: H_{d^2} \to \RR^m$ be the measurement map  \eqref{eq:measurements} with 
	\begin{equation}
		m \geq c d^2 \log(d).
	\end{equation}
	Then, for all  $X \in J(L_\utp) \cap \operatorname{Pos}_{d^2}$ given 
	 noisy observations $f = \mc A(X) + \epsilon \in \RR^m$, the minimiser $Z^\sharp$ of the optimisation problem \eqref{eq:Choi-algorithm} fulfils for $p \in \{1, 2\}$
	\begin{equation}
\begin{aligned}\label{eq:unitary_recerror}
	\norm{Z^\sharp - X}_p  
		\leq \tilde{C}_1  \tnorm{X|_c}  +  2 \tilde{C}_2 d^2 m^{-1/q} \norm{\epsilon}_{\ell_q}
\end{aligned}
\end{equation}
	with probability at least $1-\e^{-c_f m}$ over the random measurements. 
	The constants $\tilde C_1, \tilde C_2, c, c_f>0$  only depend on each other.
\end{theorem}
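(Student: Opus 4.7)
The plan is to adapt the conic null-space / small-ball machinery of Refs.~\cite{KueRauTer15,KabKueRau15} to the Clifford-based measurement map $\mc A$, exploiting the second-moment identity of Corollary~\ref{lem:secondMomentBound} and the fourth-moment bound of Lemma~\ref{lem:fourthMomentCliff}. First, I would reduce the statement to a restricted lower isometry property. Because feasibility of the optimisation \eqref{eq:Choi-algorithm} forces $Z^\sharp \in J(\Lin_\utp) \cap \operatorname{Pos}_{d^2}$, the difference $H \coloneqq Z^\sharp - X$ is Hermitian and has vanishing partial traces, so $J^{-1}(H) \in \operatorname{V}_\utpn$. Optimality further forces $\|\mc A(H)\|_{\ell_q} \leq 2 \|\epsilon\|_{\ell_q}$, and positivity of $Z^\sharp$ together with the Schatten-norm decomposition along the rank-$1$ direction of $X$ yields a ``cone-like'' constraint that controls $\|H|_c\|_1$ in terms of $\|H|_1\|_1$ up to the tail term $\|X|_c\|_1$. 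The goal then becomes a uniform lower bound $\|\mc A(H)\|_{\ell_q} \gtrsim m^{1/q}\, \|H\|/d^2$ on the intersection of this cone with the unit sphere of $\|\argdot\|$ (equivalently the Frobenius ball of Choi matrices in $\operatorname{V}_\utpn$).

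Second, I would establish this lower bound by Mendelson's small-ball method. This requires two ingredients. \emph{(i)~Marginal tail function:} For any fixed $H$ in the cone with $\|H\|=1$, Paley--Zygmund applied to $S^2_{J^{-1}(H)}$ (as defined in \eqref{eq:SmcT}) gives
\begin{equation}
\Pr\bigl[|\mc A_1(H)| \geq \xi\bigr] \geq (1-\xi^2)^2\, \frac{\EE[S^2_{J^{-1}(H)}]^2}{\EE[S^4_{J^{-1}(H)}]}.
\end{equation}
Corollary~\ref{lem:secondMomentBound} gives $\EE[S^2_{J^{-1}(H)}] \geq \tfrac{d^2}{d^2-1}\|H\|^2$ (the trace- and identity-annihilating conditions are exactly what the $\operatorname{V}_\utpn$ constraint delivers), while Lemma~\ref{lem:fourthMomentCliff} bounds $\EE[S^4_{J^{-1}(H)}] \lesssim \|H\|_1^4$. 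Combining and using the cone-structure bound $\|H\|_1 \lesssim \sqrt{d}\, \|H\|$ (valid on the descent cone of trace norm at a rank-$1$ point) yields a dimension-independent tail probability. \emph{(ii)~Mean empirical width:} The width of the cone intersected with the unit Frobenius ball, with respect to the random matrices $A_i$ of \eqref{eq:measurements}, is bounded by a Gaussian-type width argument; the $3$-design property of the Clifford group suffices for a standard comparison argument, yielding a width of order $\sqrt{d^3\log d}/m^{1/2}$ in the relevant normalisation.

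Third, Mendelson's inequality (e.g.~\cite[Thm.~3.5]{KueRauTer15} or the $\ell_q$-variant of \cite{KabKueRau15}) combines these two bounds to establish, with probability at least $1 - \e^{-c_f m}$ as soon as $m \geq c d^2 \log d$, the uniform bound
\begin{equation}
 \inf_{H \in \text{cone},\, \|H\|=1} \frac{\|\mc A(H)\|_{\ell_q}}{m^{1/q}} \geq \frac{c'}{d^2}.
\end{equation}
Plugging $H = Z^\sharp - X$ into this inequality and using $\|\mc A(H)\|_{\ell_q} \leq 2 \|\epsilon\|_{\ell_q}$ together with the cone-bound that relates $\|H\|_p$ to $\|H\|$ plus the tail correction $\|X|_c\|_1$ yields precisely \eqref{eq:unitary_recerror}. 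The case $q=2$ reduces the bound to the form stated in Theorem~\ref{thm:recoveryguarantee}.

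The main obstacle I expect is carefully propagating the cone constraint through these estimates: the descent cone at an approximately rank-$1$ point (rather than an exactly rank-$1$ point) has to accommodate the tail term $\|X|_c\|_1$, and the appropriate ``approximate rank'' inequality $\|H\|_1 \lesssim \sqrt{d}\, \|H\| + \text{tail}$ must be derived so as to feed back cleanly into both the Paley--Zygmund step and the final error bound. A secondary technical point is verifying that the bound on the mean empirical width can be derived using only the $3$-design structure of the Clifford group together with Lemma~\ref{lem:fourthMomentCliff}, since the full Haar bound of Lemma~\ref{lem:fourthMomentU} is not available.
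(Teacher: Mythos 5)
Your overall strategy---Mendelson's small-ball method with Paley--Zygmund applied to $S_{\mc T}^2$, fed by the second-moment identity of Corollary~\ref{lem:secondMomentBound} and the fourth-moment bound of Lemma~\ref{lem:fourthMomentCliff}, plus a separate bound on the mean empirical width---is the same as the paper's. However, two of your quantitative claims would break the argument as written. First, the bound $\|H\|_1 \lesssim \sqrt{d}\,\|H\|_2$ that you invoke on the cone is wrong for the rank-one setting and is inconsistent with your own conclusion: on the relevant set the correct statement is $\|H\|_1 \leq 3\|H\|_2$, a dimension-independent constant (Lemma~\ref{lem:effective_low_rank}), and it is only with a constant ratio that Paley--Zygmund yields a dimension-independent marginal tail function. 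With $\sqrt{d}$ the ratio $\EE[S^2]^2/\EE[S^4]$ degrades to $\LandauOmega(1/d^2)$, and Mendelson's inequality then forces $m$ to scale far worse than $d^2\log(d)$. Second, your mean empirical width of order $\sqrt{d^3\log d}/\sqrt{m}$ does not match what is needed: the correct bound is $W_m \lesssim \sqrt{\log d}/d$ (Lemma~\ref{lem:Wm}), obtained from non-commutative Khintchine together with a matrix Chernoff bound, for which only the first moment $\EE[J(\mc C_i)] = \Id/d^2$ enters---so your closing worry about needing the $3$-design property for the width is moot, but the exponent you quote would not close the argument at $m \sim d^2\log(d)$.

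On the structural side, the paper does not run a descent-cone argument at $X$. It proves a robust rank-one null space property uniformly over the set of trace- and identity-annihilating Choi matrices of effective unit rank (Lemma~\ref{lem:NSPforA}, via Lemma~\ref{lem:NSPcriterion}) and then invokes Theorem~\ref{thm:NSPconsequence} (Theorem~12 of Ref.~\cite{KabKueRau15}) as a black box. This is precisely what dissolves the ``main obstacle'' you identify: the NSP consequence automatically produces the $\tnorm{X|_c}$ tail term for approximately unit-rank $X$, so no approximate-rank descent-cone inequality needs to be derived, and the resulting guarantee is uniform over all feasible $X$ rather than tied to the tangent geometry at a particular $X$. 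If you replace your $\sqrt{d}$ bound by the constant-ratio bound and route the conclusion through the null space property rather than a descent cone, your argument becomes the paper's proof.
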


The recovery guarantee of Theorem~\ref{thm:recoveryguarantee} is the special case of Theorem~\ref{thm:Choi-recovery} for $q=2$ and $p=2$ restricted to measurements of a unitary quantum channel. In contrast, the more general formulation of Theorem~\ref{thm:Choi-recovery} allows for a violation of the unit rank assumption. The first term \eqref{eq:unitary_recerror} is meant to absorb violations of this assumption into the error bound.
We note in passing that the choice of $p=1$ actually yields a tighter bound compared to $p=2$. 

More generally, one can ask for a recovery guarantee if the measured map $X$ can not be guaranteed to be unital or trace preserving. From Eq.~\ref{eqn:utpprojection} one observes that as long as the map $X$ is trace normalised the measured \AGFs\ are identical to the average fidelities of the projection $X_\utp$ of $X$ onto the affine space of unital and trace-preserving maps. But since $X_\utp$ is not necessarily positive, it is not straight-forward to apply Theorem~\ref{thm:Choi-recovery} to $X_\utp$. We expect the reconstruction algorithm to recover the trace-preserving and unital part of an arbitrary map. The reconstruction error \eqref{eq:unitary_recerror} is expected to  additionally feature a term proportional to the distance of $X$ to the intersection of $L_\utp$ with the cone $\operatorname{Pos}_{d^2}$ of positive semi-definite matrices.

Another way to proceed is to use a trace-norm minimisation subject to unitality, trace-preservation and the data constraints $\norm{\mathcal A(Z) - f}_{\ell_q} < \eta$. The derivation of Theorem~\ref{thm:Choi-recovery} readily yields a recovery guarantee for the trace-norm minimisation that is essentially identical to Theorem~\ref{thm:Choi-recovery}. See Ref.~\cite{KabKueRau15} for details on the argument. The main difference is that such a recovery guarantee does not need to assume complete positivity of the map that is to be reconstructed. Correspondingly, the result of the trace-norm minimisation is not guaranteed to be positive semi-definite. This implies that the robustness of this
algorithm against violations of the unitality and trace-preservation is different compared to \eqref{eq:Choi-algorithm}. For example, the \AGFs\ of a not necessarily unital or trace-preserving map $\mc X$ to unitary gates coincide with the \AGFs\ of its unital and trace-preserving part $\mc X_\utp$ as long as $X$ is still normalised in trace-norm. This is a consequence of Eq.~\ref{eqn:utpprojection}. Thus, a trace-norm minimisation will reconstruct $X_\utp$ up to an error given by $\tnorm{J(\mc X_\utp)|_c}$ and noise.
We leave a more extensive study of the robustness of the discussed reconstruction algorithms against violations of this particular model assumption to future work. 

The proof of the recovery guarantee relies on establishing the so-called \emph{null space property (NSP)} for the measurement map $\mc A$. We refer to Ref.\ \cite{rauhut_book} for a history of the term. The  NSP  ensures injectivity, i.e.~informational completeness, of the measurement map $\mc A$ restricted to the matrices that should be recovered. Informally, for our purposes, a measurement map $\mathcal{A}: H_{d^2} \to \mathbb{R}^m$ obeys the NSP if no unit rank matrix in $J(\operatorname{V}_\utpn)$ is in the kernel (nullspace) of $\mathcal{A}$. 

\begin{definition}[Robust NSP, Definition~3.1 in Ref.~\cite{KabKueRau15}]\label{def:NSP}
$\mathcal{A}: H_{d^2} \to \mathbb{R}^m$ satisfies the \emph{null space property (NSP)} with respect to $\ell_q$ with constant $\tau >0$ if for all $X \in J(\operatorname{V}_\utpn)$ 
\begin{equation} \label{eq:NSPcondition} 
	\fnorm{ X|_1} \leq \frac{1}{2} \tnorm{X|_c} + \tau \| \mathcal{A} (X) \|_{\ell_q}.
\end{equation}
\end{definition}
The factor $1/2$ in front of the first term of \eqref{eq:NSPcondition} is only one possible choice. In fact, one can instead introduce a constant with value in $(0, 1)$. The constants appearing in Theorem~\ref{thm:Choi-recovery} then depend on the specific value of the pre-factor. In particular, the different choices of the pre-factor in the definition of the NSP result in different trade-offs between the constant $c$ that appears in the sampling complexity and the constant $\tilde C_1$ that decorates the model-mismatch term in the reconstruction error. For the simplicity, we leave these dependencies implicit. 

The main consequence of the NSP that we require is captured by the following reformulation of Theorem~12 of \cite{KabKueRau15}. 
\begin{theorem}\label{thm:NSPconsequence}
	Fix  $p \in \{1,2\}$ and let $\mathcal A: H_{d^2} \to \RR^m$ satisfy the NSP with constant  $\tau > 0$.  Then, for all $Y, Z \in J(\Lin_\utp)$
	\begin{equation}\begin{split}
		\norm{Z-Y}_p &\leq \frac{9}{2} \left[ \tnorm{Z} - \tnorm{Y} + 2 \tnorm{Y|_c} \right] \\
	 	&+ 7 \tau \norm{\mathcal{A}(Z - Y)}_{\ell_q}. 
	 \end{split}\end{equation} 
\end{theorem}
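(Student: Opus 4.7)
The plan is to follow the proof scheme of \cite[Theorem~12]{KabKueRau15}, of which the present statement is a reformulation tailored to the affine space $J(\Lin_\utp)$. The argument would proceed in three steps.

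First, set $H := Z - Y$ and observe that, since $Y, Z \in J(\Lin_\utp)$ share identical partial traces ($= \Id/d$), the difference lies in $J(\operatorname{V}_\utpn)$. Definition~\ref{def:NSP} therefore applies to $H$ directly and yields
\begin{equation*}
\fnorm{H|_1} \leq \tfrac{1}{2}\tnorm{H|_c} + \tau\|\mathcal{A}(H)\|_{\ell_q}. \qquad (\star)
\end{equation*}

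Second, the tail $\tnorm{H|_c}$ would be controlled via the tangent-space decomposition of $H$ with respect to $Y|_1$. Let $P$ be the rank-one spectral projector of $Y|_1$ and write $H = H_T + H_{T^\perp}$ with $H_T := PH + HP - PHP$ and $H_{T^\perp} := P^\perp H P^\perp$. Since $Y|_1$ and $H_{T^\perp}$ have orthogonal row/column supports, $\tnorm{Y|_1 + H_{T^\perp}} = \tnorm{Y|_1} + \tnorm{H_{T^\perp}}$. Combining this identity with a nuclear-norm duality argument (using the dual certificate $W = \sign(Y|_1) + \sign(H_{T^\perp})$, whose operator norm equals one) produces an oracle-type inequality of the form
\begin{equation*}
\tnorm{H|_c} \leq c_1\bigl(\tnorm{Z} - \tnorm{Y} + 2\tnorm{Y|_c}\bigr) + c_2\fnorm{H|_1}, \qquad (\dagger)
\end{equation*}
in which $c_2$ must be strictly less than $2$ for the next step to close.

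Finally, substituting $(\dagger)$ into $(\star)$ yields a pair of linear inequalities in $\fnorm{H|_1}$ and $\tnorm{H|_c}$ that can be solved simultaneously precisely because $c_2/2 < 1$; the bound on $\|Z-Y\|_p$ for $p \in \{1,2\}$ then follows from the elementary estimate $\|H\|_p \leq \fnorm{H|_1} + \tnorm{H|_c}$, valid because $H|_1$ has rank one (so its Schatten norms all coincide) and $\|H|_c\|_p \leq \tnorm{H|_c}$ by Schatten monotonicity. The specific numerical constants $9/2$ and $7$ arise from optimally balancing the weights in this combination. The principal obstacle lies in the sharpness required in $(\dagger)$: the naive triangle-inequality bound $\|H\|_1 \leq \tnorm{H_T} + \tnorm{H_{T^\perp}}$ together with the rank-two estimate $\tnorm{H_T} \leq 2\fnorm{H|_1}$ produces only $c_2 = 2$, exactly at the threshold where the NSP prefactor of $1/2$ fails to close the bootstrap; the refined duality argument of \cite{KabKueRau15} that pushes $c_2$ strictly below $2$ constitutes the technical heart of the proof.
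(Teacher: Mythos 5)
The paper gives no proof of this statement at all: it is used as a black box, cited directly as a reformulation of Theorem~12 of Ref.~\cite{KabKueRau15}. So the relevant comparison is with the proof in that reference, and there your sketch has a genuine gap at its self-declared ``technical heart.'' Your step one is correct ($H=Z-Y$ has vanishing partial traces, so the NSP applies to it), and your step three is correct in structure. The problem is step two. The inequality $(\dagger)$ is not obtained by a tangent-space decomposition with a dual certificate; that machinery belongs to the Gross/Cand\`es--Recht line of low-rank recovery proofs, and the null space property framework exists precisely to avoid it. Concretely, your decomposition controls $\tnormn{H_{T^\perp}}$, the tail of $H$ relative to the spectral projector of $Y|_1$, whereas the NSP~\eqref{eq:NSPcondition} is phrased in terms of $H|_1$ and $H|_c$, the spectral head and tail of $H$ itself. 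These are different objects, and bridging them costs exactly the factors you cannot afford: by your own accounting the route lands at $c_2=2$, where the bootstrap against the NSP prefactor $1/2$ becomes vacuous, and the ``refined duality argument'' you defer to does not exist in Ref.~\cite{KabKueRau15}.

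What that reference actually uses is a purely spectral comparison of singular values (the matrix analogue of the standard vector lemma behind NSP-based error bounds): for any $Y$, $Z$ and $H=Z-Y$,
\begin{equation*}
	\tnormn{H|_c} \;\leq\; \tnormn{Z} - \tnormn{Y} + \tnormn{H|_1} + 2\,\tnormn{Y|_c},
\end{equation*}
which is your $(\dagger)$ with $c_1=c_2=1$ (using $\tnormn{H|_1}=\fnormn{H|_1}$ since $H|_1$ has unit rank), proved from singular-value inequalities for sums of matrices with no projector or certificate in sight. Feeding this into the NSP gives $\tnormn{H|_c}\leq 2\Delta + 2\tau\normn{\mathcal A(H)}_{\ell_q}$ with $\Delta \coloneqq \tnormn{Z}-\tnormn{Y}+2\tnormn{Y|_c}$, then $\fnormn{H|_1}\leq \Delta + 2\tau\normn{\mathcal A(H)}_{\ell_q}$, and hence $\norm{H}_p\leq 3\Delta + 4\tau\normn{\mathcal A(H)}_{\ell_q}$, comfortably inside the stated constants $9/2$ and $7$ (which are the general-$\rho$ formulas $(1+\rho)^2/(1-\rho)$ and $(3+\rho)/(1-\rho)$ of Ref.~\cite{KabKueRau15} evaluated at $\rho=1/2$). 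Without this singular-value lemma -- or an equally effective replacement, which your sketch does not supply -- the argument does not close.
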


In fact, the measurement $\mathcal{A}$ of \eqref{eq:ChoiMeasurements} obeys the NSP. More precisely:
\begin{lemma}\label{lem:NSPforA}
	Let $\mathcal{A} : H_{d^2} \to \RR^m$ be the measurement map defined in \eqref{eq:measurements} with $m \geq c d^2 \log(d)$. Then $\mc A$ obeys the NSP property with constant $\tau = C^{-1} d(d+1)m^{-1/q}$ with probability of at least $1- \e^{-c_f m}$. The constants $C, c, c_f > 0$ only depend on each other.
\end{lemma}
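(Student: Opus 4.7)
The plan is to derive the NSP via Mendelson's small-ball method, following the strategy of \cite{KueRauTer15,KabKueRau15}. The NSP inequality \eqref{eq:NSPcondition} can be reformulated as a uniform lower bound of $\|\mathcal{A}(X)\|_{\ell_q}$ over the descent cone $\mathcal{T} \subset J(\operatorname{V}_\utpn)$ of matrices whose best rank-one approximation dominates the tail in the sense $\fnorm{X|_1} \geq \tfrac12 \tnorm{X|_c}$. A standard convexity argument (Lemmas~10--12 of \cite{KabKueRau15}) further reduces the problem to establishing, with high probability, a bound of the form $\|\mathcal{A}(X)\|_{\ell_q} \geq \kappa\, m^{1/q} \fnorm{X}$ uniformly for $X \in J(\operatorname{V}_\utpn)$ of rank at most two, with $\kappa$ of order $1/(d(d+1))$ due to the normalisation appearing in $\mathcal{A}_i$.

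Mendelson's inequality reduces such a lower bound to two ingredients. The small-ball probability $Q_{2\xi}(\mathcal{T}) = \inf_{X \in \mathcal{T}} \Pr[|\mathcal{A}_i(X)| \geq 2\xi \fnorm{X}] \geq p_0 > 0$ I would obtain through Paley--Zygmund. Rewriting $\mathcal{A}_i(X)$ in terms of $S_{\mathcal T}$ from \eqref{eq:SmcT}, Corollary~\ref{lem:secondMomentBound} supplies $\EE[S^2_{\mathcal T}] \geq \tfrac{d^2}{d^2-1} \|X\|^2$ for $X$ in the $V_\utpn$-subspace, while Lemma~\ref{lem:fourthMomentCliff} supplies $\EE[S^4_{\mathcal T}] \leq C\,\tnorm{X}^4 \leq 4C \fnorm{X}^4$ on rank-$\leq 2$ matrices. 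Their ratio is dimension-independent, so $\xi$ and $p_0$ can be chosen as universal constants. The complementary ingredient is the mean empirical width $W_m(\mathcal{T}) = \EE \sup_{X \in \mathcal{T},\, \fnorm{X}\leq 1} \langle \sum_i \epsilon_i A_i, X\rangle$ with Rademacher $\epsilon_i$. Using $\tnorm{X} \leq \sqrt{2}\fnorm{X}$ for $\operatorname{rank}(X)\leq 2$ together with the duality $\langle B,X\rangle \leq \snorm{B}\tnorm{X}$ and a matrix Khintchine / noncommutative Rosenthal bound applied to the symmetrised sum of the essentially unit-norm matrices $A_i = \frac{d}{d+1}(J(\mathcal{C}_i) + \Id/d)$, one obtains $W_m(\mathcal{T}) \leq C_1 \sqrt{m\log d}$.

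Combining these via Mendelson's theorem, the condition $m \geq C_0\, W_m(\mathcal{T})^2/(p_0 \xi^2)$ gives, after restoring the $d(d+1)$ normalisation hidden between $\mathcal{A}_i$ and $S_{\mathcal T}$, the threshold $m \geq c\, d^2 \log(d)$ and a robust bound $\|\mathcal{A}(X)\|_{\ell_q} \geq C/(d(d+1))\, m^{1/q} \fnorm{X}$ uniformly on the descent cone, with failure probability at most $\mathrm{e}^{-c_f m}$. An elementary rearrangement, using the rank-one nature of $X|_1$ to replace $\fnorm{X}$ by $\fnorm{X|_1}$ modulo a $\tnorm{X|_c}$-correction absorbed into the $\tfrac12$-term of \eqref{eq:NSPcondition}, then yields the NSP with $\tau = C^{-1} d(d+1) m^{-1/q}$.

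The main obstacle is the small-ball estimate. Because the multi-qubit Clifford group is only a unitary $3$-design, a direct Haar argument is unavailable; the Paley--Zygmund step must be carried out \emph{on} $\operatorname{V}_\utpn$, where Corollary~\ref{lem:secondMomentBound} supplies the crucial $d^2$-enhancement of the second moment that would be absent for generic hermiticity-preserving maps. Simultaneously, the fourth-moment control hinges on the dimension-free Clifford bound of Lemma~\ref{lem:fourthMomentCliff}, whose proof in turn relies on the representation-theoretic properties of the stabiliser projector $Q$ developed in Section~\ref{sec:rep_theory}. Ensuring that both bounds are applied consistently with the affine projection onto $\operatorname{V}_\utpn$ that is implicit in the definition of the descent cone is the delicate accounting step; once this is handled, the remainder is a routine application of the low-rank matrix recovery toolbox.
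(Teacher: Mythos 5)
Your proposal follows essentially the same route as the paper's proof: Mendelson's small-ball method applied to the set of potential NSP-violators $\{Z \in J(\operatorname{V}_\utpn) : \fnorm{Z|_1}\ge \tfrac12 \tnorm{Z|_c},\ \fnorm{Z}=1\}$, with the marginal tail function controlled via Paley--Zygmund using exactly Corollary~\ref{lem:secondMomentBound} (valid because the Clifford group is a $3$-design) and Lemma~\ref{lem:fourthMomentCliff}, and the mean empirical width controlled by non-commutative Khintchine together with a matrix Chernoff bound, yielding $\LandauO(\sqrt{\log d}/d)$ and hence the $m\gtrsim d^2\log d$ threshold. The one cosmetic discrepancy is that the paper does not reduce to literal rank-$\le 2$ matrices (elements of the relevant set can have full rank); it instead works with the \emph{effective unit rank} bound $\tnorm{X}\le 3\fnorm{X}$ from Lemmas~\ref{lem:NSPcriterion} and~\ref{lem:effective_low_rank}, under which your small-ball and width estimates go through verbatim with only the constants changed.
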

The proof of Lemma~\ref{lem:NSPforA} is developed in the subsequent section.
\begin{proof}[Proof of Theorem~\ref{thm:Choi-recovery}]
With the requirements of Lemma~\ref{lem:NSPforA} we can apply Theorem~\ref{thm:NSPconsequence} and set  $Z = Z^\sharp$, the reconstructed result of the algorithm, as well as $Y = X$. 
The theorem's statement then reads
\begin{equation}\label{eq:rec:pnorm}
\begin{aligned}
	\norm{Z^\sharp - X}_p  
		&\leq 9 \tnorm{X|_c}  \\
		&+ 7\tau \norm{\mathcal{A}(Z^\sharp - X)}_{\ell_q}, \\
\end{aligned}
\end{equation}
because $\| X \|_1 = \| Z \|_1 = 1$ is true for arbitrary Choi matrices of (trace-preserving) quantum channels.
The second term is dominated by
\begin{equation}\label{eq:rec:secterm}
\begin{aligned}
 	\norm{\mathcal{A}(Z^\sharp- X)}_{\ell_q}  
 		&\leq  \left[ \norm{\mathcal{A}( X - Z^\sharp)+ \epsilon}_{\ell_q} +  \norm{\epsilon}_{\ell_q} \right]  \\
 		&\leq 2 \norm{\epsilon}_{\ell_q},
 \end{aligned}
\end{equation}
 where the last step follows from $Z^\sharp$ being the minimiser of \eqref{eq:Choi-algorithm}. Thus, we can replace it  by any point in the feasible set  including $X$ on the right hand side of the first line.
Inserting \eqref{eq:rec:secterm} and the NSP constants of Lemma~\ref{lem:NSPforA} into \eqref{eq:rec:pnorm} the assertion of the theorem follows. 
\end{proof}

In the remainder of this section, we will establish the NSP for our measurement matrix $\mathcal A$ as summarised in Lemma~\ref{lem:NSPforA}. 

\subsubsection*{Establishing the null space property}

To prove Lemma~\ref{lem:NSPforA} at the end of this section we start with deriving a criterion for the NSP property following the approach taken in Refs.~\cite{KabKueRau15,KliKueEis17}. 

\newcommand{\eurSet}{\ensuremath{\mathbf\Omega}}

\begin{lemma}\label{lem:NSPcriterion}
	A map $\mathcal{A}: H_{d^2} \to \RR^m$ obeys the null space property with respect to $\ell_q$-norm with constant $\tau > 0$ if 
	\begin{equation}
		\inf_{X \in \eurSet} \norm{\mathcal A(X)}_{\ell_1} \geq \frac{m^{1 - 1/q}}{\tau}
	\end{equation}
	with 
	\begin{equation}\nonumber
		\eurSet \coloneqq \{ Z \in J(\operatorname{V}_\utpn) \mid \fnorm{Z|_1} \geq 
		\frac{1}{2} \tnorm{Z|_c}, \fnorm{Z} = 1  \} \, .
	\end{equation}
\end{lemma}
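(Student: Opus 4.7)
The lemma is a direct implication statement about the defining inequality of the NSP (Definition~\ref{def:NSP}), and the plan is to verify it via a short homogeneity argument combined with a two-way case split, both completely standard in compressed sensing. Two observations drive the argument. First, both sides of the NSP inequality are positively homogeneous in $X$: the map $\mc A$ is linear, the decomposition satisfies $(\alpha X)|_1 = \alpha X|_1$ and $(\alpha X)|_c = \alpha X|_c$ for $\alpha > 0$, and all norms appearing are absolutely homogeneous. Second, $\eurSet$ is exactly the intersection of the unit Frobenius sphere in $J(\operatorname{V}_\utpn)$ with the cone on which $\fnorm{Z|_1} \geq \tfrac{1}{2}\tnorm{Z|_c}$ holds, i.e.\ the cone where the NSP bound is not already automatic.

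\textbf{Step 1: Trivial case.} Fix $X \in J(\operatorname{V}_\utpn)$. If $\fnorm{X|_1} \leq \tfrac{1}{2}\tnorm{X|_c}$ (in particular if $X = 0$), then the NSP inequality \eqref{eq:NSPcondition} holds immediately, since $\tau \|\mc A(X)\|_{\ell_q} \geq 0$.

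\textbf{Step 2: Normalise and invoke the hypothesis.} Otherwise $X\neq 0$ and I set $Z := X/\fnorm{X}$. By the homogeneity observation, $Z$ still lies in the linear subspace $J(\operatorname{V}_\utpn)$, has $\fnorm{Z} = 1$, and satisfies $\fnorm{Z|_1} \geq \tfrac{1}{2}\tnorm{Z|_c}$; hence $Z\in\eurSet$, and the hypothesis yields $\|\mc A(Z)\|_{\ell_1} \geq m^{1-1/q}/\tau$. The standard Hölder comparison $\|y\|_{\ell_1} \leq m^{1-1/q}\|y\|_{\ell_q}$ on $\RR^m$ converts this to $\|\mc A(Z)\|_{\ell_q} \geq 1/\tau$. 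Combined with $\fnorm{Z|_1} \leq \fnorm{Z} = 1$ (which is clear, since $\fnorm{Z|_1}$ is the magnitude of a single eigenvalue of $Z$ and is therefore dominated by the Frobenius norm), this gives $\fnorm{Z|_1} \leq \tau\|\mc A(Z)\|_{\ell_q}$. Multiplying through by $\fnorm{X}$ and using homogeneity again yields $\fnorm{X|_1} \leq \tau\|\mc A(X)\|_{\ell_q}$, which is strictly stronger than the required NSP bound.

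There is essentially no obstacle in this argument: the content of Lemma~\ref{lem:NSPcriterion} is a clean reformulation that replaces a family of inequalities indexed by $X \in J(\operatorname{V}_\utpn)$ with a single scalar lower bound on the (compact-looking) set $\eurSet$. The genuinely non-trivial task --- verifying that the concrete measurement map $\mc A$ of \eqref{eq:measurements} actually meets this scalar lower bound with high probability, giving Lemma~\ref{lem:NSPforA} --- is deferred to the remainder of Section~\ref{ssec:proof_of_the_recovery_guarantee} and is where the second- and fourth-moment bounds of Sections~\ref{subsec:secondmoment}--\ref{subsec:fourthmoment} would plug in via a Mendelson-style small-ball argument.
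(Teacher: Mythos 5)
Your proof is correct and follows essentially the same route as the paper's: reduce to the set $\eurSet$ via the trivial case and homogeneity, then use $\fnorm{X|_1}\leq\fnorm{X}=1$ together with the H\"older comparison between $\ell_1$ and $\ell_q$ norms on $\RR^m$. The only cosmetic difference is that you carry out the rescaling explicitly and note that the resulting bound drops the $\tfrac12\tnorm{X|_c}$ term entirely, which the paper leaves implicit.
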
 

\begin{proof}
For matrices $X$ with the property $\fnorm{X|_1} \leq \frac{1}{2} \tnorm{X|_c}$ the NSP condition \eqref{eq:NSPcondition} is satisfied independently of the map $\mathcal{A}$. 
Hence, to establish the NSP for a specific map $\mathcal{A}$ it suffice to show that the condition \eqref{eq:NSPcondition} holds for all $X \in \eurSet = \{ Z \in J(\operatorname{V}_\utpn) \mid \fnorm{Z|_1} \geq \frac{1}{2} \tnorm{Z|_c}, \fnorm{Z} = 1  \}$. The additional assumption of $\fnorm{Z} = 1$ is no restriction since both sides of \eqref{eq:NSPcondition} are absolutely homogeneous functions of the same degree. By definition, for all $X \in \eurSet$ we have $\fnorm{X|_1} \leq \fnorm{X} \leq 1$. Therefore, for $X \in \eurSet$ 
\begin{equation}
	\norm{\mc A(X)}_{\ell_q} \geq \frac{1}{\tau}
\end{equation}
implies the NSP condition \eqref{eq:NSPcondition}.
Using the norm inequality $\norm{x}_{\ell_q} \geq m^{1/q - 1} \norm{x}_{\ell_1}$ yields the criterion of the lemma. 
\end{proof}

Recall that every rank-$r$ matrix $X$ obeys $\tnorm{X}^2/\fnorm{X}^2 \leq r$. This motivates thinking of the matrices of $\eurSet$ as having \emph{effective unit rank} since the norm ratio bounded in $\mathcal{O}(1)$. More precisely, the following statement holds:
\begin{lemma}[Ratio of 1 and 2-norms] \label{lem:effective_low_rank}
Every matrix $X \in \eurSet$ has \emph{effective unit rank} in the following sense:
\begin{equation}
\frac{ \| X \|_1^2}{\| X \|_2^2} \leq 9.
\end{equation}
\end{lemma}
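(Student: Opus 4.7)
The plan is to bound $\|X\|_1$ in terms of $\|X|_1\|_2$ via the triangle inequality, then use the defining property of $\eurSet$ to close the loop. Concretely, I would first split $X = X|_1 + X|_c$ (with $X|_1$ the best unit-rank approximation and $X|_c$ the tail) and apply the triangle inequality in trace norm:
\begin{equation}
\|X\|_1 \leq \|X|_1\|_1 + \|X|_c\|_1.
\end{equation}

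Next, since $X|_1$ has rank one, its nonzero singular value coincides with its Frobenius norm, so $\|X|_1\|_1 = \|X|_1\|_2$. The definition of $\eurSet$ then gives $\|X|_c\|_1 \leq 2\|X|_1\|_2$, hence
\begin{equation}
\|X\|_1 \leq \|X|_1\|_2 + 2\|X|_1\|_2 = 3\|X|_1\|_2.
\end{equation}

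Finally, because $X$ is Hermitian and $X|_1$ is the rank-one piece built from a single eigenvalue--eigenvector pair, $X|_1$ and $X|_c$ are orthogonal in the Hilbert--Schmidt inner product. Hence $\|X\|_2^2 = \|X|_1\|_2^2 + \|X|_c\|_2^2 \geq \|X|_1\|_2^2$, which yields $\|X|_1\|_2 \leq \|X\|_2$. Combining this with the previous display gives $\|X\|_1 \leq 3\|X\|_2$, so $\|X\|_1^2/\|X\|_2^2 \leq 9$, as claimed.

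No step looks like a real obstacle here; the proof is essentially a careful bookkeeping of three elementary facts (triangle inequality, the rank-one identity $\|\cdot\|_1=\|\cdot\|_2$, and Frobenius-orthogonality of the eigendecomposition). The only point that requires a moment of attention is that the decomposition $X=X|_1+X|_c$ yields a Frobenius-orthogonal split even though $X|_1$ is defined via the largest eigenvalue (rather than the largest in magnitude); this is fine because diagonalising $X$ shows that $X|_1$ and $X|_c$ live in orthogonal spectral subspaces regardless of the sign convention.
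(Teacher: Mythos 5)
Your proof is correct and follows essentially the same route as the paper's: split $X = X|_1 + X|_c$, use the triangle inequality together with the rank-one identity $\|X|_1\|_1 = \|X|_1\|_2$, invoke the defining inequality of $\eurSet$ to bound $\|X|_c\|_1 \leq 2\|X|_1\|_2$, and conclude via $\|X|_1\|_2 \leq \|X\|_2 = 1$. If anything, your write-up is cleaner than the paper's (whose intermediate displayed inequalities appear garbled), and your remark about Frobenius-orthogonality of the spectral split is a correct justification of the step the paper leaves implicit.
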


\begin{proof} From $\|X|_1\|_2\leq 1$ and the definition of $\eurSet$ it follows that 
$
	\|X|_1\|_2 + \frac{1}{2} \|X|_1\|_1 \leq \frac{3}{2}.
$
Hence
$
	\frac{1}{2} \|X|_1\|_2+ \|X|_1\|_1 \leq 3
$.
Therefore,
we have that
$
	\|X\|_1 \leq   \|X|_1\|_1+ \|X|_c\|_1\leq 3
$
from which the assertion follows, because every $X \in \eurSet$ has unit Frobenius norm.
\end{proof}

In summary, we want to prove a lower bound on the $\ell_q$-norm of the measurement outcomes for trace- and identity annihilating channels with effective unit Kraus rank. The proof uses Mendelson's small ball method. See Ref.~\cite[Lemma~9]{KliKueEis17} for details of the method as it is stated here, which is a slight generalisation of Tropp's formulation \cite{Tro15} of the original method developed in Refs.~\cite{Men14,KolMen14}. 
Mendelson's proof strategy requires multiple ingredients.  These necessary ingredients will become obvious from the following theorem, which can be found in Ref.~\cite{Tro15} and lies at the heart of the small ball method. 
\begin{theorem}[Mendelson's small ball method]
\label{lem:Mendelson}
Suppose that $\mathcal{A}$ contains $m$ measurements of the form $f_k = \Tr[A_kX]$ where each $A_k$ is an independent copy of a random matrix $A$. Fix $E \subseteq J(\operatorname{V}_\utpn)$ and $\xi >0$ and define
\begin{align}
W_m (E; A)
&\coloneqq 
\EE \left[ \sup_{Z \in E} \Tr \left( Z H \right) \right]
	\, , \quad  
	H = \frac{1}{\sqrt{m}} \sum_{k=1}^m \epsilon_k A_k, \label{eq:empirical_width}
	\\
Q_\xi (E; A ) 
&\coloneqq 
\inf_{Z \in E} \PP \left[ \left| \Tr \left[ A Z \right] \right| \geq \xi \right] \label{eq:marginal_tail_fct},
\end{align}
where the $\epsilon_k$'s are {i.i.d.} Rademacher random variables, {i.e.} are uniformly distributed in $\{-1,1\}$.  
Then, with probability  of at least $1- \e^{-2t^2}$, where $t \geq 0$,
\begin{equation*}
\inf_{Z \in E} \| \mathcal{A}(Z) \|_{\ell_1} \geq 
\sqrt m \left( \xi \sqrt{m} Q_{2\xi} (E; A) - 2 W_m (E; A) - \xi t \right).
\end{equation*}
\end{theorem}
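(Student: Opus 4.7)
The plan is to follow the classical ``small-ball'' argument (essentially due to Mendelson, with the cleanest write-up being Tropp's). The starting observation is the trivial pointwise bound $|s|\geq \xi\,\mathbf 1[|s|\geq \xi]$, which converts the $\ell_1$-norm lower bound into a counting problem,
\begin{equation*}
\|\mc A(Z)\|_{\ell_1} \;=\; \sum_{k=1}^m |\tr(A_k Z)| \;\geq\; \xi\sum_{k=1}^m \mathbf 1\bigl[|\tr(A_k Z)|\geq \xi\bigr].
\end{equation*}
Because indicator functions are not Lipschitz, I would replace them by a piecewise-linear surrogate $\psi_\xi\colon\RR\to[0,1]$ with slope $1/\xi$, vanishing on $[-\xi,\xi]$ and equal to $1$ outside $[-2\xi,2\xi]$. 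The sandwich $\mathbf 1[|s|\geq 2\xi]\leq \psi_\xi(s)\leq \mathbf 1[|s|\geq \xi]$ then delivers both a lower bound
$\|\mc A(Z)\|_{\ell_1} \geq \xi \sum_k \psi_\xi(\tr(A_k Z))$
and a uniform expectation bound
$\inf_{Z\in E}\EE \psi_\xi(\tr(AZ)) \geq Q_{2\xi}(E;A)$.

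The crux is to control the deviation of the empirical sum from its expectation uniformly over $E$. By Rademacher symmetrisation,
\begin{equation*}
\EE\sup_{Z\in E}\Bigl|\sum_k\bigl(\psi_\xi(\tr(A_k Z)) - \EE\psi_\xi\bigr)\Bigr|
\;\leq\; 2\,\EE\sup_{Z\in E}\Bigl|\sum_k \epsilon_k\,\psi_\xi(\tr(A_k Z))\Bigr|.
\end{equation*}
Since $\psi_\xi$ is $(1/\xi)$-Lipschitz and vanishes at $0$, the Ledoux--Talagrand contraction principle strips the surrogate at the cost of the Lipschitz constant,
\begin{equation*}
\EE\sup_{Z\in E}\Bigl|\sum_k \epsilon_k\,\psi_\xi(\tr(A_k Z))\Bigr|
\;\leq\; \frac{1}{\xi}\,\EE\sup_{Z\in E}\Bigl|\sum_k \epsilon_k\,\tr(A_k Z)\Bigr|
\;=\; \frac{\sqrt m}{\xi}\,W_m(E;A).
\end{equation*}
Finally, since $\psi_\xi\in[0,1]$, changing any single $A_k$ shifts the supremum by at most $1$, so McDiarmid's bounded-differences inequality with constants $c_k=1$ upgrades the expectation bound to the tail: with probability at least $1-\e^{-2t^2}$,
\begin{equation*}
\sup_{Z\in E}\Bigl|\sum_k\bigl(\psi_\xi - \EE\psi_\xi\bigr)\Bigr|
\;\leq\; \frac{2\sqrt m}{\xi}\,W_m(E;A) \;+\; \sqrt m\, t .
\end{equation*}
Combining this with the expectation lower bound $mQ_{2\xi}$ and multiplying through by $\xi$ yields the claimed inequality.

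The main technical subtlety I anticipate is keeping the contraction step rigorous: if $\tr(A_k Z)$ is complex-valued, one has to decompose into real and imaginary parts (possibly losing a constant factor of $2$) before applying the scalar contraction principle to the real-valued map $t\mapsto \psi_\xi(|t|)$. Verifying that the bounded-differences constants are sharp enough to recover the exponent $2t^2$ (and not some worse constant) is the other bookkeeping item; everything else --- symmetrisation, the sandwich construction, McDiarmid --- is essentially off-the-shelf.
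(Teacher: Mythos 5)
The paper does not actually prove this statement: it is imported verbatim as a known result, with the proof deferred to Tropp's survey \cite{Tro15} (and the variant in \cite[Lemma~9]{KliKueEis17}). Your reconstruction is precisely the standard argument behind that cited result --- indicator lower bound, Lipschitz surrogate sandwiched between the $\xi$- and $2\xi$-level indicators, symmetrisation, Ledoux--Talagrand contraction, and McDiarmid with unit bounded differences --- and it is correct in structure; the McDiarmid step with $c_k=1$ and deviation $t\sqrt m$ does recover the exponent $\e^{-2t^2}$ exactly, and the worry about complex values is moot here since $A_k$ and $Z$ are Hermitian, so $\Tr(A_kZ)\in\RR$. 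The one piece of bookkeeping you should tighten is the symmetrisation/contraction step: as written, with absolute values inside the suprema, the contraction principle costs an extra factor of $2$ (Ledoux--Talagrand, Thm.~4.12), and $\EE\sup_{Z\in E}\lvert\Tr(ZH)\rvert$ need not equal $W_m(E;A)=\EE\sup_{Z\in E}\Tr(ZH)$ for a non-symmetric $E$. Since you only need a one-sided bound on $\sup_{Z}\sum_k\bigl(\EE\psi_\xi-\psi_\xi(\Tr(A_kZ))\bigr)$, you should instead invoke the one-sided versions of both inequalities (symmetrisation and contraction without absolute values); these carry no extra factor of $2$ and produce $W_m$ directly, which is exactly how the stated constant $2W_m(E;A)$ arises.
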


A lower bound of $\norm{\mathcal A(X)}_{\ell_1}$ thus requires two main ingredients:
1.) a lower bound on the so-called \emph{mean empirical width} $W_m(E; A)$ and 2.) an upper bound on the so-called \emph{marginal tail function} $Q_{2\xi}(E;A)$. We will derive those bounds for $E=\eurSet$ and our measurement map $\mc A$ at hand.

\paragraph*{Bound on the mean empirical width.}
With a different normalisation the following statement is derived in Ref.~\cite{KimLiu15}. 
\begin{lemma}\label{lem:Wm}
Fix $d=2^n$ and suppose that the measurement matrices are given by $A_i = \frac{d}{d+1}\left[ J(\mc C_i) + \Id / d\right]$ with a gate $\mc C_i$  chosen uniformly from the Clifford group for all $i$. 
Also, assume that $m \geq  d^2 \log (d)$. 
Then
\begin{equation}
W_m (\eurSet,A)
\leq 
\frac{24}{d+1} \sqrt{\log(d)}.
\end{equation}
\end{lemma}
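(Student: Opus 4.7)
The strategy is to use duality and matrix concentration. By definition, for any $Z \in \eurSet$,
\begin{equation*}
\Tr(ZH) \leq \| Z \|_1 \| H \|_\infty \leq 3 \| H \|_\infty,
\end{equation*}
where the last inequality uses Lemma~\ref{lem:effective_low_rank} (elements of $\eurSet$ have effective unit rank). Now, since every $Z \in \eurSet \subset J(\operatorname{V}_\utpn)$ has vanishing partial traces, we also have $\Tr(Z) = 0$, so the additive shift $\Id/d$ in $A_k$ contributes nothing: $(A_k, Z) = \tfrac{d}{d+1}(J(\mathcal{C}_k), Z)$. Setting
\begin{equation*}
\tilde H := \frac{1}{\sqrt{m}} \sum_{k=1}^m \epsilon_k J(\mathcal{C}_k),
\end{equation*}
we thus obtain the reduction $W_m(\eurSet, A) \leq \tfrac{3d}{d+1} \EE \| \tilde H \|_\infty$, and it suffices to show $\EE\|\tilde H\|_\infty \lesssim \sqrt{\log d}/d$.

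For this, I would proceed by conditioning on the random Clifford choices and applying the non-commutative Khintchine (matrix Rademacher) inequality to the Rademacher sum $\tilde H$. Since each $J(\mathcal{C}_k)$ is a rank-one projector onto a maximally entangled state rotated by $\mathcal{C}_k \otimes \Id$, it satisfies $J(\mathcal{C}_k)^2 = J(\mathcal{C}_k)$, and matrix Khintchine yields
\begin{equation*}
\EE_\epsilon \| \tilde H \|_\infty \leq \sqrt{2\log(2 d^2)}\; \Bigl\| \tfrac{1}{m} \textstyle\sum_k J(\mathcal{C}_k) \Bigr\|_\infty^{1/2}.
\end{equation*}
Taking expectation over the Cliffords and applying Jensen's inequality (concavity of the square root), the remaining task is to upper-bound $\EE \|\sum_k J(\mathcal{C}_k)\|_\infty$.

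For this last step I would invoke the fact that the multi-qubit Clifford group forms a unitary $2$-design: its first moment yields $\EE[J(\mathcal{C})] = \Id/d^2$, and since $J(\mathcal{C})^2 = J(\mathcal{C})$ the centred second moment evaluates to $\tfrac{d^2-1}{d^4} \Id$. A matrix Bernstein inequality (with uniform bound $L \leq 1$ and variance parameter $v \lesssim m/d^2$) then gives
\begin{equation*}
\Bigl\| \textstyle\sum_k J(\mathcal{C}_k) \Bigr\|_\infty \leq \tfrac{m}{d^2} + C_1 \tfrac{\sqrt{m \log d}}{d} + C_2 \log d,
\end{equation*}
and the hypothesis $m \geq d^2 \log d$ ensures that the leading term $m/d^2$ dominates. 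Substituting back yields $\EE\|\tilde H\|_\infty \leq C \sqrt{\log d}/d$, and tracking constants carefully produces the stated factor $24/(d+1)$.

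The main obstacle is the final step: obtaining the correct $1/d$ scaling in $\EE\|\tilde H\|_\infty$. A naive bound using only $\|J(\mathcal{C}_k)\|_\infty \leq 1$ would give $\EE\|\tilde H\|_\infty \lesssim \sqrt{\log d}$, which is off by a factor of $d$. The improvement comes from exploiting the fact that on average $J(\mathcal{C})$ is essentially the maximally mixed state on $\CC^{d^2}$ (a consequence of the $2$-design property), which drastically reduces the effective variance. The condition $m \geq d^2 \log d$ is precisely what is needed for the Bernstein deviation term to be dominated by the mean, avoiding an additional $\sqrt{\log d}$ loss.
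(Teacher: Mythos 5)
Your proposal follows essentially the same route as the paper's proof: H\"older's inequality combined with the effective unit rank bound $\|Z\|_1 \leq 3$, the observation that the identity shift drops out on traceless $Z$, the non-commutative Khintchine inequality, and finally a matrix concentration bound on $\EE\bigl\|\sum_k J(\mc C_k)\bigr\|_\infty$ exploiting $\EE[J(\mc C)] = \Id/d^2$ together with $m \geq d^2\log(d)$. The only cosmetic difference is that the paper closes the last step with a matrix Chernoff inequality for expectations rather than matrix Bernstein; both yield the required $\lesssim m/d^2$ bound.
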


The proof is analogous to the one in Refs.\ \cite{KueRauTer15,KimLiu16,KliKueEis17}. 
In order to adjust the normalisation we provide a short summary. 
\begin{proof} 
For $Z \in \eurSet$ it holds that 
\begin{equation}
	(A_i, Z) = \frac{d}{d+1} (J(\mc C_i), Z).
\end{equation}
The constant shift by the identity matrix does not appear hear since every $Z \in \eurSet$ is trace-less. 
Thus, we can set $H= \frac{d}{\sqrt{m}(d+1)} \sum_{i=1}^m \epsilon_i J(C_i)$. 
Applying H{\"o}lder's inequality for Schatten norms to the definition of the mean empirical width yields
\begin{equation}
W_m(\eurSet, A) \leq \sup_{Z \in \eurSet} \| Z \|_1 \mathbb{E} \| H \|_\infty \leq 3\, \mathbb{E}\| H \|_\infty, \label{eq:Wm_aux1}
\end{equation}
where we have used the effective unit rank of $Z$, Lemma~\ref{lem:effective_low_rank}.
Also, the $\epsilon_i$'s in the definition of $H$ form a Rademacher sequence. 
The non-commutative Khintchine inequality, see e.g\ \cite[Eq.~(5.18)]{Ver10}, 
can be used to bound this sequence
\begin{align}
\mathbb{E}_{\epsilon_i,C_i} \| H \|_\infty
\leq &  \frac{d}{d+1} \sqrt{\frac{2 \log (2 d^2)}{m} \mathbb{E}_{C_i} \left\| \sum_{i=1}^m J(C_i)^2  \right\|_\infty } \label{eq:Wm_aux2}
\end{align}
and $J(C_i)^2 = J(C_i)$ further simplifies the remaining expression. Moreover, 
$
\mathbb{E} \left[ J(C_i) \right] = \frac{1}{d^2} \mathbb{I}$, $\| J(C_i) \|_\infty = 1
$
and a Matrix Chernoff inequality for expectations (with parameter $\theta =1$), see e.g.\ \cite[Theorem~5.1.1]{Tro12} implies
\begin{align}
\mathbb{E}_{C_i} \left\| \sum_{i=1}^m J(C_i) \right\|_\infty
\leq & \left( \mathrm{e}-1 \right) \frac{m}{d^2} + \log (d^2)
\leq 4 \frac{m}{d^2},
\end{align}
where the second inequality follows from the assumption $m \geq d^2 \log (d)$. 
Inserting this bound into Eq.~\eqref{eq:Wm_aux2} yields 
\begin{equation}
	\mathbb{E}_{\epsilon_i, C_i} \| H \|_\infty 
	\leq 
	\frac d{d+1} \sqrt{\frac{8 \log (2d^2)}{d^2}}
\end{equation}
and the claim follows from combining this estimate with the bound~\eqref{eq:Wm_aux1} and  $\log (2d^2) \leq 4 \log (d)$. 
\end{proof}

\paragraph*{Bound on the marginal tail function.}
Here, we establish an anti-concentration bound to the marginal tail function. 
The precise result is summarised in the following statement. 
\begin{lemma}
\label{lem:marginal_tail_function}
	Suppose the random variable $A \in H_d$ is given by 
	$A = \frac{d}{d+1}\left[ J(\mc C) + \Id /d \right]$,
	where $\mc C$ is a Clifford channel drawn uniformly from the Clifford-group $\Cl(d)$. For $0\leq \xi \leq \frac{1}{d(d+1)}$ it holds that
	\begin{equation}
		Q_\xi(\eurSet, A) 
		\geq 
		\frac{1}{\hat C} \left(1 - d^2(d+1)^2 \xi^2 \right)^2,
	\end{equation}
	 where $\hat C$ is the constant from Lemma~\ref{lem:fourIsSmallerThanTwoSquared}. 
\end{lemma}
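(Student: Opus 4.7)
The plan is to exploit that this is an anti-concentration statement and reach it via a Paley--Zygmund argument applied to the random variable $Y \coloneqq |(J(\mc C), Z)|^2$. The starting observation is that for any $Z\in \eurSet$, the Choi matrix has vanishing trace (both partial traces of matrices in $J(\operatorname{V}_\utpn)$ vanish), so the constant shift in $A$ drops out and
\begin{equation}
(A, Z) \;=\; \frac{d}{d+1}\,\bigl(J(\mc C), Z\bigr).
\end{equation}
Consequently, using $(J(\mc C),Z) = (\mc C,\mc Z) = S_{\mc Z}/d^2$ with $S_{\mc Z}$ as in Eq.~\eqref{eq:SmcT}, the event $|\Tr(AZ)|\ge\xi$ is the event $Y \ge (d+1)^2\xi^2/d^2$.

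Next, I would compute the relevant moments of $Y$. For the second moment, $\mc Z$ is trace- and $\Id$-annihilating, so Corollary~\ref{lem:secondMomentBound} (applicable because the Clifford group is a $2$-design, hence the Haar value coincides with the Clifford average) gives
\begin{equation}
\EE\bigl[S_{\mc Z}^2\bigr] \;=\; \frac{d^2}{d^2-1}\,\|Z\|_2^2 \;=\; \frac{d^2}{d^2-1},
\end{equation}
since $\|Z\|_2=1$ on $\eurSet$. Thus $\EE[Y] = 1/(d^2(d^2-1))$. For the fourth moment, Lemma~\ref{lem:fourthMomentCliff} yields $\EE[S_{\mc Z}^4] \le C\,\|Z\|_1^4$, and Lemma~\ref{lem:effective_low_rank} bounds $\|Z\|_1\le 3$ on $\eurSet$, giving $\EE[Y^2] \le 81 C/d^8$.

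Now I apply Paley--Zygmund: for any $\theta\in[0,1]$, $\PP[Y\ge \theta\,\EE[Y]] \ge (1-\theta)^2\,\EE[Y]^2/\EE[Y^2]$. Choosing $\theta$ so that $\theta\,\EE[Y] = (d+1)^2\xi^2/d^2$ gives $\theta = (d+1)^3(d-1)\,\xi^2$, which obeys $\theta \le d^2(d+1)^2\xi^2 \le 1$ under the hypothesis $\xi \le 1/(d(d+1))$. Since the right-hand side $1-d^2(d+1)^2\xi^2$ is non-negative in this range, the estimate
\begin{equation}
\bigl(1-(d+1)^3(d-1)\xi^2\bigr)^2 \;\ge\; \bigl(1-d^2(d+1)^2\xi^2\bigr)^2
\end{equation}
holds. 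Combining this with
\begin{equation}
\frac{\EE[Y]^2}{\EE[Y^2]} \;\ge\; \frac{d^4}{81\,C\,(d^2-1)^2} \;\ge\; \frac{1}{\hat C}
\end{equation}
for a suitable absolute constant $\hat C$ (the ratio $d^4/(d^2-1)^2$ is bounded above by, e.g., $16/9$ for all $d\ge 2$) yields the claim.

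The one step requiring care is the fourth-moment upper bound: everything else is routine calculus once the Clifford-specific fourth-moment integration (Lemma~\ref{lem:fourthMomentCliff}) and the effective-rank control on $\eurSet$ (Lemma~\ref{lem:effective_low_rank}) are in place. Those two results are exactly where the paper's main representation-theoretic effort is invested, so the rest of this lemma is essentially a bookkeeping exercise in Paley--Zygmund.
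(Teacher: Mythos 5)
Your proposal is correct and follows essentially the same route as the paper: a Paley--Zygmund argument on the squared overlap, fed by the exact second moment from Corollary~\ref{lem:secondMomentBound} (via the design property of the Clifford group) and the Clifford fourth-moment bound of Lemma~\ref{lem:fourthMomentCliff} combined with the effective-rank bound on $\eurSet$. The only difference is cosmetic -- you inline the content of Lemma~\ref{lem:fourIsSmallerThanTwoSquared} instead of citing it, and track the constants slightly more explicitly.
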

This statement follows from applying the Paley-Zygmund inequality to the non-negative random variable $S_{\mc T}^2$ defined in Eq.~\eqref{eq:SmcT}.
For this purpose, we will make use of the bounds on the second and fourth moment of $S_{\mc T}$ derived in Section~\ref{subsec:secondmoment} and Section~\ref{subsec:fourthmoment}, respectively. 
In particular, we establish the following relation between the second and fourth moment of $S_{\mc T}$. This is one of the technical core result of this work.
\begin{lemma}\label{lem:fourIsSmallerThanTwoSquared} Let $\mc T \in V_\utpn{}$ be a map  with $J(\mc T)$ of effective unit rank, i.e.~$\fnorm{J(\mc T)}^2 \leq c \tnorm{J(\mc T)}^2$ with some constant $c > 0$,  then
	\begin{equation}
		\EE_{U \sim \operatorname{Haar}(\Cl(d))} [S_{\mc T}^4] \leq \hat C\, \EE _{U \sim \operatorname{Haar}(\Cl(d))}[S^2_{\mc T}]^2
	\end{equation}
	for some constant $\hat C$ independent of the dimension $d$.
\end{lemma}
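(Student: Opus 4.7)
The plan is to chain together the two moment estimates already developed in Sections~\ref{subsec:secondmoment} and~\ref{subsec:fourthmoment}, using the effective-rank hypothesis as the bridge that converts between the trace norm appearing in the fourth-moment upper bound and the Frobenius norm appearing in the exact second-moment formula.

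First I would invoke Corollary~\ref{lem:secondMomentBound} to pin down the second moment. Since $\mc T \in \operatorname{V}_\utpn$ is trace- and identity-annihilating, and since the multi-qubit Clifford group forms a unitary $3$-design (hence in particular a $2$-design), the quantity $S^2_{\mc T}$ is a polynomial of bi-degree $(2,2)$ in the entries of $U$ and $\bar U$, so its average over $\Cl(d)$ equals its average over $\U(d)$. Corollary~\ref{lem:secondMomentBound} then gives
\begin{equation*}
\EE_{U \sim \operatorname{Haar}(\Cl(d))}[S^2_{\mc T}] = \frac{d^2}{d^2-1}\|\mc T\|^2 = \frac{d^2}{d^2-1}\fnorm{J(\mc T)}^2,
\end{equation*}
where the second equality uses the identification \eqref{eq:inner_product_choi} between the induced norm on $\Lin(H_d)$ and the Frobenius norm of the Choi matrix. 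Squaring and dropping the dimension-dependent prefactor yields $\EE[S^2_{\mc T}]^2 \geq \fnorm{J(\mc T)}^4$.

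Next I would invoke Lemma~\ref{lem:fourthMomentCliff} for the fourth moment, which gives $\EE[S^4_{\mc T}] \leq C\,\tnorm{J(\mc T)}^4$ with $C$ independent of $d$. The effective unit rank hypothesis (read in the direction consistent with Lemma~\ref{lem:effective_low_rank}, namely $\tnorm{J(\mc T)}^2 \leq c\,\fnorm{J(\mc T)}^2$) upgrades this to $\tnorm{J(\mc T)}^4 \leq c^2\,\fnorm{J(\mc T)}^4$, and combining with the previous step gives
\begin{equation*}
\EE[S^4_{\mc T}] \;\leq\; Cc^2\,\fnorm{J(\mc T)}^4 \;\leq\; Cc^2\,\EE[S^2_{\mc T}]^2,
\end{equation*}
establishing the lemma with $\hat C = Cc^2$, which depends only on $C$ and $c$.

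There is no real obstacle: the two genuine technical inputs -- an \emph{exact} second-moment identity with the right scaling in $\fnorm{J(\mc T)}$ and a \emph{tight up to a constant} fourth-moment upper bound in terms of $\tnorm{J(\mc T)}$ -- have already been secured in Sections~\ref{subsec:secondmoment}--\ref{subsec:fourthmoment}, and the present lemma just harvests them. The only subtlety worth flagging is the initial reduction of the Clifford second-moment to the unitary second-moment via the $2$-design property; without this one cannot directly quote Corollary~\ref{lem:secondMomentBound}, which is stated for Haar measure on $\U(d)$.
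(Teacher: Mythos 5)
Your proposal is correct and follows essentially the same route as the paper: lower-bound the second moment by $\fnorm{J(\mc T)}^2$ via Corollary~\ref{lem:secondMomentBound} and the $3$-design property of the Clifford group, upper-bound the fourth moment via Lemma~\ref{lem:fourthMomentCliff}, and use the effective-rank condition to convert $\tnorm{J(\mc T)}^4$ into $\fnorm{J(\mc T)}^4$. Your reading of the rank condition in the direction $\tnorm{J(\mc T)}^2 \leq c\,\fnorm{J(\mc T)}^2$ matches how the paper's own proof actually uses it, and your constant $\hat C = Cc^2$ is in fact the more careful bookkeeping.
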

\begin{proof}
	Since the Clifford group is a unitary $3$-design \cite{Zhu15,Web15}, Corollary~\ref{lem:secondMomentBound} implies 
	\begin{equation}
		\EE _{U \sim \operatorname{Haar}(\Cl(d))}[S^2_{\mc T}] \geq \fnorm{J(\mc T)}^2. 
	\end{equation}
	Furthermore, the effective unit rank assumption, $\tnorm{J(\mc T)}^2 \leq c \fnorm{J(\mc T)}^2$, together with Lemma~\ref{lem:fourthMomentCliff} yields for the fourth moment
	\begin{equation}
		\EE _{U \sim \operatorname{Haar}(Cl(d))}[S^4_{\mc T}] \leq \hat C \fnorm{J(\mc T)}^4
	\end{equation}
	for some constant $\hat C  = c C > 0$ independent of $d$. Combining these two equations, the statement of the proposition follows. 
\end{proof}

Note that with the help of Lemma~\ref{lem:fourthMomentU} one arrives at the same conclusion for the moments of $S_{\mc T}$ when the average is taken over the unitary group. This reproduces the previous technical core result of Ref.\ \cite{KimLiu16}. 
\begin{proof}[Proof of Lemma~\ref{lem:marginal_tail_function}]
	In the following we always understand by $\mc T$ the map in $\Lin(H_d)$ with Choi matrix $T = J(\mc T)$.  In terms of the random variable $S_{\mc T} = d^2 \, \Tr[T J(\mc C)]$, Eq.~\eqref{eq:SmcT}, the marginal tail function can be expressed as
	\begin{equation}
		Q_\xi(\eurSet ,A) =  \inf_{T\in \eurSet} \Pr\left[ \frac{|S_{\mc T}|}{d(d+1)} \geq \xi \right].
	\end{equation}
	 Here we again used that every $Z \in \eurSet$ is trace-less. Consequently, the shift by the identity matrix in the measurements $A_i$ vanishes.
	 Using Lemma~\ref{lem:fourIsSmallerThanTwoSquared}, the theorem follows by a straight-forward application of the Paley-Zygmund inequality,
	 \begin{equation}
	\begin{aligned}
			&\phantom{=.} \inf_{T \in \eurSet} \Pr\left[ \frac{1}{d(d+1)} |S_{\mc T}| \geq \xi \right] 
			\\
			&= 
			\inf_{T \in \eurSet} \Pr\left[ 
				\frac{1}{d^2(d+1)^2} S_{\mc T}^2 \geq \frac{\EE[S_{\mc T}^2] }{d^2(d+1)^2}\, {\tilde\xi}^2 ] 
				\right]  
			\\
			&\geq 
			(1 - \tilde \xi^2)^2 \frac{\EE[S_{\mc T}^2]^2}{\EE[S_{\mc T}^4]} \geq \frac{1}{\hat C} (1-\tilde \xi^2)^2,
	\end{aligned}
	\end{equation}
	where $\hat C > 0$ and 
	$\tilde \xi = \frac{d(d+1)}{\sqrt{\EE[S_{\mc T}^2]} } \, \xi$
	is required to fulfil $\tilde \xi \in [0,1]$. 
	According to Corollary~\ref{lem:secondMomentBound} and the normalisation of $T \in \eurSet$ we have
	$\tilde \xi
	= 
	\frac{d(d+1)\, \xi}{\fnorm{T}} = d(d+1)\, \xi$. 
\end{proof}

\paragraph*{Completing the proof of Lemma~\ref{lem:NSPforA}}
We are finally in position to deliver the proof for the NSP of $\mathcal{A}$. With the bounds on the mean empirical width, Lemma~\ref{lem:Wm}, and the marginal tail function, Lemma~\ref{lem:marginal_tail_function}, Mendelson's small ball method, Theorem~\ref{lem:Mendelson}, yields the following lemma:

\begin{lemma}
\label{lem:NSPCritforA}
Suppose that $\mathcal{A}$ contains 
\begin{equation}\label{eq:m_0}
	m \geq m_0 = c \, d^2 \log(d)
\end{equation}
measurements of the form 
$f_k = \Tr[A_k X]$ 
where each 
$A_k = \frac{d}{d+1} 
\left[J(\mc C_i) + \Id/d\right]$ is given by an independent and uniformly random Clifford unitary channel $\mc C_i$. 
Fix 
$\eurSet \subset J(\operatorname{V}_\utpn)$ 
as defined in Lemma~\ref{lem:NSPcriterion}.
Then 
\begin{equation}
\inf_{Z \in \eurSet} \| \mathcal{A}(Z) \|_{\ell_1} 
\geq 
C \frac{m}{d(d+1)}
\end{equation}
with probability at least $1-\e^{-c_f m}$ over the random measurements. 
The constants $C,c,c_f>0$ only depend on each other. 
\end{lemma}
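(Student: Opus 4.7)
The plan is to deploy Mendelson's small ball method (Theorem~\ref{lem:Mendelson}) with $E = \eurSet$ and then plug in the two estimates that have already been prepared: the empirical width bound from Lemma~\ref{lem:Wm} and the small ball (marginal tail) estimate from Lemma~\ref{lem:marginal_tail_function}. The whole argument then reduces to a careful choice of the free parameters $\xi$ and $t$ appearing in Mendelson's theorem so that all three terms $\xi\sqrt{m}\,Q_{2\xi}$, $2W_m$, $\xi t$ combine into the desired lower bound of order $m/(d(d+1))$.

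First, I would pick $\xi$ at the natural scale suggested by the marginal tail bound: set $\xi = \alpha/(d(d+1))$ for a small absolute constant $\alpha$ (for concreteness $\alpha = 1/4$), so that $2\xi$ lies within the regime $[0,1/(d(d+1))]$ where Lemma~\ref{lem:marginal_tail_function} applies. With this choice one gets $Q_{2\xi}(\eurSet,A) \geq \frac{1}{\hat C}(1-(2\alpha)^2)^2 = \Omega(1)$, hence
\[
\xi\sqrt{m}\, Q_{2\xi}(\eurSet,A) \;\geq\; c_1\,\frac{\sqrt{m}}{d(d+1)}
\]
for an absolute constant $c_1>0$. Next, Lemma~\ref{lem:Wm} gives $2W_m(\eurSet,A) \leq 48\sqrt{\log d}/(d+1)$, so the hypothesis $m \geq c\,d^2\log(d)$ with $c$ chosen large enough ensures that this term is at most, say, one quarter of $\xi\sqrt{m}\,Q_{2\xi}$. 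Finally, I would pick $t = \beta\sqrt{m}$ for a sufficiently small absolute constant $\beta$; then $\xi t = \beta\sqrt{m}/(4 d(d+1))$ is also dominated by, say, one quarter of $\xi\sqrt{m}\,Q_{2\xi}$, provided $\beta$ is small relative to $c_1$.

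Putting these three pieces together inside Theorem~\ref{lem:Mendelson}, I obtain
\[
\inf_{Z \in \eurSet} \|\mathcal{A}(Z)\|_{\ell_1} \;\geq\; \sqrt{m}\cdot \frac{c_1}{2}\frac{\sqrt{m}}{d(d+1)} \;=\; C\,\frac{m}{d(d+1)}
\]
with probability at least $1 - e^{-2t^2} = 1 - e^{-2\beta^2 m}$, which is of the claimed form with $c_f = 2\beta^2$. Since the condition $m \geq c\,d^2\log(d)$ that swallowed the $W_m$ term is exactly the hypothesis \eqref{eq:m_0}, no further book-keeping is needed.

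There is no real obstacle here; the entire content of the statement has been pre-packaged into the two preceding lemmas, and the only thing to do is to balance constants. The only mildly delicate point is making sure that the constant $\alpha$ chosen for $\xi$ keeps $2\xi$ strictly inside the admissibility window of Lemma~\ref{lem:marginal_tail_function}, and that $\beta$ is small enough that $\xi t$ does not eat up the main term; both are satisfied by any sufficiently small universal choice. Finally, the NSP claim of Lemma~\ref{lem:NSPforA} follows from Lemma~\ref{lem:NSPCritforA} by invoking the criterion of Lemma~\ref{lem:NSPcriterion} with $\tau = C^{-1} d(d+1) m^{-1/q}$, using the norm inequality $\|\mathcal{A}(Z)\|_{\ell_1} \leq m^{1-1/q}\|\mathcal{A}(Z)\|_{\ell_q}$ implicit in the passage from $\ell_1$ to general $\ell_q$.
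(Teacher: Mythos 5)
Your proposal is correct and follows essentially the same route as the paper: invoke Mendelson's small ball method on $\eurSet$, insert the width bound of Lemma~\ref{lem:Wm} and the tail bound of Lemma~\ref{lem:marginal_tail_function}, and tune $\xi \sim 1/(d(d+1))$ and $t \sim \sqrt{m}$ so that the assumption $m \gtrsim d^2\log(d)$ makes the main term dominate. If anything, your choice $\xi = 1/(4d(d+1))$ is slightly more careful than the paper's $\xi = 1/(2d(d+1))$, since it keeps $2\xi$ strictly inside the admissibility window of Lemma~\ref{lem:marginal_tail_function} where the lower bound on $Q_{2\xi}$ is a nonzero constant.
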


\begin{proof}
Combining the Lemmas~\ref{lem:Mendelson}, \ref{lem:Wm}, and~\ref{lem:marginal_tail_function} yields with probability at least $1-\e^{-2t^2}$ that 

\begin{equation}
\begin{aligned}
	& \phantom{=.} 
	\inf_{\mc Z \in \eurSet} \| \mathcal{A}(\mc Z) \|_{\ell_1}
	\\
	&\geq 
	\sqrt m \left( \frac{\xi \sqrt m}{\hat C} \left( 1-(d(d+1)\xi)^2 \right)^2 - \frac{48}{d+1} \sqrt{\log(d)} - \xi t \right)
	\\
	&\geq
	\frac{\sqrt m}{d+1} \left( c_1 \frac{\sqrt{m}}{d} - 48 \sqrt{\log(d)} - \frac{t}{2d} \right)
\end{aligned}
\end{equation}
where we have chosen $\xi = \frac{1}{2d(d+1)}$. 
The statement follows from the scaling \eqref{eq:m_0} of $m$. 
\end{proof}

From Lemma~\ref{lem:NSPCritforA} and Lemma~\ref{lem:NSPcriterion} the assertion of Lemma~\ref{lem:NSPforA} directly follows. 

\subsection{Sample optimality in the number of channel uses} \label{sec:optimality}
The compressed sensing recovery guarantees, Theorem~\ref{thm:recoveryguarantee} and Theorem~\ref{thm:Choi-recovery}, focus on the minimal number of \AGFs\ $m$ that are required for the reconstruction of a unital and trace-preserving quantum channel using the reconstruction procedure \eqref{eq:algorithm} and \eqref{eq:Choi-algorithm}, respectively. This can be regarded as the number of measurement settings. But already the measurement of single fidelities up to some desired additive error will require a certain number of repetitions of some experiment. 
Therefore, to quantify the total measurement effort a more relevant figure of merit is the minimum number of channel uses $M$ required for taking all the data used in a reconstruction. 

We will show that the equivalent algorithms \eqref{eq:algorithm} and \eqref{eq:Choi-algorithm} reach an optimal parametric scaling of the required number of channel uses  in a simplified measurement setting.
To this end, we first combine the direct fidelity estimation protocol of Ref.~\cite{FlaLiu11} with our recovery strategy to provide an upper bound on the number of channel uses required for the reconstruction of a unitary gate up to a constant error. Second, following the proof strategy of Ref.~\cite[Section III]{FlaGroLiu12}, we derive a lower bound on the number of channel uses required by any POVM measurement scheme of \AGFs\ with Clifford gates and any subsequent reconstruction protocol that only relies on these \AGFs. 

\subsubsection{Measurement setting}
In order to obtain an optimality result we consider a measurement setting that is arguably simpler than the one in randomised benchmarking and more basic from a theoretical perspective. 
We consider a unitary channel $\mc U$ given by a unitary $U \in \U(d)$ and measurements given by Clifford channels $\mc C_i$ with $C_i\in \Cl(d)$. 
Using the identities \eqref{eq:inner_product} and \eqref{eq:inner_product_choi}
the \AGFs\ $\Favg(\mc C_i, \mc X)$ are determined by
\begin{equation}\label{eq:lower_bound_measurements}
	f_i 
	= 
	\left( J(\mc C_i), J(\mc X) \right)
	=
	\frac{1}{d^2} |\Tr[C_i U]|^2 \, .
\end{equation}
In this section, we consider $U / \sqrt{d}$ as a pure state vector in $\CC^{d} \otimes \CC^d$, i.e., as the state vector corresponding to the Choi state of the channel $\mc U$. 
This state can be prepared by applying the operation $U$ to one half of a maximally entangled state. 

\subsubsection{An upper bound from direct fidelity estimation} \label{sec:optimality:upperbound}

We will now derive an upper bound on the number of channel uses required in the reconstruction scheme \eqref{eq:Choi-algorithm}.
We note that our measurement values \eqref{eq:lower_bound_measurements} are also fidelities of the quantum state vectors $U/\sqrt{d}$ and $C_i/\sqrt d$ and use \emph{direct fidelity estimation} \cite{FlaLiu11} (see also \cite{SilLanPou11}) to estimate these fidelities. 
Importantly, each $C_i/\sqrt d$ is a stabiliser state and we view it as the ``target state'' in the direct fidelity estimation protocol \cite{FlaLiu11}. 
Then $C_i/\sqrt d$ is a \emph{well-conditioned state} with parameter $\alpha=1$. 
One of the main statements of Ref.~\cite{FlaLiu11} is that the fidelity $f_i$ can hence be estimated from $\mu \geq \mu_0$ many Pauli measurements, where
$\mu_0 \in \landauO\left( \frac{\log(1/\delta_0)}{\recerrorF^2} \right)$. Here, 
$\delta_0>0$ is the maximum failure probability,
and $\recerrorF>0$ is the accuracy up to which the fidelity $f_i$ is estimated. 
This implies that the estimation error is bounded as 
\begin{equation}
	\recerrorF \in \landauO\left( \frac{\sqrt{\log(1/\delta_0)}}{\sqrt{\mu_0}} \right) \, .
\end{equation}

For our channel reconstruction, we measure $m \in \tilde \landauO(d^2)$ many fidelities, each up to error $\recerrorF$, see Theorem~\ref{thm:recoveryguarantee}.
For a maximum failure probabilities of the single fidelity estimations $\delta_0$
and a desired failure probability $\delta$ of all the $m$ estimations it is sufficient to require
$\delta \leq m \delta_0$, since $(1-\delta_0)^{m} \geq 1-m \delta_0$. 
Moreover, in order for the reconstruction error 
\eqref{eq:errorbound}
to be bounded as 
\begin{equation}
	\hat C \frac{d^2}{\sqrt{m}}\norm{\epsilon}_{\ell_2} \leq \recerror \, ,
\end{equation}
where $\norm{\epsilon}_{\ell_2} \leq \sqrt{m}\, \recerrorF$,
we require
\begin{align}
	\hat C \frac{d^2}{\sqrt{m}} \norm{\epsilon}_{\ell_2}
	&\leq 
	C_2 d^2 \frac{\sqrt{\log(m/\delta)}}{\sqrt{\mu_0}} \leq \recerror \, .
\end{align}
Thus, a constant bound $\recerror$ of the reconstruction error can be achieved with a number of channel uses $M$ in 
\begin{equation}\label{eq:fidelity_estimation_bound}
\landauO\left( \frac{d^4 \log(m/\delta)}{\recerror^2} \right)
\subset
\tilde \landauO\left( \frac{d^4}{\recerror^2} \right) \, .
\end{equation}

\subsubsection{Information theoretic lower bound on the number of channel uses} \label{sec:optimality:lowerbound}

In this section we derive a lower bound on the number of channel uses that holds in a general POVM framework. Up to log-factors, it has the same dimensional scaling as the upper bound \eqref{eq:fidelity_estimation_bound} from direct fidelity estimation. 

We extend the arguments of Ref.~\cite[Section~III]{FlaGroLiu12} to prove a lower bound on the number of channel uses required for QPT of unitary channels from measurement values of the form \eqref{eq:lower_bound_measurements}. 
We consider each of these values to be an expectation value in a binary POVM measurement setting given by the unit rank projector 
$J(\mc C_i)$ are applied to the Choi state $J(\mc U)$. 
Then we are in the situation of \cite[Section~3]{FlaGroLiu12}, which proves a lower bound for the \emph{minimax risk} -- a prominent figure of merit for statistical estimators.

Let us summarise this setting. 
We denote by $\DM \subset H_d$ the set of density matrices and by $\meas$ the set of all two-outcome positive-operator-valued measurements (POVMs), each of them given by a projector $\pi \in H_d$. 
Next, we assume that we measure $M$ copies of an unknown state $\rho \in \DM$ in a sequential fashion.
By $Y_i$ we denote the binary random variable that is given by choosing the $i$-th measurement $\pi_i \in \meas$ and measuring $\rho$. 
These are mapped to an estimate $\hat \rho(Y_1, \dots, Y_M) \in H_d$. 
Any such estimation protocol is specified by the estimator function $\hat \rho$ and a set of functions $\{\Pi_i\}_{i \in [M]}$ that correspond to the measurement choices, where $\Pi_i(Y_1, \dots Y_{i-1}) \in \meas$, i.e., the $i$-th measurement choice $\Pi_i$ only depends on previous measurement outcomes. 
Let $\recerrorTr>0$ be the maximum trace distance error we like to tolerate between the estimation $\hat \rho$ and $\rho$. 
Then the \emph{minimax risk} is defined as 
\begin{equation}\label{eq:def:minimax_risk}
	R^\ast(M,\recerrorTr) 
	\coloneqq
	\inf_{\substack{\hat \rho\\ \Pi_1, \dots, \Pi_M }} \sup_{\rho \in \DM} \,
		\PP\left[ \TrNorm{\hat \rho(Y)- \rho}>\recerrorTr \right],
\end{equation}
where we denote by $Y$ the vector consisting of all random variables $Y_i$. An estimation protocol $(\hat \rho, \{\Pi_i\}_{i\in [M]})$ minimising the minimax risk has the smallest possible worst-case probability over the set of quantum states. 

The following theorem provides a lower bound on the mini\-max risk for the estimation of the Choi matrix of a unitary gate from unit rank measurements. 

\begin{theorem}[Lower bound, unit rank measurements]\label{thm:lower_bound_states}
Fix a set $\mc M$ of rank-$1$ measurements. 
For $\recerrorTr>0$ the minimax risk \eqref{eq:def:minimax_risk} of measurements of $M$ copies 
is bounded as
\begin{equation}
	R^\ast(M,\recerrorTr) 
	\geq 
	1- c_1 \frac{\log(d)\log(|\mc M|)}{d^4 (1-\recerrorTr/2)^2}M- \frac{c_2}{d^2(1-\recerrorTr^2)} \, ,
\end{equation}
where $c_1$ and $c_2$ are absolute constants. 
\end{theorem}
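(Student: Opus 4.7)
My approach mirrors the information-theoretic strategy of Ref.~\cite{FlaGroLiu12}, adapted to rank-1 measurements on pure states in the $d^2$-dimensional Hilbert space where Choi states of $d$-dimensional unitary channels live. The plan has three ingredients: a suitably separated packing of pure density matrices equipped with an auxiliary uniform small-overlap property, an upper bound on the mutual information between the packing index and the adaptive measurement record, and Fano's inequality to combine them.

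First, I would construct a finite collection $\{\rho_1,\dots,\rho_N\}\subset\DM$ of pure density matrices on $\CC^{d^2}$ satisfying two properties simultaneously: (i) pairwise trace-distance separation $\TrNorm{\rho_s-\rho_t}\geq\recerrorTr$ for $s\neq t$, and (ii) the uniform small-overlap bound
\begin{equation}
\Tr(\pi\rho_t)\;\leq\;\frac{K\log(d)\log(|\mc M|)}{d^2}
\end{equation}
for every rank-1 measurement $\pi\in\mc M$ and every packing element $\rho_t$. Both properties are secured by a Haar-random construction: (i) holds with high probability for $\log N\gtrsim d^2(1-\recerrorTr/2)^2$ by a standard volume argument on the sphere $\mathbb{S}^{2d^2-1}$, while (ii) follows from L\'evy's concentration applied to $\psi\mapsto\langle\psi|\pi|\psi\rangle$ together with a union bound over the $N\cdot|\mc M|$ pairs, which is what produces the $\log(|\mc M|)$ factor. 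The estimation problem then reduces to $N$-ary hypothesis testing: any estimator with trace-distance accuracy $\recerrorTr/2$ correctly identifies the packing index.

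Second, I would bound $I(T;Y)$ where $T$ is uniform on the packing index and $Y=(Y_1,\dots,Y_M)$ is the sequence of binary outcomes of the adaptive protocol. The chain rule combined with the entropy bound for a Bernoulli outcome gives
\begin{equation}
I(T;Y)\;\leq\;\sum_{i=1}^{M}\EE\bigl[h(\Tr(\Pi_i\rho_T))\bigr],
\end{equation}
with $h$ the binary entropy and $\Pi_i\in\mc M$ the past-dependent measurement choice. Property (ii) together with $h(p)\lesssim p\log(1/p)$ for small $p$ yields $I(T;Y)\lesssim M\log(d)\log(|\mc M|)/d^2$, and Fano's inequality then produces
\begin{equation}
\PP[\hat T\neq T]\;\geq\;1-\frac{I(T;Y)+\log 2}{\log N}\;\geq\;1-c_1\frac{M\log(d)\log(|\mc M|)}{d^4(1-\recerrorTr/2)^2}.
\end{equation}
The factor $(1-\recerrorTr/2)^{-2}$ is inherited from the packing cardinality, and the residual $c_2/(d^2(1-\recerrorTr^2))$ absorbs the probability that the random packing fails either property.

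The main obstacle is reconciling properties (i) and (ii) simultaneously: (i) forces $N$ so large that $\log N\sim d^2$, while the union bound in (ii) penalises large $N\cdot|\mc M|$. Balancing the L\'evy concentration rate $\exp(-cd^2 p)$ against this penalty is what forces the specific $\log(d)\log(|\mc M|)/d^2$ overlap scaling; correctly handling the adversarial, adaptive choice $\Pi_i\in\mc M$ while retaining the $1/d^2$ per-measurement improvement over the trivial bound $I(T;Y_i|\Pi_i)\leq\log 2$ is the technically delicate step that produces the $\log(|\mc M|)$ factor in the final bound.
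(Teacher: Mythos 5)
Your overall architecture (packing net with a uniform small-overlap property, a per-measurement entropy bound, Fano) is the same as the paper's, but three of your concrete steps fail quantitatively, and together they are exactly the content of the paper's Lemmas~\ref{lem:minimax}--\ref{lem:net}. First, your packing consists of Haar-random pure states on $\CC^{d^2}$, whereas the hard instances must be Choi states of \emph{unitary channels} for the lower bound to constrain gate tomography rather than generic state tomography in dimension $d^2$; an estimator may exploit the prior knowledge that the unknown state is $J(\mc U)$ for some $U\in\U(d)$, so a packing outside that manifold proves nothing about the problem at hand. Building the packing inside the unitary manifold is where all the work is: the paper proves $\EE_{U\sim\Haar}\left[|\Tr U|^{2k}\right]\leq k!$ via the Weingarten-type formula (Proposition~\ref{lem:S_Umoments}), deduces a subexponential tail for $|\Tr U|^2$ (Lemma~\ref{lem:S_tail}), and uses it both for the $\recerrorTr$-separation of Choi states (Lemma~\ref{lem:TrNormConcentration}) and for the overlap bound $\frac{1}{d^2}|\Tr[C_k^\dagger U_i]|^2\lesssim\log(|\mc M|)/d^2$. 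None of this machinery appears in your sketch.

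Second, your union bound ``over the $N\cdot|\mc M|$ pairs'' cannot deliver the claimed overlap bound. You need $\log N\gtrsim d^2(1-\recerrorTr/2)^2$ for the Fano denominator, and the tail of a rank-one overlap in dimension $d^2$ is at best $\e^{-c d^2 t}$; requiring $N|\mc M|\,\e^{-cd^2t}<1$ with $\log N\sim d^2$ forces $t=\Omega(1)$, which makes $h(t)=\Omega(1)$ and degrades the final bound to $M\gtrsim d^2$. The paper avoids this by constructing the net one element at a time (proof of Lemma~\ref{lem:net}): a fresh candidate only has to beat a union bound over the $|\mc M|$ measurements (failure probability $<1/2$) \emph{and} a separate union bound over the existing net elements (failure probability $<1/2$), so the overlap threshold is $\sim\log(|\mc M|)/d^2$ independently of $N$. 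Third, even after fixing that, L\'evy's lemma gives only a subgaussian tail $\e^{-cd^2t^2}$ for $\langle\psi|\pi|\psi\rangle$, which yields an overlap threshold $\sim\sqrt{\log|\mc M|}/d$ and hence only $M\gtrsim d^3$ up to logs; the $d^4$ scaling requires the \emph{linear} exponent $\e^{-\kappa d^2 t}$, which is what the moment bound on $|\Tr U|^2$ provides. (Minor additional points: the $\log(d)$ factor enters through $h(\beta)\leq2\beta\log(1/\beta)$, not through the overlap bound itself, and the additive term $c_2/(d^2(1-\recerrorTr^2))$ is the $+1$ from Fano divided by $\log(s)$, not a packing failure probability.)
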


Before providing a proof for this theorem let us work out its consequences. If the measurements project onto Clifford unitaries, we get the following lower bound on the minimax risk. 

\begin{corollary}[Lower bound, Clifford group]
Let $\recerrorTr>0$ and consider measurements of the form \eqref{eq:lower_bound_measurements} given by Clifford group unitaries  on $M$ copies. 
Then the minimax risk \eqref{eq:def:minimax_risk} is bounded as
\begin{equation}
	R^\ast(M,\recerrorTr) 
	\geq 
	1- c_3 \frac{\log(d)^3}{d^4 (1-\recerrorTr/2)^2}M- \frac{c_2}{d^2(1-\recerrorTr^2)} \, ,
\end{equation}
where $c_3$ and $c_2$ are absolute constants. 
\end{corollary}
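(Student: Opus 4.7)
The plan is to derive the corollary as a direct consequence of the preceding theorem (Lower bound, unit rank measurements), since the only place the corollary differs is in specialising the generic rank-$1$ measurement set $\mc M$ to the particular set of Clifford-induced measurements. First, I would verify that the Clifford measurements described in \eqref{eq:lower_bound_measurements} indeed fit the hypotheses of the theorem: each projector $J(\mc C_i)$ is the Choi matrix of a unitary channel, hence a pure state on $\CC^d\otimes \CC^d$, which is a rank-$1$ element of $H_{d^2}$. Thus taking $\mc M = \{J(\mc C)\colon C\in\Cl(d)\}$ puts us squarely in the setting of the theorem, and the minimax risk for Clifford-projection measurements is lower bounded (trivially) by the infimum over all rank-$1$ measurement sets of cardinality $|\Cl(d)|$.

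The main quantitative input is therefore an upper bound on $\log|\mc M|\leq \log|\Cl(d)|$. For the multi-qubit Clifford group in dimension $d=2^n$, the standard order formula gives
\begin{equation}
	|\Cl(d)| = 2^{n^2+2n}\prod_{j=1}^n (4^j-1) \leq 2^{3n^2 + \landauO(n)},
\end{equation}
so that $\log|\Cl(d)| = \landauO(n^2) = \landauO(\log(d)^2)$. Inserting this estimate into the conclusion of the theorem replaces the product $\log(d)\log(|\mc M|)$ by $\landauO(\log(d)^3)$, and the stated bound with an absolute constant $c_3$ follows immediately, while $c_2$ is inherited unchanged.

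The only mild subtlety is bookkeeping. Strictly speaking, identical Clifford unitaries can yield identical Choi matrices (those differing by a global phase), so $|\mc M|$ could be slightly smaller than $|\Cl(d)|$; this can only make the bound stronger, so there is nothing to do. Similarly, one should check that the theorem's proof does not implicitly require distinct or linearly independent rank-$1$ projectors; from the statement it merely requires a set, so Clifford projectors qualify. I expect no real obstacle here: the genuine work is in the preceding theorem, and this corollary is essentially a substitution together with the Clifford cardinality estimate.
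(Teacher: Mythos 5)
Your proposal is correct and follows essentially the same route as the paper: the corollary is obtained by substituting the Clifford-group cardinality bound $|\Cl(d)| = 2^{n^2+2n}\prod_{j=1}^n(4^j-1) < 2^{2n^2+4n}$, hence $\log(|\mc M|) = \landauO(\log(d)^2)$, into Theorem~\ref{thm:lower_bound_states}. Your exponent $3n^2+\landauO(n)$ is slightly looser than the paper's $2n^2+4n$ but yields the same $\log(d)^3$ scaling, so the argument goes through unchanged.
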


\begin{proof}
The cardinality of the $n$-qubit Clifford group ($d = 2^n$) is bounded as 
\begin{equation} \label{eq:cardinality:Cl}
	|\Cl(d)| 
	= 
	2^{n^2+2n}\prod_{j=1}^n(4^j-1)
	< 
	2^{2n^2+4n} \, 
\end{equation}
\cite{CalderbankEtAl:1996}. This implies that in case of our Clifford group measurements we have 
$\log(|\mc M|)< 2\log(d)^2 + 4 \log(d)$. 
\end{proof}

In every meaningful measurement and reconstruction scheme the minimax risk needs to be small. 
The corollary implies that, in the case of Cliffords, the number of copies $M$ need to scale with the dimension as
\begin{equation}
	M \in \Omega\left( \frac{d^4}{\log(d)^3} \right) \, ,
\end{equation}
where we have assumed $\recerrorTr>0$ to be small. 
This establishes a lower bound on the number of channel uses that every POVM measurement and reconstruction scheme requires for a guaranteed successful recovery of unitary channels from \AGFs\ with respect to Clifford unitaries. 

From the argument as it is presented here it is not possible to extract the optimal parametric dependence of the number of channel uses $M$ on the desired reconstruction error $\recerrorTr$. For quantum state tomography such bounds were derived in Ref.~\cite{HaaHarJi15} by extending the argument of Ref.~\cite{FlaGroLiu12} and constructing different $\recerrorTr$-packing nets. By adapting the $\recerrorTr$-packing net constructions of Ref.~\cite{HaaHarJi15} to unitary gates  
one might be able to derive  a optimal parametric dependence of $M$ on $\recerrorTr$. But it is not obvious how one can incorporate the restriction of the measurements to unit rank in the argument of Ref.~\cite{HaaHarJi15}. We leave this task to future work. 

In the remainder of this section we prove Theorem~\ref{thm:lower_bound_states}. 
The proof proceeds in two steps. At first we derive a more general bound on the minimax risk, Lemma~\ref{lem:minimax}, that follows mainly from combining Fano's inequality with the data processing inequality, see e.g.~\cite{CoverThomas:2012}. This is a slight generalization of Lemma~1 of Ref.~\cite{FlaGroLiu12} adjusted to the situation where the outcome probabilities of the POVM  measurements do not necessarily  concentrate around the value $1/2$. 
Lemma~\ref{lem:minimax} assumes the existence of an $\recerrorTr$-packing net for the set of unitary gates whose measurement outcomes are in a small interval to establish a lower bound on the minimax risk. 
Hence, in order to complete the proof, we have to establish the existence of a suitable packing net, Lemma~\ref{lem:net}, in a second step. Combining the general bound of Lemma~\ref{lem:minimax} and the existence of the packing net of Lemma~\ref{lem:net}, the proof of Theorem~\ref{thm:lower_bound_states} follows. 

We begin with the general information theoretic bound on the minimax risk. 

\begin{lemma}[Lower bound to the minimax risk]
\label{lem:minimax}
Let $\recerrorTr>0$ and 
$0 < \alpha < \beta \leq 1/2$.
Assume that there are states $\rho_1, \dots , \rho_s \in \Pos_{D}$ 
and orthogonal projectors $\pi_1, \dots, \pi_n \in \Pos_{D}$
such that 
\begin{align}\label{eq:minimax:net_anticon}
\TrNorm{\rho_i - \rho_j} \geq \recerrorTr
\\
\Tr[\pi_k \rho_i] \in [\alpha, \beta] \label{eq:minimax:net_nonbiased}
\end{align}
for all $i\neq j \in [s]$ and $k\in [n]$. 
Then the minimax risk \eqref{eq:def:minimax_risk} of $M$ single measurements is bounded as
\begin{equation}
	R^\ast(M,\recerrorTr) \geq 1- \frac{M (h(\beta)-h(\alpha))+1}{\log(s)} \, ,
\end{equation}
where $h$ denotes the binary entropy. 
\end{lemma}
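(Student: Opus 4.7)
The plan is to follow the standard information-theoretic ``Fano meets packing'' strategy, adapted to the fact that the measurement outcomes have probabilities living in a sub-interval $[\alpha,\beta]\subseteq[0,1/2]$ rather than being arbitrary Bernoulli random variables. Concretely, I would place a uniform prior $I\sim\text{Unif}([s])$ on the packing $\rho_1,\dots,\rho_s$ and view the estimator $\hat\rho(Y)$ as inducing a (possibly randomised) guess $\hat I \in [s]$ of $I$ by rounding to the nearest packing element. The $\recerrorTr$-separation of the packing, together with the triangle inequality, ensures that whenever $\TrNorm{\hat\rho(Y)-\rho_I}\le \recerrorTr/2$ the rounding returns $\hat I=I$; hence the Bayes error probability $\Pr[\hat I\ne I]$ lower-bounds the worst-case error probability $\Pr[\TrNorm{\hat\rho(Y)-\rho}>\recerrorTr]$ appearing in the definition of $R^\ast$, and thereby the minimax risk.

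Next, I would invoke Fano's inequality for the discrete Markov chain $I \to Y \to \hat I$, yielding
\begin{equation}
\Pr[\hat I\ne I]\;\ge\; 1-\frac{I(I;Y)+1}{\log s}.
\end{equation}
The crucial step is then to upper bound the mutual information $I(I;Y)$ uniformly in the (adaptive) choice of measurements $\Pi_i$. By the chain rule for mutual information,
\begin{equation}
I(I;Y)=\sum_{i=1}^{M} I(I;Y_i\mid Y_{<i}) = \sum_{i=1}^M\bigl[H(Y_i\mid Y_{<i})-H(Y_i\mid Y_{<i},I)\bigr].
\end{equation}
Because the $i$-th measurement is of the form $\Pi_i(Y_{<i})\in\meas$ with every $\pi\in\meas$ satisfying assumption \eqref{eq:minimax:net_nonbiased}, conditional on any realisation of $(Y_{<i},I)$ the outcome $Y_i$ is Bernoulli with parameter in $[\alpha,\beta]\subset[0,1/2]$. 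Since the binary entropy $h$ is monotonically increasing on $[0,1/2]$, this gives $H(Y_i\mid Y_{<i},I)\ge h(\alpha)$, and averaging over $I$ the (marginal) distribution of $Y_i$ conditional on $Y_{<i}$ remains a Bernoulli with parameter in $[\alpha,\beta]$, hence $H(Y_i\mid Y_{<i})\le h(\beta)$. Plugging in yields $I(I;Y)\le M\bigl(h(\beta)-h(\alpha)\bigr)$, and substituting this bound into Fano's inequality produces exactly the claimed estimate.

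The main obstacle is the careful treatment of the adaptive measurement setup: one has to justify that both the conditional and marginal per-step entropies continue to obey the $[h(\alpha),h(\beta)]$ envelope uniformly in the past outcomes $Y_{<i}$, which in turn relies on the assumption that \eqref{eq:minimax:net_nonbiased} holds for \emph{all} $\pi\in\meas$ that the estimator might select, not just a fixed list. Everything else is a routine combination of the chain rule, monotonicity of $h$ on $[0,1/2]$, and the data-processing/packing reduction. Once this is in place, identifying a packing with the required properties (which is the content of Lemma~\ref{lem:net} in the next step) completes the proof of Theorem~\ref{thm:lower_bound_states}.
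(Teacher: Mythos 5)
Your proposal is correct and follows essentially the same route as the paper: reduce to a multiple-hypothesis test over the packing via the nearest-point decoder, apply Fano plus data processing, and then bound $I(X;Y)\leq M\,(h(\beta)-h(\alpha))$ by combining the chain rule with the monotonicity of the binary entropy on $[0,1/2]$ applied to the per-step conditional and marginal outcome distributions, whose Bernoulli parameters lie in $[\alpha,\beta]$ by assumption \eqref{eq:minimax:net_nonbiased}. The only cosmetic difference is that you state the packing/decoding reduction with the (correct) $\recerrorTr/2$ threshold, whereas the paper absorbs this factor of two into the statement; the information-theoretic core is identical.
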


\begin{proof}
We start by following the proof of \cite[Lemma~1]{FlaGroLiu12}:
Let $X$ be the random variable uniformly distributed over $[s]$ and let
$Y_1, \dots, Y_M$ be the random variables describing the $M$ single POVM measurements performed on $\rho_X$. 
Consider any estimator $\hat \rho$ of the state $\rho_X$ from the measurements $Y$ 
and define
\begin{equation}
	\hat X(Y) \coloneqq \argmin_{i\in [s]} \TrNorm{\hat \rho(Y) - \rho_i} \, .
\end{equation}
Then, for all $i \in [s]$,
\begin{equation}
	\PP[\TrNorm{\hat \rho(Y) - \rho_i} \geq \recerrorTr]
	\geq
	\PP[\hat X(Y) \neq X].
\end{equation}
Following Ref.~\cite{FlaGroLiu12}, we combine Fano's inequality and the data processing inequality for the mutual information $I(X;Z) = H(X) - H(X|Z)$, where $H$ denotes the entropy and conditional entropy, to obtain
\begin{align}
	\PP[\hat X(Y) \neq X] 
	&\geq 
	\frac{H(X|\hat X(Y))-1}{\log(s)} 
	\\
	&\geq 
	1-
	\frac{I(X;Y) +1}{\log(s)} \, .
\end{align}
Now we start deviating from Ref.~\cite{FlaGroLiu12}. 
We use that $I(X;Y)=I(Y;X)$, the chain rule, and the definition of the conditional entropy to obtain
\begin{align}
	&\phantom{\geq.}
	\PP[\hat X(Y) \neq X] 
	\\
	&\geq 
	1- \frac{ H(Y) -H(Y|X) +1 }{\log(s)} 
	\\
	&=
	1- \frac 1{\log(s)} \biggl( 
			\sum_{j=1}^M \Bigl\{ H(Y_j|Y_{j-1}, \dots, Y_1) 
			\\ 
			& \qquad \qquad 
			- \frac 1s \sum_{i=1}^s  H(Y_j|Y_{j-1}, \dots, Y_1, X=i) \Bigr\}
			+1 \biggr) .
\end{align}
Now we use that 
\begin{equation}
	H(Y_j|Y_{j-1}, \dots, Y_1, X=i)
	\geq 
	h(\alpha)
\end{equation}
and 
\begin{equation}
	H(Y_j|Y_{j-1}, \dots, Y_1) 
	\leq 
	h(\beta) \, ,
\end{equation}
where $h$ is the binary entropy, to arrive at
\begin{align}
	\PP[\hat X(Y) \neq X] 
	&\geq 
	1- \frac{ 
			M (h(\beta) -h(\alpha))+1 }{ \log(s) } 
	\\
	&\geq 
		1- \frac{ 
			M (h(\beta) -h(\alpha))+1 }{ \log(s) } .
\end{align}
\end{proof}

To apply Lemma~\ref{lem:minimax} we need to proof the existence of an $\recerrorTr$-packing net $\{\rho_i\}_{i=1}^s$ consisting of unitary quantum gates with the properties \eqref{eq:minimax:net_anticon} and \eqref{eq:minimax:net_nonbiased}. The construction of such a suitable $\recerrorTr$-packing net will use the fact that the modulus of the trace of a Haar random unitary matrix is a sub-Gaussian random variable. This can be
viewed as a non-asymptotic version of a classic result by Diaconis and Shahshahani \cite{DiaSha94}: the trace of a Haar random unitary matrix in $\U(d)$ is a complex Gaussian random variable in the limit of infinitely large dimensions $d$. 

\paragraph*{The trace of Haar random unitaries is sub-Gaussian.}
The statement follows from the fact that the moments of the modulus of the trace of a Haar random unitary are dominated by the moments of a Gaussian variable. 

\begin{proposition}\label{lem:S_Umoments}
	For all $d,k\in\ZZ_+$
	\begin{equation}
		\EE_{U \sim \operatorname{Haar}(\U(d))} \left[ |\Tr[U]|^{2k} \right] \leq k!,
	\end{equation}
	with equality if $k \leq d$. 
\end{proposition}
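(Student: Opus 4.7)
The plan is to reduce the computation to Lemma~\ref{lem:generalKthMoment} by specialising to the identity channel, and then to recognise the resulting expression via the classical dimension identity for the regular representation of $S_k$.

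Represent the identity map $\mc T(X) = X$ through its normalised Kraus decomposition $T_1 = \Id/\sqrt{d}$ with weight $\lambda_1 = d$. The random variable defined in Eq.~\eqref{eq:SmcT} then takes the explicit form
$$S_{\mc T} = \lambda_1 \bigl|\Tr(U\ad T_1)\bigr|^2 = d\left|\frac{\Tr U\ad}{\sqrt{d}}\right|^2 = |\Tr U|^2,$$
so that $\EE[S_{\mc T}^k] = \EE\bigl[|\Tr U|^{2k}\bigr]$ for every $k$. Applying Lemma~\ref{lem:generalKthMoment} with a single Kraus index collapses every tensor product $\bigotimes_{j=1}^k T_{i_j}$ to $d^{-k/2}\Id_{d^k}$; the trace in each summand thus evaluates to $d^{-k}\Tr P_\lambda = d^{-k} D_\lambda d_\lambda$. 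Summing over $\tau \in S_k$ contributes an overall factor of $k!$ that cancels the prefactor $1/k!$, the weight product gives $\lambda_1^k = d^k$, and the $D_\lambda$ from the trace cancels the $D_\lambda$ in the denominator of $d_\lambda/D_\lambda$. After all cancellations we obtain
$$\EE_{U\sim\operatorname{Haar}(\U(d))}\bigl[|\Tr U|^{2k}\bigr] = \sum_{\lambda \vdash k,\ l(\lambda)\leq d} d_\lambda^2.$$

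The final step is the classical representation-theoretic identity $\sum_{\lambda \vdash k} d_\lambda^2 = |S_k| = k!$, which expresses the dimension of the regular representation of $S_k$ as a sum of squared dimensions of its irreducible components. Restricting the sum to partitions with $l(\lambda)\leq d$ can only drop non-negative terms, yielding the bound $\leq k!$. Equality holds precisely when no partition is dropped; since every partition of $k$ has at most $k$ non-zero parts, this occurs whenever $k \leq d$. There is no genuine technical obstacle here: the only care required is bookkeeping the dimensional factors produced by Lemma~\ref{lem:generalKthMoment} when specialised to the identity channel. (Equivalently, one could start from $\Tr(U^{\otimes k}) = \sum_{\lambda\vdash k,\, l(\lambda)\leq d} d_\lambda \chi_\lambda^{\U(d)}(U)$ via Schur--Weyl duality and apply orthonormality of unitary characters to arrive at the same intermediate formula.)
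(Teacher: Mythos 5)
Your proof is correct and follows essentially the same route as the paper's: both reduce the moment to the Weingarten-type integration formula of Theorem~\ref{thm:intU} (you via the already-packaged Lemma~\ref{lem:generalKthMoment} specialised to $\mc T = \mathrm{id}$ with $T_1 = \Id/\sqrt{d}$, $\lambda_1 = d$; the paper by applying Theorem~\ref{thm:intU} directly to matrix units and tracing), arrive at $\sum_{\lambda \vdash k,\ l(\lambda)\leq d} d_\lambda^2$ using $\Tr P_\lambda = d_\lambda D_\lambda$, and conclude with the regular-representation identity $\sum_{\lambda\vdash k} d_\lambda^2 = k!$ together with the observation that the length restriction is vacuous for $k \leq d$. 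The dimensional bookkeeping in your specialisation checks out, so no gap.
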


\begin{proof}
Denote by $S \coloneqq |\Tr(U)|^2$  the random variable with $U \in \U(d)$ drawn from the Haar measure. 
Let $\{\ket n\}_{n=1}^{d^k}$ be an orthonormal basis of $(\CC^d)^{\otimes k}$. 
The $k$-th moment of $S$ is given by 
\begin{align}
	\EE[S^k] &= \sum_{n, m=1}^{d^{k}} \sandwich{n}{U^{\otimes k}}{n}\sandwich{m}{(U\ad)^{\otimes k}}{m}.
\end{align}
Applying Theorem~\ref{thm:intU}, we get 
\begin{align}
	\EE[S^k] &= \frac{1}{k!} \sum_{n,m=1}^{d^k} \sum_{\tau \in S_k} \sum_{\lambda  \vdash k, l(\lambda )\leq d} \frac{d_\lambda }{D_\lambda } \\
	&\times \sandwich{m}{\pi^d_{S_k}(\tau )}{n} \sandwich{n}{\pi^d_{S_k}(\tau ^{-1})P_\lambda }{m}\\
	&= \frac{1}{k!} \sum_{\tau  \in S_k} \sum_{\lambda \vdash k, l(\lambda ) \leq d} \frac{d_\lambda }{D_\lambda } \Tr(\pi^d_{S_k}(\tau ) \pi^d_{S_k}(\tau ^{-1}) P_\lambda) \\
	&= \sum_{\lambda \vdash k, l(\lambda ) \leq d} \frac{d_\lambda }{D_\lambda } \Tr(P_\lambda ).
\end{align}
Since $\Tr(P_\lambda ) = d_\lambda D_\lambda $, we conclude
\begin{equation}
	\EE[S^k] = \sum_{\lambda \vdash k, l(\lambda ) \leq d} d_\lambda ^2 
	\leq \sum_{\lambda \vdash k} d_\lambda^2 = k!\;.
\end{equation}
The last equality can be seen from the orthogonality relation of the characters of the symmetric group, see e.\,g.\@ Ref.~\cite[Chapter~2]{FouHar91} for more details. Note that the second inequality is saturated in the case where $k \leq d$ since in this case the restriction $l(\lambda) \leq d$ is automatically fulfilled. 
\end{proof}

As a simple implication of the previous lemma is that the random variable $S=|\Tr(U)|^2$ has subexponential tail decay.

\begin{lemma}\label{lem:S_tail}
Let $S$ be a real-valued random variable that obeys $\mathbb{E} \left[ |S|^k \right] \leq k!$ for all $k \in \mathbb{N}$. Then, the right tail of $X$ decays at least subexponentially. For any $t \geq 0$,
\begin{equation*}
\mathbb{P} \left[ S \geq t \right] \leq \mathrm{e}^{-\kappa t+2},
\end{equation*}
with $\kappa = 1- \frac{1}{2 \mathrm{e}}$.
\end{lemma}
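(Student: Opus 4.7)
The plan is a standard Chernoff-style argument exploiting the fact that the moment bound $\mathbb{E}[|S|^k] \leq k!$ is precisely what controls the exponential moment generating function of $S$ at any point strictly inside the unit interval.

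First, I would apply Markov's inequality to the nonnegative random variable $e^{\kappa S}$: for every $\kappa \in (0,1)$ and every $t \geq 0$,
\begin{equation}
\mathbb{P}[S \geq t] = \mathbb{P}[e^{\kappa S} \geq e^{\kappa t}] \leq e^{-\kappa t}\, \mathbb{E}[e^{\kappa S}].
\end{equation}
Next, I would bound the exponential moment by expanding in a power series and invoking the hypothesis term by term:
\begin{equation}
\mathbb{E}[e^{\kappa S}] = \sum_{k=0}^{\infty} \frac{\kappa^k}{k!}\, \mathbb{E}[S^k]
\leq \sum_{k=0}^{\infty} \frac{\kappa^k}{k!}\, \mathbb{E}[|S|^k]
\leq \sum_{k=0}^{\infty} \kappa^k = \frac{1}{1-\kappa},
\end{equation}
where the interchange of sum and expectation is justified by monotone convergence applied to $|S|$. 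The geometric series converges precisely because the specific choice $\kappa = 1 - \frac{1}{2e}$ lies in $(0,1)$; for this value one has $(1-\kappa)^{-1} = 2e$.

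Combining the two estimates gives $\mathbb{P}[S \geq t] \leq 2e \cdot e^{-\kappa t} = e^{-\kappa t + 1 + \ln 2}$, and since $1 + \ln 2 < 2$ we obtain the claimed bound $\mathbb{P}[S \geq t] \leq e^{-\kappa t + 2}$. There is no real obstacle here: the only delicate point is the choice of $\kappa$, which has to be strictly less than $1$ to keep the series finite while being as close to $1$ as possible to give the sharpest tail exponent; the value $\kappa = 1 - \frac{1}{2e}$ is one convenient compromise that makes the prefactor fit cleanly into $e^{2}$.
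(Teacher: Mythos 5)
Your proof is correct, but it takes a genuinely different route from the paper's. The paper applies Markov's inequality to the $k$-th power at the integer level $k=\lfloor t\rfloor$, invokes Stirling's bound $k!\le \e\sqrt{k}\,k^{k}\e^{-k}$, and then absorbs the $\sqrt{k}$ factor into the exponential via $\sqrt{k}\,\e^{-k}\le \e^{-\kappa k}$ --- this last inequality, equivalent to $\log k\le k/\e$, is exactly what forces the value $\kappa=1-\tfrac{1}{2\e}$. You instead run the standard Chernoff/moment-generating-function argument: the hypothesis $\EE[|S|^k]\le k!$ says precisely that $\EE[\e^{\kappa|S|}]\le (1-\kappa)^{-1}$ for every $\kappa\in(0,1)$, and the choice of $\kappa$ is then constrained only by the requirement that the prefactor $(1-\kappa)^{-1}$ fit under $\e^{2}$, i.e.\ $\kappa\le 1-\e^{-2}\approx 0.865$, which $\kappa=1-\tfrac{1}{2\e}\approx 0.816$ satisfies comfortably (indeed your prefactor $\e^{1+\log 2}$ is slightly better than the paper's $\e^{1+\kappa}$). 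Your route is cleaner --- no Stirling, no floor function, no separate treatment of $t<1$ --- and makes transparent why $\kappa$ must stay below $1$ (it is the radius of convergence of the geometric series bounding the MGF); the paper's route has the virtue of explaining where the specific numerical value of $\kappa$ comes from. One cosmetic point: for a general real-valued $S$, the identity $\EE[\e^{\kappa S}]=\sum_k \kappa^k\EE[S^k]/k!$ requires dominated convergence with the integrable majorant $\e^{\kappa|S|}$; monotone convergence alone only yields the statement for $|S|$, so it is simpler to bound $\EE[\e^{\kappa S}]\le\EE[\e^{\kappa|S|}]$ and work with that. In the paper's application $S=|\Tr(U)|^2\ge 0$, so the distinction is moot.
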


 This is a consequence of a standard result in probability theory that can be found in many textbooks, e.g.\ \cite{Ver12} and \cite[Section~7.2]{FouRau13}. We present a short proof here in order to be self-contained.

\begin{proof}
We use Markov's inequality, Proposition~\ref{lem:S_Umoments}, and Stirling's bound 
$k! \leq \e\,\sqrt{k}\, k^k\, \e^{-k}$ to obtain
for any $k \in \mathbb{N}$
\begin{align}
\PP[S\geq k] 
&\leq 
\frac{\EE[|S|^k]}{k^k}
\leq 
\frac{k!}{k^k}
\leq  
\mathrm{e} \sqrt{k} \mathrm{e}^{-k}.
\end{align}
In order to prove the tail bound, we choose $t \geq 0$ arbitrary and let $k$ be the largest integer that is smaller or equal to $t$ ($k = \lfloor t \rfloor$). Then
\begin{align*}
\mathrm{Pr} \left[ S \geq t \right]
\leq \mathrm{Pr} \left[ S \geq k \right]
\leq \mathrm{e} \sqrt{k} \mathrm{e}^{-k} 
\leq \mathrm{e}^{-\kappa k + 1} 
\leq \mathrm{e}^{-\kappa t + 1+\kappa}.
\end{align*}
Here, we have used $\sqrt{k} \mathrm{e}^{-k} \leq \mathrm{e}^{-\kappa k}$ and $t \leq k+1$. 
\end{proof}

Random variables with subgaussian tail decay -- \emph{subgaussian random variables} -- are closely related to random variables with subexponential tail decay: $X$ is subgaussian if and only if $X^2$ is subexponential.

Thus, Proposition~\ref{lem:S_Umoments} highlights that the trace of a Haar-random unitary is a subgaussian random variable. This is the aforementioned generalization of the classical result by Diaconis and Shashahani.

\paragraph*{A packing net with concentrated measurements.}

The proof of existence of an $\recerrorTr$-packing net to apply Lemma~\ref{lem:minimax} uses a probabilistic argument as in Ref.~\cite{FlaGroLiu12}. Here, the strategy is the following: We assume we are already given an $\recerrorTr$-packing net of a size $s-1$ that satisfies the desired concentration condition \eqref{eq:minimax:net_nonbiased}. We then show that a Haar random unitary gate also fulfils the concentration condition  and is $\recerrorTr$-separated from the rest of the net with strictly positive probability. Consequently, if one can be lucky to randomly arrive at a suitable $\recerrorTr$-packing net of size $s$ in this way then it must also exist. 

We start by deriving an anti-concentration result for the Choi matrix $J(\mc U)$  of a unitary channel given by a Haar random unitary $U$ in $\U(d)$.  

\begin{lemma}\label{lem:TrNormConcentration}
	Let $\mc V$ be a unitary gate. For all $\recerrorTr > 0$
	\begin{equation}
		\Pr_{U \sim \operatorname{Haar}(\U(d))}[\tnorm{J(\mc U) - J(\mc V)} \leq \recerrorTr ] 
		\leq  
		\e^{-\kappa d^2(1 - \recerrorTr/2)^2+2}
	\end{equation}
	with $\kappa>0$	being the constant from Lemma~\ref{lem:S_tail}. 
\end{lemma}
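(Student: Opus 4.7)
The plan is to reduce the statement to a tail bound on the scalar random variable $|\Tr(\tilde U)|^2$ with $\tilde U$ Haar random, and then invoke the subexponential tail bound of Lemma~\ref{lem:S_tail} together with the moment bound of Proposition~\ref{lem:S_Umoments}.

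First, I would translate the trace-norm condition on Choi matrices into a condition on $|\Tr(V^\dagger U)|$. Since $\mc U$ and $\mc V$ are unitary channels, their Choi matrices are rank-one projectors onto the vectors $|\psi_U\rangle \coloneqq (U\otimes\Id)|\psi\rangle$ and $|\psi_V\rangle\coloneqq (V\otimes\Id)|\psi\rangle$, where $|\psi\rangle = d^{-1/2}\sum_k |k\rangle\otimes|k\rangle$. A direct computation gives
\begin{equation}
|\langle\psi_V|\psi_U\rangle|^2 = \tfrac{1}{d^2}|\Tr(V^\dagger U)|^2,
\end{equation}
and for pure states the standard identity $\||\psi\rangle\langle\psi|-|\phi\rangle\langle\phi|\|_1 = 2\sqrt{1-|\langle\psi|\phi\rangle|^2}$ yields
\begin{equation}
\tnorm{J(\mc U)-J(\mc V)} = 2\sqrt{1-\tfrac{1}{d^2}|\Tr(V^\dagger U)|^2}.
\end{equation}

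Next, the inequality $\tnorm{J(\mc U)-J(\mc V)}\leq \recerrorTr$ then reads $\tfrac{1}{d^2}|\Tr(V^\dagger U)|^2 \geq 1-\recerrorTr^2/4$. Using $1-\recerrorTr^2/4 \geq (1-\recerrorTr/2)^2$ (which holds whenever $\recerrorTr \leq 2$, trivially the case for trace distance), this event implies $|\Tr(V^\dagger U)|^2 \geq d^2(1-\recerrorTr/2)^2$. By right-invariance of the Haar measure, $\tilde U \coloneqq V^\dagger U$ is again Haar distributed on $\U(d)$, so the $V$-dependence drops out and it suffices to bound $\Pr[|\Tr(\tilde U)|^2 \geq d^2(1-\recerrorTr/2)^2]$.

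Finally, Proposition~\ref{lem:S_Umoments} shows that the random variable $S \coloneqq |\Tr(\tilde U)|^2$ satisfies $\EE[S^k]\leq k!$ for every $k\in\NN$, which is precisely the moment hypothesis of Lemma~\ref{lem:S_tail}. Applying that lemma with $t = d^2(1-\recerrorTr/2)^2$ gives the desired bound
\begin{equation}
\Pr[S\geq t] \leq \e^{-\kappa t + 2} = \e^{-\kappa d^2(1-\recerrorTr/2)^2 + 2}.
\end{equation}
No step is really an obstacle; the mildly non-obvious one is the clean passage from the trace-distance threshold $\recerrorTr$ to the overlap threshold $(1-\recerrorTr/2)^2$, which is what produces exactly the exponent appearing in the statement.
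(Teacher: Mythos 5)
Your proposal is correct and follows essentially the same route as the paper: reduce to the case $\mc V = \Id$ by Haar invariance, express $\tnorm{J(\mc U)-J(\mc V)}$ as $2\sqrt{1-d^{-2}|\Tr(V\ad U)|^2}$ via the rank-one structure of the Choi matrices, convert the trace-distance threshold into the overlap threshold $d^2(1-\recerrorTr/2)^2$, and invoke the subexponential tail bound of Lemma~\ref{lem:S_tail}. The only cosmetic difference is the algebra used for the threshold conversion (you use $1-\recerrorTr^2/4\geq(1-\recerrorTr/2)^2$ directly, whereas the paper bounds $2\sqrt{1-x}\geq 2(1-\sqrt{x})$), which yields the identical exponent.
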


\begin{proof}
	Due to the unitary invariance of the trace-norm and the Haar measure, it suffice to show the statement for $\mc V = \Id$. 
	For a unitary channel with Choi-matrix $J(\mc U) = d^{-1} \operatorname{vec}(U) \operatorname{vec}(U\ad)^{t}$ and Kraus-operator $U \in \U(d)$ we have
	\begin{equation}
		\tnorm{J(\mc U) - J(\Id)} = 2 \sqrt{1 - \frac{1}{d^2}|\Tr(U)|^2} \geq 2\left( 1 - \frac{1}{d}|\Tr(U)|\right).
	\end{equation}
	For the first equation we calculate the set eigenvalues of $J(\mc U) - J(\Id)$, which is $\{ \pm \sqrt{1 - d^{-2}| \Tr(U)|^2}\}$. 
	Introducing the random variable  $S_U \coloneqq |\Tr(U)|^2$, we can rewrite the probability  as 
	\begin{align}
		\Pr[\tnorm{J(\mc U) - J(\Id)} \leq \recerrorTr] &\leq \Pr\left[2\left(1 - \frac{1}{d} \sqrt{S_U}\right) \leq \recerrorTr\right] \\
		&= \Pr\left[ S_U \geq d^2\left(1 -  \frac{\recerrorTr}{2}\right)^2 \right].
	\end{align}
	From Lemma~\ref{lem:S_tail} we know that 
	\begin{align}
		\Pr\left[S_U \geq  d^2\left(1 -  \frac{\recerrorTr}{2}\right)^2 \right] 
		\leq 
		\e^{-\kappa d^2(1 - \recerrorTr/2)^2+2}
	\end{align}
	from which the assertion follows.
\end{proof}

The anti-concentration result of Lemma~\ref{lem:TrNormConcentration} implies the existence of a large $\recerrorTr$-packing net $\mc N_\recerrorTr$ of unitary quantum channels. The desired concentration of the measurement outcomes can be established using Lemma~\ref{lem:S_tail}. In summary we arrive at the following assertion:

\begin{lemma}[Packing net with concentrated measurements]
\label{lem:net}
Let $0<\recerrorTr< 1/2$, $\kappa = 1-\frac{1}{2\e}$, and $C_1, \dots, C_K \in \U(d)$.
Then, for any number 
$s<\frac{1}{2}\e^{\kappa(1-\recerrorTr/2)^2 d^2-2}$,
there exist $U_1, \dots, U_s \in \U(d)$ such that
for all $i,j \in [s]$ with $i \neq j$ and for all $k \in [K]$
\begin{align}
\label{eq:PNcond}
	\TrNorm{J(\mc U_i ) - J(\mc U_j)} \geq \recerrorTr 
\, ,
\\
\label{eq:biasCond}
	\frac{1}{d^2} |\Tr[C_k^\dagger U_i]|^2 
	\leq  \frac{\log(2K) + 2}{\kappa d^2}
	\, .
\end{align}
\end{lemma}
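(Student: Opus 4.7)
The plan is a standard probabilistic/greedy existence argument, building the net $U_1, \dots, U_s$ one element at a time. Assume inductively that $U_1, \dots, U_{s-1} \in \U(d)$ have been constructed satisfying \eqref{eq:PNcond} and \eqref{eq:biasCond}; I will show that a Haar-random $U \in \U(d)$ satisfies both conditions relative to the existing points and all $C_k$ with strictly positive probability, so that a suitable $U_s$ must exist.

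For the separation condition \eqref{eq:PNcond}, I apply Lemma~\ref{lem:TrNormConcentration} to each previously placed $\mc U_j$: the probability that a Haar random $\mc U$ satisfies $\TrNorm{J(\mc U)-J(\mc U_j)} < \recerrorTr$ is at most $\e^{-\kappa d^2(1-\recerrorTr/2)^2+2}$. A union bound over $j \in [s-1]$ bounds the total failure probability by $(s-1)\,\e^{-\kappa d^2(1-\recerrorTr/2)^2+2}$.

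For the bias condition \eqref{eq:biasCond}, fix any $C_k$. By unitary (left) invariance of the Haar measure, $C_k^\dagger U$ is again Haar distributed, so $S_k \coloneqq |\Tr(C_k^\dagger U)|^2$ has the same law as $|\Tr(U)|^2$. Proposition~\ref{lem:S_Umoments} therefore applies to $S_k$, and Lemma~\ref{lem:S_tail} gives the subexponential tail bound $\Pr[S_k \geq t] \leq \e^{-\kappa t + 2}$. Choosing $t = (\log(2K)+2)/\kappa$ yields $\Pr[S_k \geq t] \leq \e^{-\log(2K)} = 1/(2K)$, and a union bound over $k \in [K]$ bounds the probability of violating \eqref{eq:biasCond} by $1/2$.

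Combining the two failure probabilities via a union bound, the probability that the random $U$ fails to qualify as $U_s$ is at most
\begin{equation}
(s-1)\,\e^{-\kappa d^2(1-\recerrorTr/2)^2+2} + \tfrac{1}{2}.
\end{equation}
Under the hypothesis $s < \tfrac{1}{2}\e^{\kappa(1-\recerrorTr/2)^2 d^2 - 2}$, the first term is strictly less than $1/2$, so this total is strictly less than $1$; consequently, a Haar-random $U$ satisfies both conditions with positive probability, so a valid $U_s$ exists and the induction proceeds from the empty net up to the claimed size $s$.

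The argument contains no real obstacle: the two tail bounds are already available (Lemmas~\ref{lem:TrNormConcentration} and \ref{lem:S_tail}), and the only mild subtlety is balancing the union bound so that the threshold $t = (\log(2K)+2)/\kappa$ exactly absorbs the $K$-dependence into the log factor of \eqref{eq:biasCond} while leaving a constant budget of $1/2$ to combine with the separation failure probability.
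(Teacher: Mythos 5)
Your proposal is correct and follows essentially the same route as the paper's proof: a greedy/inductive construction where a Haar-random unitary is shown, via Lemma~\ref{lem:TrNormConcentration} and the sub-exponential tail bound of Lemma~\ref{lem:S_tail} (applied after using unitary invariance to reduce $|\Tr(C_k^\dagger U)|^2$ to $|\Tr(U)|^2$), to satisfy both conditions with positive probability under the stated bound on $s$. The union-bound bookkeeping, the choice of threshold $t=(\log(2K)+2)/\kappa$, and the final budget split of $1/2+1/2<1$ all match the paper's argument.
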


\begin{proof}
As outlined above the existence of the described $\recerrorTr$-packing net follows inductively from the fact that if one adds a Haar random unitary gate $\mc U$ to an $\recerrorTr$-packing $\tilde{\mc N}_\recerrorTr$ of size $s-1$ that already fulfils all requirements of the lemma the resulting set $\tilde{\mc N}_\recerrorTr \cup \{\mc U\}$  has still a strictly positive probability to be an $\recerrorTr$-packing net with the desired concentration property \eqref{eq:biasCond}. 

We start with bounding the probability that the resulting set $\tilde{\mc N}_\recerrorTr \cup \{\mc U\}$ fails to be an $\recerrorTr$-packing net. Let us denote the probability that a Haar random $\mc U$ is not $\recerrorTr$-separated from $\tilde{\mc N}_\recerrorTr$ by $\bar{p}_\recerrorTr$. In other words, $\bar{p}_\recerrorTr$ is the probability that there exists $\mc V \in \tilde{\mc N}_\recerrorTr$ with 
\begin{equation}
	\tnorm{J(\mc U) - J(\mc V)} \leq \recerrorTr.
\end{equation}
Taking the union bound for all $\mc V \in \tilde{\mc N}_\recerrorTr$, Lemma~\ref{lem:TrNormConcentration} implies that 
\begin{equation}
	\bar{p}_{\recerrorTr} \leq s \e^{-\kappa d^2 ( 1 - \epsilon /2)^2 + 2}
\end{equation}
with $\kappa= 1-\frac{1}{2\e}$.
Thus, for $s < \frac{1}{2} \e^{-\kappa d^2 ( 1 - \epsilon /2)^2 + 2}$ we ensure that $\bar{p}_{\recerrorTr} < \frac{1}{2}$. 

We now also have to upper bound the probability $\bar{p}_c$ of $\mc U$ not having a concentration property 
\begin{equation}
	\frac{1}{d^2} |\Tr[C_k^\dagger U_i]|^2 
	\leq \beta
\end{equation}
 with respect to $K$ different unitaries $C_1, \ldots, C_K$. 

Using the unitary invariance of the Haar measure and taking the union bound, the tail-bound for the squared modulus of the trace of a Haar random unitary, Lemma~\ref{lem:S_tail}, yields 
\begin{equation}\label{eq:TrCU_concentration}
	\bar{p}_c \leq
	K \e^{-\kappa \beta d^2+2} 
\end{equation}
for $\beta \geq 2$. In order for $\bar{p}_c$ to be at most $1/2$, we need that 
\begin{align}
\beta \geq \frac{\log(2K) + 2}{\kappa d^2}.
\end{align}
In summary, we have established that $\bar{p}_\recerrorTr + \bar{p}_c < 1$ as long as $s < \frac{1}{2} \e^{-\kappa d^2 ( 1 - \epsilon /2)^2 + 2}$  and the achievable concentration is $\beta \geq (\log(2K) + 2)/(\kappa d^2)$. Hence, in this parameter regime there always exist at least one additional unitary gate extending the $\recerrorTr$-packing net. Inductively this proves the existence assertion of the lemma. 
\end{proof}

Having established a suitable $\recerrorTr$-packing net, we can now apply Lemma~\ref{lem:minimax} to derive the lower bound on the minimax-risk for the recovery of unitary gates from unit rank measurements of Theorem~\ref{thm:lower_bound_states}, the main result  of this section. 

\begin{proof}[Proof of Theorem~\ref{thm:lower_bound_states}]
We will apply Lemma~\ref{lem:minimax} with $\alpha=0$ and 
\begin{align}
	\beta = \frac{\log(2|\mc M|)+2}{\kappa d^2}
	\end{align}
and we use that 
$h(\beta) \leq 2\beta \log(1/\beta)$ for 
$\beta \leq 1/2$. 
Combining the Lemmas~\ref{lem:minimax} and~\ref{lem:net} we obtain 
\begin{align}
R^\ast(M,\recerrorTr)
&\geq 1- \frac{M h(c/d^2) + 1}
			 {(\kappa(1-\recerrorTr/2)^2 d^2+2)/\log(2) - 2}
\\
&\geq 1- \frac{2\frac{\log(2|\mc M|)+2}{\kappa d^2} \log\left(\frac{\log(2|\mc M|)+2}{\kappa d^2}\right)M + 1}
			 {d^2 (\kappa(1-\recerrorTr/2)^2 d^2+2)/\log(2) - 2} \, ,
\end{align}
where, in Lemma~\ref{lem:net} we have chosen $s$ to be the strict upper bound minus one. 
Finally, we simplify the bound by choosing large enough constants $c_1$ and $c_2$. 
\end{proof}

\subsection{Expansion of quantum channels in average gate fidelities}\label{sec:tight_frame} 
In this section, we give a instructive proof of the result of \cite{ScottDesigns} that the linear span of the \emph{unital} channels coincides with the linear span of the \emph{unitary} ones, even if one restricts to the unitaries from a unitary $2$-design. 
We also link this finding to \AGFs. On the way, we establish the simple formula of  Proposition~\ref{prop:affcomb} that allows for the reconstruction of unital and trace-preserving maps from measured \AGFs\ with respect to a arbitrary unitary $2$-design, e.g.~Clifford gates.  

In Lemma~\ref{lem:secondMoment} we derived an explicit expression for the second moment of the random variable $S_{\mc T} =d^2(\mc T, \mc U)$. For $\mc T \in \Lin_{\overline{\utp}}$, the linear hull of unital and trace-preserving maps, and $\mc U$ uniformly drawn from a unitary $2$-design the expression in fact indicates that a unitary $2$-design constitutes a Parseval frame for $\Lin_{\overline{\utp}}$. 
More abstractly, this observation stems from the general fact that irreducible unitary representations form Parseval frames on the space of endomorphisms of their representation space. For this reason it is instructive, to derive the connection explicitly in the `natural' representation-theoretic language. We begin with formalising the connection between irreducible representations and Parseval frames. 

\begin{lemma}[Irreps form a Parseval frame]\label{lem:irrep frame}
  Let $R: G \to \End(V)$ be an irreducible unitary representation of a group $G$.
  Then the set $\{\sqrt{\dim V}\,R(g)\}_{g\in G}$ forms a Parseval frame for the space $\End(V)$ equipped with the Hibert-Schmidt inner product $A, B \mapsto \Tr[A\ad B]$,
  in the sense that
  \begin{align}\label{eqn:projection algebra}
	  T_G(A)\coloneqq 
	  \dim(V)\,\int_G R(g) \Tr[R(g)\ad A]\, d\mu(g)  = A
  \end{align}
  for all $A\in \End(V)$.
\end{lemma}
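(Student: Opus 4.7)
The plan is to show that the super-operator $T_G$ defined in \eqref{eqn:projection algebra} equals the identity on $\End(V)$ by a standard Schur's-lemma argument. I would consider the representation of $G\times G$ on $\End(V)\cong V\otimes V^\ast$ given by $(h,k)\cdot A \coloneqq R(h)\,A\,R(k)\ad$. Since $R$ is irreducible, both $V$ and $V^\ast$ are irreducible as $G$-modules, and so this tensor product is irreducible as a $(G\times G)$-module.

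The first step is to verify that $T_G$ intertwines this $(G\times G)$-action. A short change-of-variables computation, using cyclicity of the trace, unitarity $R(g)\ad=R(g^{-1})$, and bi-invariance of the Haar measure $\mu$, gives
\begin{align*}
T_G(R(h)AR(k)\ad)
&= \dim(V) \int_G R(g)\,\Tr[R(h^{-1}gk)\ad A]\, d\mu(g) \\
&= \dim(V) \int_G R(hg'k^{-1})\,\Tr[R(g')\ad A]\, d\mu(g') \\
&= R(h)\, T_G(A)\, R(k)\ad,
\end{align*}
after substituting $g'=h^{-1}gk$ and pulling the constant operators $R(h)$, $R(k)\ad$ out of the integral.

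In the second step I would invoke Schur's lemma for the irreducible $(G\times G)$-representation $R\otimes R^{\ast}$ on $V\otimes V^\ast$ to conclude that $T_G = c\cdot \id$ for some scalar $c\in\CC$. To pin down $c$, I would take the trace of both sides at $A=\1$: on the one hand, $\Tr T_G(\1)=\dim(V)\int_G|\chi_R(g)|^2\,d\mu(g)=\dim(V)$ by orthogonality of irreducible characters (itself a consequence of Schur applied to $R$); on the other hand $\Tr(c\,\1)=c\,\dim(V)$. Hence $c=1$.

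As an equivalent and more pedestrian alternative, I could expand $A$ in the standard matrix-unit basis $\{E_{ij}\}$ of $\End(V)$ and plug the Schur orthogonality relations $\int_G R_{ij}(g)\overline{R_{kl}(g)}\, d\mu(g)=(\dim V)^{-1}\delta_{ik}\delta_{jl}$ directly into the defining integral of $T_G(E_{ij})$; this collapses to $E_{ij}$ in one line. Either route is classical and I do not expect any real obstacle—the only subtleties are recognising that $R\otimes R^{\ast}$ is $(G\times G)$-irreducible and fixing the normalising constant.
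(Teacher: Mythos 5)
Your proof is correct, but it takes a genuinely different (though closely parallel) route from the paper's. The paper first reduces to the case $A = R(g)$ by noting that $\{R(g)\}_{g\in G}$ spans $\End(V)$ (a Burnside-type density fact), then uses only \emph{left} covariance $T_G(R(g)B) = R(g)\,T_G(B)$ together with the character projection formula $\dim(V)\int_G R(g)\,\bar\chi(g)\,d\mu(g) = \Id$ to conclude. You instead establish full two-sided covariance, observe that $\End(V)\cong V\otimes V^\ast$ is irreducible as a $(G\times G)$-module, and invoke Schur's lemma to get $T_G = c\,\id$, fixing $c=1$ by essentially the same character computation ($\int_G |\chi(g)|^2\,d\mu(g)=1$). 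The two arguments rest on neighbouring facts --- Burnside's theorem versus irreducibility of the external tensor product of irreducibles over $\CC$ --- and the normalisation step is identical in substance. Your version buys a cleaner structural conclusion ($T_G$ is a $(G\times G)$-intertwiner, hence scalar) at the price of the extra irreducibility observation; the paper's version avoids Schur's lemma on $\End(V)$ entirely by evaluating $T_G$ directly on group elements. Your matrix-unit alternative via the Schur orthogonality relations is the most elementary of the three and is equally sound; all hypotheses (compactness of $G\subseteq\U(d)$, bi-invariance of $\mu$, complex coefficients) needed for each route are available here.
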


\begin{proof}
	Since $\End(V)$ is generated as an algebra by $\{R(g)\}_{g\in G}$ (see e.g.\ \cite[Proposition 3.29]{FouHar91}), it suffices to show the statement for $A = R(g)$ with $g \in G$.
	Due to the invariance of the Haar measure, the map $T_G$ is covariant in the sense that $T_G(R(g) B) = R(g) T_G(B)$ for all $B \in \End(V)$. In particular, for $B=\Id$, we thus get $T_G(R(g)\Id)=R(g) T_G(\Id)$.
	With $\chi(g)=\tr R(g)$ the character of the representation, we have
	\begin{align}
		 T_G(\Id) = \dim(V)\,\int_G R(g) \bar\chi(g)\, d\mu(g)=\Id
	\end{align}
	from the well-known expression for projection onto a representation space in terms of the character, see e.g.~Ref.~\cite[Chapter 2.4]{FouHar91}.
 	Thus, we have established that $S_R(R(g))= R(g)$ for all $g \in G$. 
\end{proof}

Applying this lemma to unitary channels, we can derive the following expression for the orthogonal projection onto the linear hull of unital and trace-preserving maps. 
\begin{theorem}\label{thm:utpprojection}
	Let $\{ \mc U_k\}_{k=1}^N$ be a unitary $2$-design. The orthogonal projection onto the linear hull of unital and trace-preserving maps $\Lin_{\overline{\utp}}(H_d)$ is give by 
	\begin{equation}
		P_{{\overline{\utp}}}(\mc X) = \frac1N \sum_{k=1}^{N} c_{\mc U_k}(\mc X)\ \mc U_k
	\end{equation}
	with coefficients 
	\begin{align}
	c_{\mc U}(\mathcal{X}) 
	= C \Favg(\mc U, \mc X) - \frac{1}{d}\left(\frac{C}{d} - 1\right) \Tr(\mc X(\Id)) \, ,
\end{align}
where $C \coloneqq d(d+1)(d^2 -1)$. 
\end{theorem}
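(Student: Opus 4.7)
The plan is to reduce both sides of the claimed identity to the two irreducible blocks of the representation $R\colon\U(d)\to\End(H_d)$ given by $R(U)=\mc U$. With respect to the orthogonal decomposition $H_d = \CC\Id \oplus H_d^0$, where $H_d^0$ denotes the traceless Hermitian operators, $R$ splits into the trivial representation on $\CC\Id$ and the $(d^2-1)$-dimensional irreducible adjoint representation $\pi$ on $H_d^0$. Hence every $\mc U$ is block-diagonal, $\mc U = \id\oplus\pi(U)$, while a short check shows that $\Lin_{\overline{\utp}}(H_d)$ is exactly the subspace of block-diagonal elements of $\End(H_d)$ (a map $\mc X$ lies in $\Lin_{\overline{\utp}}$ iff both $\mc X(\Id)$ and $\mc X^\dagger(\Id)$ are multiples of $\Id$). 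Consequently $P_{\overline{\utp}}(\mc X)$ acts as the scalar $\Tr(\mc X(\Id))/d$ on $\CC\Id$ and as the restricted block $\mc X_{22}\in\End(H_d^0)$ on $H_d^0$, and I verify the proposed formula for these two blocks separately.

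Since $c_{\mc U_k}(\mc X)\,\mc U_k$ is polynomial of total degree two in $(U_k,\bar U_k)$, the $2$-design property replaces the average over $\{\mc U_k\}$ by the Haar integral $\int c_{\mc U}(\mc X)\,\mc U\,d\mu(U)$. For the trivial block I use that $U\ket{\psi}$ is Haar-uniform on the sphere together with $\int \ketbra{\phi}{\phi}\,d\phi = \Id/d$ to obtain $\int \Favg(\mc U,\mc X)\,d\mu(U) = \Tr(\mc X(\Id))/d^2$. Substituting into the definition of $c_{\mc U}(\mc X)$ and collecting the coefficients of $\Tr(\mc X(\Id))$ gives
\begin{equation*}
\int c_{\mc U}(\mc X)\,d\mu(U) = \Tr(\mc X(\Id))\Bigl[\tfrac{C}{d^2} - \tfrac{1}{d}\bigl(\tfrac{C}{d}-1\bigr)\Bigr] = \frac{\Tr(\mc X(\Id))}{d},
\end{equation*}
which is exactly the scalar action of $P_{\overline{\utp}}(\mc X)$ on $\CC\Id$.

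For the adjoint block I rewrite $\Favg(\mc U,\mc X) = \tfrac{d}{d+1}(\mc U,\mc X) + \tfrac{\Tr\mc X(\Id)}{d(d+1)}$ using Eq.~\eqref{eq:inner_product}, and combine it with the block form of the pairing, $(\mc U,\mc X) = \tfrac{\Tr\mc X(\Id)}{d^3} + \tfrac{1}{d^2}\Tr_{H_d^0}(\pi(U)^\dagger \mc X_{22})$, which follows from writing $\mc U^\dagger\mc X$ in block form and taking the trace on $H_d$. Every piece proportional to $\Tr\mc X(\Id)$ is then multiplied by $\int \pi(U)\,d\mu(U) = 0$ (Schur orthogonality for the non-trivial irreducible $\pi$) and vanishes. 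The only surviving contribution is
\begin{equation*}
\frac{Cd}{(d+1)d^2}\int \Tr_{H_d^0}\!\bigl(\pi(U)^\dagger\mc X_{22}\bigr)\,\pi(U)\,d\mu(U) = \frac{Cd}{(d+1)d^2(d^2-1)}\,\mc X_{22},
\end{equation*}
by Lemma~\ref{lem:irrep frame} applied to the $(d^2-1)$-dimensional irrep $\pi$. The prefactor equals $1$ precisely for the announced value $C = d(d+1)(d^2-1)$, which reproduces $\mc X_{22}$ and completes the identification with $P_{\overline{\utp}}(\mc X)$.

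The only obstacle is bookkeeping: the constant correction $-\tfrac{1}{d}\bigl(\tfrac{C}{d}-1\bigr)\Tr(\mc X(\Id))$ is forced by the requirement on the trivial block, while $C$ itself is forced by the requirement on the adjoint block, and the value $C=d(d+1)(d^2-1)$ must satisfy both simultaneously. The conceptual input—the isotypic decomposition of $R$, the reduction of the $2$-design sum to a Haar integral, and the Parseval frame property from Lemma~\ref{lem:irrep frame}—is essentially dictated by the structure of the problem.
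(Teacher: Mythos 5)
Your proof is correct and follows essentially the same route as the paper's: the same isotypic decomposition $H_d=\CC\Id\oplus \mathcal{K}_0$, the same Parseval-frame Lemma~\ref{lem:irrep frame} on the adjoint block, the vanishing of $\int\pi(U)\,d\mu(U)$ for the non-trivial irrep, and the reduction of the $2$-design average to a Haar integral. The only difference is presentational — you verify the announced coefficients block by block, whereas the paper first derives the completeness relation for $\Tr[\mc U^\dagger\mc X]$ and only afterwards converts to $\Favg$ via Eq.~\eqref{eq:inner_product_trace}.
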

\begin{proof}
	Throughout the proof, we denote the unitary channel representing the unitary $U \in U(d)$ on space of Hermitian operators $H_d$ by $\mc U: \rho \mapsto U \rho\, U\ad$.
The vector space $H_d$ is a direct sum of the space $\mathcal{K}_0$ of trace-less hermitian matrices, and of $\mathcal{K}_1=\{ z \Id \}_{z\in \CC}$.
The group of unitary channels 
acts trivially on $\mathcal{K}_1$, and irreducibly on  
$\mathcal{K}_0$.
In particular, $\mathcal{U}$ is ``block-diagonal'' $\mathcal{U}=\mathcal{U}_0\oplus 1$ with respect to this decomposition, where $\mathcal{U}_0 \in \End(\mathcal{K}_0)$ is the irreducible $(d^2-1)$-dimensional block.
More generally, the projection of a map $\mathcal{X}$ onto the linear hull of unital and trace-preserving maps $\Lin_{\overline{\utp}}(H_d)$ is of the form $\mathcal{X}_0 \oplus x_1$. The map $\mathcal{X}_0 \oplus x_1$ is trace-preserving and unital if and only if $x_1 = \Tr(\mc X(\Id/d)) = 1$. 
For the map $\mc X \in \Lin(H_d)$ we have
\begin{align}
	\Tr[ \mc U^\dagger \mc X] 
	= 
	\Tr[\mathcal{U}_0^\dagger \mathcal{X}_0] + x_1.
\end{align}
Using this formula, Lemma~\ref{lem:irrep frame} for the choice $V=\mathcal{K}_0$, and the fact that a group integral over a non-trivial irrep vanishes \footnote{More explicitly, for $X \in H_d$ we can calculate $\int_{U(d)} \mc U(X)\, d\mu(U) = E^1_{\Delta_{U(d)}}(X) = \mathbb{O} \oplus 1$ using Theorem~\ref{thm:intU}.}
, we find
\begin{align}
	&\phantom{=.}(d^2-1)\int_{U(d)} \mathcal{U} \, \tr\left[\mathcal{U}^\dagger \mathcal{X}\right]  d\mu(U) \nonumber \\ 
	&=
	(d^2-1) 
	\int_{U(d)} 
		(\mathcal{U}_0\oplus 1)
		(
		  \Tr[
		  \mathcal{U}_0^\dagger
		  \mathcal{X}_0 ]
		  +x_1
		)
	\, d\mu(U)  \nonumber \\ \nonumber
	&=
	  (d^2-1) 
	  \int_{U(d)} 
		\mathcal{U}_0
		(
		\Tr[
		\mathcal{U}_0^\dagger
		\mathcal{X}_0 ]
		+x_1
		)
		\, d\mu(U) 
		\\
	&\phantom{=.}\oplus
	  (d^2-1) 
	  \int_{U(d)} 
		(
		\Tr[
		\mathcal{U}_0^\dagger
		\mathcal{X}_0 ]
		+x_1
		)
		\, d\mu(U)  
		\nonumber \\
	&=
	\mathcal{X}_0
	\oplus
	(d^2-1)x_1. \label{eq:integrate to zero}
\end{align}
Hence, for $\mc X \in \Lin_{\overline{\utp}}(H_d)$ we obtain the completeness relation
\begin{align}\label{eq:utpproj1}
\begin{split}
	\int_{U(d)} \mathcal{U} \, 
	&\left(
	(d^2-1)
	\Tr[\mathcal{U}^\dagger \mathcal{X}] +\frac{2 - d^2}{d} \Tr[\mc X(\Id)]\right)  d\mu(U) \\ 
	 &= \mc X.
	 \end{split}
\end{align}
For $\mc X$ in the ortho-complement of $\Lin_{\overline{\utp}}(H_d)$ the left hand side of Eq.~\eqref{eq:utpproj1} vanishes. The expression, thus, defines the orthogonal projection $P_{\overline{\utp}}$ onto $\Lin_{\overline{\utp}}$. The projection can be re-expressed in terms of the \AGF. 
With the help of Eqs.~(\ref{eq:inner_product}, \ref{eq:inner_product_choi}), 
\begin{equation}
\begin{aligned}
 	\Tr[\mathcal{U}^\dagger \mc X ]
	&= \left(\mc L(\mc U),\mc L (\mc X) \right) 
	= d^2 (\mc U, \mc X) \\
	&= d(d+1) \Favg(\mc U, \mc X) - \Tr(\mc X(\Id)). \label{eq:inner_product_trace}
\end{aligned}
\end{equation}
Hence,
\begin{align}
	\label{eqn:reproduce}
	P_{\overline{\utp}} (\mc X) = \int_{U(d)} c_{\mc U}(\mc X) \, \mathcal{U}\,  d\mu(U),
\end{align}
with expansion coefficients
\begin{align*}
	c_{\mc U}(\mathcal{X}) 
	&= d(d+1)(d^2-1) \Favg(\mc U,\mc X) \\
	&- \frac{1}{d}\left((d+1)(d^2-1)-1\right)\Tr(\mc X(\Id)) \\
	&= C \Favg(\mc U, \mc X) - \frac{1}{d}\left(\frac{C}{d} - 1\right) \Tr(\mc X(\Id)).
\end{align*}

Since the integrand in Eq.~\eqref{eqn:reproduce} is linear in $U^{\otimes 2}\otimes \bar U^{\otimes 2}$, the completeness relation continues to hold if the Haar integral is replaced by the average 
\begin{align}\label{eqn:reproduce design}
	\frac1N
	\sum_{k=1}^N 
	c_{\mc U_k}(\mc X)\,
	\mathcal{U}_{k} 
	=
	P_{\overline{\utp}}(\mathcal{X}) 
\end{align}
over any unitary $2$-design $\{\mc U_k\}_{k=1}^N$.
\end{proof}

In the proof, we have used that linear hull of the unital and trace-preserving maps $\Lin_{\overline{\utp}}$ is given by the space of block diagonal matrices $\Lin(\mathcal{K}_0)\oplus \Lin(\mathcal{K}_1)$.
If $\mathcal{X}$ is not unital and trace-preserving, the image $\mc X_{\overline{\utp}}$ will thus be equal to $\mathcal{X}$, with the off-diagonal blocks set to zero. 
In particular, the two-norm deviation of a map $\mathcal{X}$ from its projection onto $\Lin_{\overline{\utp}}$ is given by
\begin{equation}
\begin{split}
\label{eqn:utpprojection}
\| \mc X -  P_{\overline{\utp}}(\mc X) \|^2  = 
 \frac{1}{d^3}&
\biggl( \| \mc X (\Id
) \|_2^2 
\\
&+ \| \mc X^\dagger (\Id
 ) \|_2^2 - \frac{2}{d} \tr \left( \mc X (\Id
) \right)^2 \biggr).
\end{split}
\end{equation}

Based on the arguments used to establish Theorem~\ref{thm:utpprojection}, we can derive the following variant, which includes a converse statement.
\begin{theorem}[{Informational completeness and unitary designs}]\label{thm:affcomb_conv}
	Let 
	$\{ \mc U_k\}_{k=1}^N$ 
	be a set of unitary channels.
	Then the following are equivalent:

	\begin{enumerate}[label={(\roman*)}]
		\item
	Every unital and trace-preserving map $\mc X$ can be 
	written 
	as an affine 
	combination 
	$\mc X = \frac1N \sum_{k=1}^N  c_k(\mc X) \mc U_k$
	of the $\mc U_k$, with 
	coefficients given by 
		$c_k(\mc X) 
	= C \Favg(\mc U_k, \mc X) - \frac Cd + 1$, where $C = d(d+1)(d^2 -1)$. 
		\item
		The set 
	$\{  U_k\}_{k=1}^N$ 
		forms a unitary $2$-design.
	\end{enumerate}
\end{theorem}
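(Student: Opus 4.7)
The plan is to specialise Theorem~\ref{thm:utpprojection}, which has already been proven and does the bulk of the work, to the affine subspace $\Lin_\utp(H_d)$. For any $\mc X\in\Lin_\utp$, unitality gives $\mc X(\Id)=\Id$, so $\Tr(\mc X(\Id))=d$ and the general coefficient $c_{\mc U}(\mc X)= C \Favg(\mc U,\mc X) - \tfrac{1}{d}(\tfrac{C}{d}-1)\Tr(\mc X(\Id))$ furnished by Theorem~\ref{thm:utpprojection} collapses precisely to $C\Favg(\mc U_k,\mc X) - \tfrac{C}{d} + 1$. Since $P_{\overline{\utp}}(\mc X)=\mc X$ whenever $\mc X\in\Lin_\utp$, the reconstruction formula follows. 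To verify that the combination is \emph{affine} I would apply both sides to $\Id$ and use $\mc U_k(\Id)=\mc X(\Id)=\Id$ to conclude $\tfrac{1}{N}\sum_k c_k(\mc X)=1$.

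\textbf{Converse direction (i)$\Rightarrow$(ii).}
First, I would linearise (i). Using Eq.~\eqref{eq:inner_product} together with $\mc U_k(\Id)=\mc X(\Id)=\Id$ one obtains $\Favg(\mc U_k,\mc X)=\tfrac{d(\mc U_k,\mc X)+1}{d+1}$, and hypothesis (i) becomes
\begin{equation*}
\mc X = d^{2}(d^{2}-1)\, S(\mc X) + (2-d^{2})\, \bar{\mc U}\qquad \forall\,\mc X\in\Lin_\utp,
\end{equation*}
where $S(\mc X)\coloneqq\tfrac{1}{N}\sum_k (\mc U_k,\mc X)\mc U_k$ and $\bar{\mc U}\coloneqq\tfrac{1}{N}\sum_k\mc U_k$. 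Subtracting this identity for two elements of $\Lin_\utp$ cancels the affine contribution $(2-d^2)\bar{\mc U}$ and yields $\mc Z = d^{2}(d^{2}-1)\, S(\mc Z)$ for every $\mc Z\in V_\utpn$. Applying Lemma~\ref{lem:irrep frame} to the irreducible component $\mc K_0$ shows that the Haar frame super-operator $S_{\mathrm{Haar}}(\mc X)\coloneqq\int(\mc U,\mc X)\mc U\,\rmd\mu(U)$ satisfies the very same identity, so $S$ and $S_{\mathrm{Haar}}$ coincide on $V_\utpn$.

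The next step is to plug in the depolarising channel $\mc I_1\in\Lin_\utp$, which acts as $0\oplus 1$ on the decomposition $H_d=\mc K_0\oplus\mc K_1$. A short computation using Eq.~\eqref{eq:inner_product} gives $(\mc U_k,\mc I_1)=1/d^2$, whence $S(\mc I_1)=\bar{\mc U}/d^2$, and the identity reduces to $\mc I_1=\bar{\mc U}$. This is the $1$-design condition and, combined with the previous step, yields $S=S_{\mathrm{Haar}}$ on $\Lin_{\overline{\utp}}=V_\utpn\oplus\RR\mc I_1$. Both operators manifestly vanish on the orthogonal complement $\Lin_{\overline{\utp}}^{\perp}$ because every $\mc U_k$ lies in $\Lin_{\overline{\utp}}$, so in fact $S=S_{\mathrm{Haar}}$ on the whole of $\Lin(H_d)$.

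\textbf{Conclusion and the main obstacle.}
The final step is the observation that $S=S_{\mathrm{Haar}}$ is equivalent to $\{U_k\}$ forming a unitary $2$-design: in the Liouville picture $\mc U_k\leftrightarrow U_k\otimes\bar U_k$, every matrix entry of $S$ is a polynomial of bidegree $(2,2)$ in the entries of $U_k$ and $\bar U_k$, and as the free indices vary these entries span the full space of $(2,2)$-homogeneous polynomials, which is precisely the moment data defining a $2$-design. The most delicate part of the plan, and thus the main technical obstacle, is the converse book-keeping: hypothesis (i) tests the formula only on the \emph{affine} set $\Lin_\utp$, so one must cleanly separate the two qualitatively different contributions $d^{2}(d^{2}-1) S(\mc X)$ and $(2-d^{2})\bar{\mc U}$, establish the resulting operator identity first on the linear subspace $V_\utpn$, then extend it through the single point $\mc I_1$ to the entire linear hull $\Lin_{\overline{\utp}}$, and finally argue by orthogonality that nothing is missed outside $\Lin_{\overline{\utp}}$.
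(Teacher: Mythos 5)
Your proof is correct. The forward direction is essentially the paper's: both specialise Theorem~\ref{thm:utpprojection} to unital trace-preserving maps and then verify affineness of the expansion, you by evaluating both sides on $\Id$, the paper by reading off $\frac1N\sum_k c_k(\mc X)=\Tr[\mc X(\Id/d)]=1$ from Eq.~\eqref{eq:utpproj1}. For the converse the paper takes a shorter, more computational route: after extracting the $1$-design condition from the depolarising channel exactly as you do, it applies hypothesis (i) to the gates $\mc U_{k'}$ themselves (which lie in $\Lin_{\text{\rm u,tp}}$) and traces against $\mc U_{k'}^\dagger$, which collapses the operator identity to the single scalar equation $\frac{1}{N^2}\sum_{k,k'}|\Tr(U_{k'}^\dagger U_k)|^4=2$, i.e.\ the frame-potential criterion of Ref.~\cite{GrossAudenaertEisert:2007}. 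You instead upgrade the hypothesis to the full operator identity $S=S_{\mathrm{Haar}}$ on all of $\Lin(H_d)$ --- first on the trace- and identity-annihilating block by subtracting two instances of the hypothesis, then on the span of the depolarising channel, then on the orthocomplement --- and read off the $2$-design property directly from the matrix entries of $S$, which exhaust the balanced degree-$(2,2)$ moments. Your route is self-contained (it does not invoke the frame-potential characterisation, whose own proof is essentially the positivity argument you re-derive in operator form), at the cost of the extra book-keeping across the block decomposition that you correctly flag as the delicate step; the paper's route is shorter but outsources the final equivalence to the cited criterion. Both arguments hinge on the same two ingredients: Schur's lemma on the irreducible block $\mc K_0$ and the evaluation at the depolarising channel.
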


\begin{proof}
To show that (ii) implies (i) we apply Theorem~\ref{thm:utpprojection}. From Eq.~\eqref{eq:utpproj1} we can read of that 
\begin{equation}
	\frac{1}{N}\sum_{k=1}^N c_k(\mc X) = \Tr[\mc X(\Id/d)] = 1. 
\end{equation}
Thus, the linear expansion of $\mc X$ in terms of the unitary $2$-design is affine.

It remains to establish the converse statement.  
Let $\{ \mc U_k \}_{k=1}^N$ be a set of unitary channels fulfilling 
\begin{align}\label{eq:ukcondition}
	\frac{1}{N}\sum_{k=1}^N \mc U_k \left((d^2-1) \Tr[\mc U_k\ad \mc X] + 2 - d^2 \right) = \mc X
\end{align}
for all $\mc X \in \Lin_{\utp}(H_d)$. 

A handy criterion for verifying that $\{ \mc U_k\}_{k=1}^N$ is a unitary $2$-design can be formulated in terms of its frame potential
\begin{equation}
	P = \frac1{N^2}\sum_{k,k'=1}^N|\Tr(U_k\ad U_{k'})|^4, 
\end{equation}
where again $U_k$ is the unitary matrix defining the unitary channel $\mc U_k$. 
A set of unitary gates is a unitary $2$-design if and only if $P=2$ \cite[Theorem~2]{GrossAudenaertEisert:2007}. In fact, Eq.~(\ref{eq:ukcondition}) allows to calculate the frame potential as follows.

Inserting $\mc X = 0 \oplus 1$ (the depolarising channel), we find that 
\begin{equation}\label{eq:depolarizing_average}
	\frac{1}{N} \sum_{k=1}^N \mc U_k = 0 \oplus 1. 
\end{equation}
Note that this implies that the set $\{ \mc U_k \}_{k=1}^N$ constitutes a unitary $1$-design. Therefore, Eq.~\eqref{eq:ukcondition} takes the form 
\begin{align}\label{eq:simplerukcondition}
	\frac{1}{N}\sum_{k=1}^N \mc U_k (d^2-1) \Tr[\mc U_k\ad \mc X] + 0 \oplus (2 -d^2) = \mc X
\end{align}
for all $\mc X \in \Lin_{\utp}(H_d)$. Let the left hand side of Eq.~\eqref{eq:simplerukcondition} define a linear operator $F: \mc X \mapsto F(\mc X)$. Then Eq.~\eqref{eq:simplerukcondition} implies 
\begin{align}
	\frac{1}{N}&\sum_{k' = 1}^N \Tr[\mc U_{k'}\ad F(\mc U_{k'})] \\
	&= \frac{d^2-1}{N^2} \sum_{k,k' = 1}^N |\Tr(U_{k'}\ad U_k)|^4 +  2 - d^2 \\
	&= d^2 
\end{align}
and hence
\begin{equation}
 	\frac{1}{N^2}\sum_{k,k' =1}^N |\Tr(U_{k'}\ad U_k)|^4 = 2.
\end{equation}
This completes the proof. 
\end{proof}

Note that for quantum channels, the affine expansion is \emph{almost} convex in the sense that
$c_k (\mathcal{X}) \geq  {2-d^2}/{N} \geq -{1}/{d^2}$.

\subsection{A new interpretation for the unitarity}\label{sec:unitarity}
%
In this section, we provide a proof for Theorem~\ref{thm:unitarity} and elaborate on its implications. The proof is most naturally phrased by decomposing the linear hull of unital and trace preserving maps $\Lin_{\overline{\utp}}$ into endomorphism acting on the spaces that carry irreducible representations of the unitary channels. In the proof of Theorem~\ref{thm:utpprojection} we have explicitly seen that the projection of any map $\mathcal{X}$ onto $\Lin_{\overline{\utp}}$ has the block-diagonal structure:
\begin{align*}
P_{\overline{\utp}} \left( \mathcal{X} \right) = \mathcal{X}_0 \oplus x_1, 
\end{align*}
where $x_1 = \Tr \left( \mathcal{X} \left( \Id/d \right) \right)$.
For channels that are already unital and trace preserving, this projection acts as the identity and $x_1 = 1$. Particular examples of this class are unitary channels $\mathcal{U} = \mathcal{U}_0 \oplus 1$ and the depolarizing channel $\mathcal{D} = \mathbb{O} \oplus 1$ acting as $\mathcal{D}(X) = \frac{\Tr (X)}{d} \Id$ on $X \in H_d$. 
Unitary channels are also special in the sense that they are normalised with respect to the  inner products defined in Eqs.~\eqref{eq:inner_product}, \eqref{eq:inner_product_choi} and \eqref{eq:inner_product_trace}:
\begin{align*}
d^2  = \tr \left[ \mathcal{U}^\dagger \mathcal{U} \right] 
= \left( \mathcal{L}(\mathcal{U}), \mathcal{L}(\mathcal{U}) \right)
= d^2 \left( \mathcal{U}, \mathcal{U} \right).
\end{align*}
In fact, unitary channels are the only maps with this property (provided that we also adhere to our convention of normalizing maps with respect to the trace-norm of the Choi matrix).
Combining this feature with the ``block diagonal'' structure of unitary channels yields
\begin{align*}
d^2 = \tr \left[ \mathcal{U}^\dagger \mathcal{U} \right]
= \tr \left[ \mathcal{U}_0^\dagger \oplus 1\; \mathcal{U}_0 \oplus 1 \right]
= 1 + \tr \left[ \mathcal{U}_0^\dagger \mathcal{U}_0 \right].
\end{align*}
This computation implies that a map $\mathcal{X}$ is unitary if and only if
\begin{equation*}
u (\mathcal{X}): = \frac{ \tr \left[ \mathcal{X}_0^\dagger \mathcal{X}_0 \right]}{d^2-1} \label{eq:unitarity2}
\end{equation*}
equals one.
Otherwise the \emph{unitarity} $u (\mathcal{X}) \in [0,1]$ is strictly smaller. 
For instance, $u (\mathcal{D}) = 0$ for the depolarizing channel.
This definition of the unitarity is equivalent to the one presented in Eq.~\eqref{eq:unitarity}, see \cite[Proposition~1]{WalGraHar15}. 
The argument outlined above succinctly summarises the main motivation for this figure of merit: it captures the coherence of a noise channel $\mathcal{X}$.

Equipped with this characterisation of the unitarity, we can now give the proof for the interpretation of the unitarity as the variance of the \AGF\ with respect to a unitary $2$-design.

\begin{proof}[Proof of Theorem~\ref{thm:unitarity}]
The unitarity $u (\mathcal{X})$ may be expressed as
\begin{align} 
\frac{\tr \left[ \mathcal{X}_0^\dagger \mathcal{X}_0 \right]}{d^2-1}
= \frac{\tr \left[ \left(\mathcal{X}_0 \oplus (d^2-1) x_1\right)^\dagger \mathcal{X} \right]}{d^2-1} - x_1^2.
\label{eq:unitarity_aux1}
\end{align}
Eq.~\eqref{eq:depolarizing_average} allows us to rewrite $x_1$ as 
an average over a unitary 1-design $\{\mc U_k\}_{k=1}^N$:
\begin{equation*}
x_1 = \tr \left[\left(\mathbb{O} \oplus 1\right)^\dagger \mathcal{X} \right]
= \frac{1}{N} \sum_{k=1}^N \tr \left[  \mathcal{U}_k^\dagger \mathcal{X} \right] = \mathbb{E}  \tr \left[ \mathcal{U}^\dagger \mathcal{X} \right] 
\end{equation*}
Let us now assume that the set $\left\{ \mathcal{U}_k \right\}_{k=1}^N$ is also a 2-design. Then, Eq.~\eqref{eq:integrate to zero} implies
\begin{equation*}
\frac{\left( \mathcal{X}_0 \oplus (d^2-1) x_1\right)^\dagger}{d^2-1}
= \sum_{k=1}^n \mathcal{U}_k^\dagger  \overline{\tr\left[ \mathcal{U}_k^\dagger \mathcal{X} \right]} = \mathbb{E}\;  \mathcal{U}^\dagger \tr \left[ \mathcal{X}^\dagger \mathcal{U} \right] 
\end{equation*}
Inserting both expressions into Eq.~\eqref{eq:unitarity_aux1} yields
\begin{align*}
u (\mathcal{X})
=& \tr \left[ \mathcal{X}^\dagger \mathbb{E}\;  \mathcal{U} \tr \left[ \mathcal{U}^\dagger \mathcal{X} \right] \right] - \left( \mathbb{E}  \tr \left[ \mathcal{X}^\dagger \mathcal{U} \right] \right)^2 \\
=& \mathbb{E} \; \left| \tr \left[ \mathcal{X}^\dagger \mathcal{U} \right] \right|^2  - \left( \mathbb{E} \tr \left[ \mathcal{X}^\dagger \mathcal{U} \right] \right)^2 \\
=& \mathrm{Var} \left[ \tr \left[ \mathcal{X}^\dagger \mathcal{U} \right] \right],
\end{align*}
where we have used linearity of the expectation value and the fact that the random variable $\tr \left[ \mathcal{X}^\dagger \mathcal{U} \right]$ is real-valued.
Finally, we employ the relation between $\tr \left[ \mathcal{U}^\dagger \mathcal{X} \right]$ and $\Favg(\mathcal{U},\mathcal{X})$ presented in Eq.~\eqref{eq:inner_product_trace} to conclude
\begin{align*}
u (\mathcal{X})
=& \mathrm{Var} \left[ \tr \left[ \mathcal{U}^\dagger \mathcal{X} \right] \right] \\
=& \mathrm{Var} \left[ d(d+1) \Favg(\mathcal{U},\mathcal{X}) - \tr ( \mathcal{X}(\Id)) \right] \\
=& \left( d (d+1) \right)^2 \mathrm{Var} \left[ \Favg(\mathcal{U},\mathcal{X} ) \right],
\end{align*}
because variances are invariant under constant shifts and depend quadratically on scaling factors. This establishes Theorem~\ref{thm:unitarity}.
\end{proof}

\begin{figure*}[t]
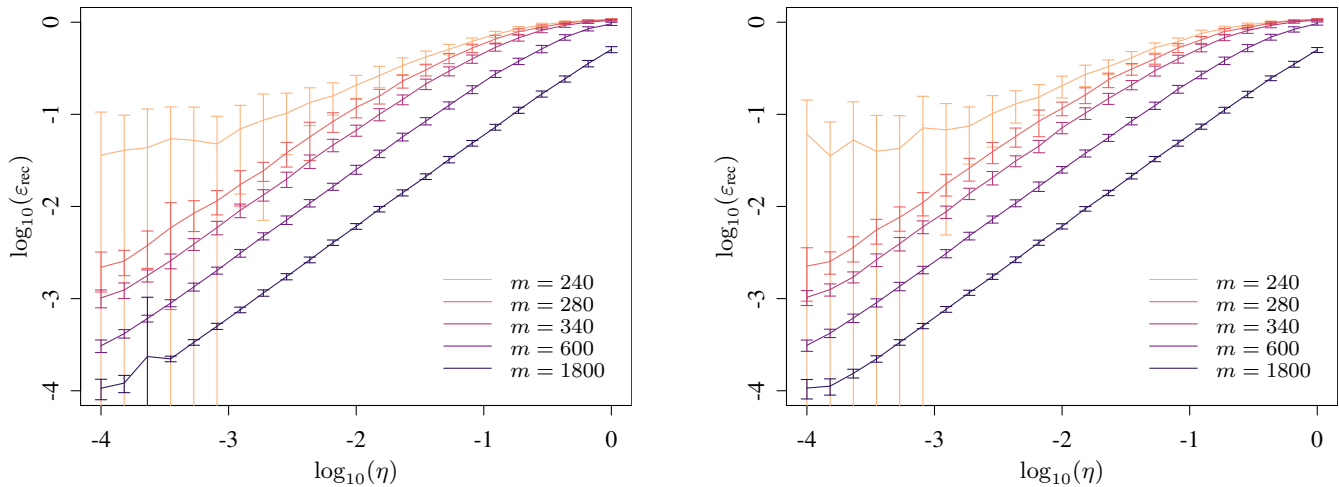

\def\mywidth{.5}
\subfloat
	{%
	\input{Deltaeta_3qubits_SDPT3.fig}
	}
\hfill
\subfloat
	{%
	\input{Deltaeta_3qubits_Haar_SDPT3.fig}
	}
	\caption{\label{fig:numerical_reconstruction_eta}
		Comparison of the reconstruction \eqref{eq:algorithm} from \AGFs\ \eqref{eq:Fmeasurements} with random Clifford unitaries (left) and Haar random unitaries (right). 
		The plots show the dependence of the observed average reconstruction error 
		$\recerror \coloneqq \norm{\mc Z^\sharp - \mc X}$, 
		on the noise strength 
		$\eta\coloneqq  \norm{\epsilon}_{\ell_2}$ for $3$ qubits and 
		different numbers of \AGFs\ $m$. 
		The error bars denote the observed standard deviation. 
		The averages are taken over $100$ samples of random {i.i.d.} measurements and channels (non-uniform). 
		The Matlab code and data used to create these plots can be found on GitHub \cite{our_git_repo}.
	}
	
\end{figure*}

We conclude this section with a more speculative note regarding the possible applications for Theorem~\ref{thm:unitarity}. 
A direct estimation procedure for the unitarity has been proposed in Ref.\ \cite{WalGraHar15}. Inspired by randomised benchmarking, this procedure is robust towards SPAM errors, but has other drawbacks: Estimating the purity of outcome states directly is challenging, because the operator square function is not linear. Although Wallman et al.\ have found ways around this issue, their approaches are not yet completely satisfactory. 

We propose an alternative approach based on Theorem~\ref{thm:unitarity}. 
It might be conceivable that techniques like importance sampling could be employed to efficiently estimate this variance -- and thus the unitarity -- from ``few'' samples. 
The fourth moment bounds computed here could potentially serve as bounds on the ``variance of this variance'' and help control the convergence.

\subsection{Numerical demonstrations}
\label{sec:numerics}
We emphasise that the main contributions of this work are of theoretical nature (we prove several Theorems).
Nonetheless, we would also like to demonstrate the practical feasibility of our reconstruction procedure \eqref{eq:algorithm} and discuss some of its subtleties.
The Matlab code for our numerical experiments can be found on GitHub \cite{our_git_repo}. 

Let $\mc X$ denote a unitary quantum channel. 
Given measurements $f_i$ from Eq.~\eqref{eq:ChoiMeasurements} with Clifford unitaries $C_i$ we approximately recover $\mc X$ using the semi-definite program (SDP) \eqref{eq:Choi-algorithm} with $q=2$. 
In the numerical experiments we draw a three-qubit unitary channel $\mc X$ uniformly at random, 
the $m$ Clifford unitaries for the measurements uniformly at random, and the noise $\epsilon \in \RR^m$ uniformly from a sphere with radius $\eta$, i.e., $\lTwoNorm{\epsilon}= \eta$.  

Then we solve the SDP using Matlab, CVX and SDPT3. The resulting average reconstruction error is plotted against the number of measurement settings $m$ and the noise strength $\eta$ in Figure~\ref{fig:numerical_reconstruction} and Figure~\ref{fig:numerical_reconstruction_eta} (left), respectively. As a comparison we run simulations for Haar random unitary measurements, see Figure~\ref{fig:numerical_reconstruction_eta} (right). 
We find that the measurements based on random Clifford unitaries perform equally well as measurements based on Haar random unitaries. 
This is in agreement with a similar observation made for the noiseless case by two of the authors in 
Ref.\ \cite{KimLiu16}.

We observed that sometimes the SDP solver cannot find a solution. 
We also tested the use of  Mosek instead of SDPT3. We find that the Mosek solver is faster but has more problems finding the correct solution. For the cases where the SDP solver does not exit with status ``solved'' we relax the machine precision on the equality constraints in the SDP \eqref{eq:Choi-algorithm} and change the measurement noise by a machine precision amount. 
More explicitly, for an integer $j\geq 0 $ we try to solve 
\begin{equation}
\begin{split}
	\operatorname*{minimise}_Z \quad &\norm{\mathcal{A}(Z)- f}_{\ell_2} 
	\\
	\operatorname{subject\ to}\quad &Z\geq 0, \\
									& \fnormB{\Tr_1(Z) - \frac\1 d} \leq 10^j\, \eps, \\
									& \fnormB{\Tr_2(Z) - \frac\1 d} \leq 10^j\, \eps
\end{split}
\end{equation}
where $\eps$ denotes the machine precision and $\Tr_1$ and $\Tr_2$ the partial traces on $\Lin(\CC^d \otimes \CC^d)$. 
We successively try to solve these SDPs for $j=0,1,2,\dots, 6$. 
Moreover, we change the measurement noise $\epsilon$ to $\epsilon' + \zeta$ in each of these trials, where each $\zeta_i = \eps\cdot g_i$ with $g_i \sim \mc N(0,1)$ is an independent normally distributed random number. 
For the Clifford type measurement (Figures~\ref{fig:numerical_reconstruction} and~\ref{fig:numerical_reconstruction_eta} left) a total of 24\,400 channels were reconstructed and $j$ was increased 1\,865 many times in total. 
For the Haar random measurement unitaries (Figure~\ref{fig:numerical_reconstruction_eta} left) a total of 12\,900 channels were reconstructed and $j$ was increased 950 times. 
So, we observed that with a probability of $\sim 7.5\%$ the SDP solver cannot solve the given SDP with machine precision constraints. 

Some of the error bars in the plots in Figures~\ref{fig:numerical_reconstruction} and~\ref{fig:numerical_reconstruction_eta} might seem quite large, which we would like to comment on. 
Note that in compressed sensing it is typical to have a phase-transition from having no recovery for too small numbers of measurements $m$ to having a recovery with very high probability once $m$ exceeds a certain threshold. 
This phase transition region becomes smeared out if the noise strength $\lTwoNorm{\epsilon}$ is increased. 
For those $m$ in the phase transition region the reconstruction errors are expected to fluctuate a lot, which we observe in the plots. 

The slope of the linear part of plots $\recerror(m)$ in Figure~\ref{fig:numerical_reconstruction} is roughly $\delta \recerror(m)/ \delta m \approx -1.3$. 
This means that the reconstruction error scales like 
$\recerror(m) \sim m^{-1.3}$, which is better than Theorem~\ref{thm:recoveryguarantee} suggests. 
The reason for this discrepancy is that the theorem also bounds systematic errors and even adversarial noise $\epsilon$ whereas in the numerics we have drawn $\epsilon_i$ uniformly from a sphere, i.e., $\epsilon_i$ are {i.i.d.} up to a rescaling.


\section*{Acknowledgements}
We thank Steven T.~Flammia, Christian Krumnow, Robin Harper, and Michae{\l} Horodecki for inspiring discussions and helpful comments.
RK is particularly grateful for a ``peaceful disagreement'' (we borrow this term from Ref.~\cite{mateus_blog}) with
Mateus Ara\'ujo that ultimately led to our current understanding of the relation between Clifford gates and tight frames.
We used code from references \cite{qetlab,randCliff,qip} to run our simulations.
The work of IR, RK, and JE was funded by 
AQuS, DFG (SPP1798 CoSIP, EI 519/9-1, EI 519/7-1, EI 519/14-1), the ERC (TAQ)
and the Templeton Foundation.
Parts of YKL's work were carried out in the context of the AFOSR MURI project ``Optimal Measurements for Scalable Quantum Technologies.''  
Contributions to this work by NIST, an agency of the US government, are not subject to US copyright. 
Any mention of commercial products does not indicate endorsement by NIST. 
DG's work has been supported by the Excellence Initiative of the German Federal and State Governments (ZUK 81), the ARO under contract W911NF-14-1-0098 (Quantum Characterization, Verification, and Validation),
Universities Australia and DAAD's Joint Research Co-operation Scheme (using funds provided by the German Federal Ministry of Education and Research),
and the DFG (SPP1798 CoSIP, project B01 of CRC 183).
The work of MK was funded by the National Science Centre, Poland within the project Polonez (2015/19/P/ST2/03001) which has received funding from the European Union’s Horizon 2020 research and innovation programme under the Marie Sk{\l}odowska-Curie grant agreement No 665778.

\bibliographystyle{apsrev4-1}
%


\end{document}